\def \be {\begin{equation}}
\def \ee {\end{equation}}
\def \tr {\\}
\def \nn {\nonumber}
\def \la {\langle}
\def \ra {\rangle}
\def \L {\mathcal{L}_{\omega}}
\def \con {\nabla_\omega}
\def \I {\mathbb{I}}
\def \w {\,\wedge\,}
\def \C {\mathsf{C}}
\def \tC {\widetilde{\mathsf{C}}}
\def \reg {\text{reg}}
\def \M {\mathcal{M}}
\def \PT {\mathsf{PT}}
\DeclareMathOperator{\sgn}{sgn}
\newcommand{\overbar}[1]{\mkern 1.5mu\overline{\mkern-1.5mu#1\mkern-1.5mu}\mkern 1.5mu}
\newtheorem{theorem}{Theorem}[section]
\newtheorem{claim}{Claim}[section]
\newtheorem{lemma}{Lemma}[section]
\newtheorem*{remark}{Remark}
\newtheorem{definition}{Definition}[section]
\newtheorem{example}{Example}[section]
\newtheorem*{acknowledgements}{Acknowledgements}
\newtheorem*{outline}{Outline}
\title{Combinatorics {\itshape\LARGE and} Topology\\ {\itshape\LARGE of} Kawai--Lewellen--Tye Relations}
\author{Sebastian Mizera}
\affiliation{Perimeter Institute for Theoretical Physics, Waterloo, ON N2L 2Y5, Canada}
\affiliation{Department of Physics \& Astronomy, University of Waterloo, Waterloo, ON N2L 3G1, Canada}
\emailAdd{smizera@pitp.ca}
\abstract{We revisit the relations between open and closed string scattering amplitudes discovered by Kawai, Lewellen, and Tye (KLT). We show that they emerge from the underlying algebro-topological identities known as the \emph{twisted period relations}. In order to do so, we formulate tree-level string theory amplitudes in the language of \emph{twisted de Rham theory}. There, open string amplitudes are understood as pairings between \emph{twisted cycles} and \emph{cocycles}. Similarly, closed string amplitudes are given as a pairing between two twisted cocycles. Finally, objects relating the two types of string amplitudes are the $\alpha'$-corrected bi-adjoint scalar amplitudes recently defined by the author \cite{Mizera:2016jhj}. We show that they naturally arise as \emph{intersection numbers} of twisted cycles. In this work we focus on the combinatorial and topological description of twisted cycles relevant for string theory amplitudes. In this setting, each twisted cycle is a polytope, known in combinatorics as the \emph{associahedron}, together with an additional structure encoding monodromy properties of string integrals. In fact, this additional structure is given by higher-dimensional generalizations of the Pochhammer contour. An open string amplitude is then computed as an integral of a logarithmic form over an associahedron. We show that the inverse of the KLT kernel can be calculated from the knowledge of how pairs of associahedra intersect one another in the moduli space. In the field theory limit, contributions from these intersections localize to vertices of the associahedra, giving rise to the bi-adjoint scalar partial amplitudes.}
\begin{document}

\maketitle
\setcounter{page}{2}

\vfill
\begin{acknowledgements}
We thank Freddy Cachazo for insightful comments on this work. We are grateful to Oliver Schlotterer for bringing the paper \cite{Mimachi2004} to our attention, which has inspired this line of research. We also thank Paolo Benincasa, Lauren Williams, and Karen Yeats for useful discussions. This research was supported in part by Perimeter Institute for Theoretical Physics. Research at Perimeter Institute is supported by the Government of Canada through the Department of Innovation, Science and Economic Development Canada and by the Province of Ontario through the Ministry of Research, Innovation and Science.
\end{acknowledgements}

\pagebreak
\section{Introduction}

\noindent\textsc{Recent years have seen} a vast improvement in our understanding of quantum field theories through the study of scattering amplitudes \cite{Elvang:2013cua}. Such advancements were often made possible by considering a generalization of ordinary field theories into string theories. The main advantage of this approach is that strings---as extended objects---provide a way of smoothing out interactions between the scattering states. More precisely, the moduli space of a string worldsheet continuously connects its different factorization channels. As a result, a sum over discrete objects---such as Feynman \cite{PhysRev.76.769} or on-shell \cite{ArkaniHamed:2012nw} diagrams---in field theory is replaced by an integral over a continuous worldsheet in string theory. In the infinite tension limit, where strings become point-like, this integral localizes to disconnected corners of the moduli space, which give rise to the field theory amplitudes. In this way, thinking of field theory amplitudes as a limit of the string theory ones provides a way of unifying all factorization channels under a single object.

The prime example of usefulness of string theory in the study of field theory amplitudes are the Kawai--Lewellen--Tye (KLT) relations discovered in 1985 \cite{Kawai:1985xq}. They give a way of writing the amplitudes for scattering of closed strings entirely in terms of a quadratic combination of open string amplitudes. In the field theory limit, where closed strings reduce to gravitons---particle excitations of General Relativity---and open strings reduce to gluons---excitations of the Yang--Mills theory---KLT relations give a connection between graviton and gluon scattering amplitudes. Such a relationship not only hints at a fundamental interplay between the two types of theories, but also provides enormous simplifications for practical calculations, both in string and field theory.

KLT relations have been most thoroughly studied in the field theory limit. In its modern form found by Cachazo, He, and Yuan (CHY) they read \cite{Cachazo:2013iea}:
\be\label{intro-KLT-FT}
\mathcal{A}^{\text{GR}} \;=\; \sum_{\beta, \gamma}\; \mathcal{A}^{\text{YM}}(\beta)\; m^{-1}(\beta | \gamma)\; \mathcal{A}^{\text{YM}}(\gamma).
\ee
Here, $\mathcal{A}^{\text{GR}}$ is an $n$-point graviton amplitude, while $\mathcal{A}^{\text{YM}}(\beta)$ is an $n$-point gluon partial amplitude with ordering $\beta$. The sum proceeds over two sets of $(n-3)!$ permutations $\beta$ and $\gamma$ forming a basis for the Yang--Mills amplitudes. The object $m(\beta | \gamma)$ is a double-partial amplitude of a bi-adjoint scalar theory \cite{BjerrumBohr:2012mg,Cachazo:2013iea}. It is convenient to think of the relation \eqref{intro-KLT-FT} as a matrix product of a transposed vector, inverse of a matrix, and another vector, where rows and columns are labelled by permutations.

It was not always clear that coefficients of the KLT expansion can be written in the form \eqref{intro-KLT-FT} as the inverse of a matrix. In their original work, Kawai, Lewellen, and Tye used contour deformation arguments to arrive at these coefficients as coming from monodromy factors around vertex operators on the boundary of a worldsheet \cite{Kawai:1985xq}. They evaluated explicit form of the quadratic relations for low-point examples. A closed-form expression for the KLT relations to arbitrary number of particles in field theory was later given in Appendix~A of \cite{Bern:1998sv} by Bern, Dixon, Perelstein, and Rozowsky. Properties of this expansion were systematically studied and proven in a series of papers \cite{BjerrumBohr:2010ta,BjerrumBohr:2010zb,BjerrumBohr:2010yc,BjerrumBohr:2010hn} by Bjerrum-Bohr, Damgaard, Feng, S{\o}ndergaard, and Vanhove, who also generalized the allowed bases of permutations to a larger set. They introduced the matrix $S[\beta|\gamma]$ called a \emph{KLT kernel}, which allows to write the KLT relations as a matrix product. Finally, Cachazo, He, and Yuan recognized \cite{Cachazo:2013iea} that the KLT kernel can be understood as the inverse matrix of bi-adjoint scalar amplitudes, i.e., $S[\beta | \gamma] = m^{-1}(\beta | \gamma)$, ultimately leading to the form given in \eqref{intro-KLT-FT}. This also allowed to construct the kernel from the most general sets of permutations labelling the columns and rows of $m(\beta | \gamma)$, so that coefficients of the KLT expansion are not necessarily polynomials in the kinematic invariants.

At this point one could ask: \emph{Where do KLT relations come from?} It turns out that a fruitful path to consider is to go back to the string theory case, where these relations were first conceived. It was proposed by the author \cite{Mizera:2016jhj} that the full string theory KLT relations can be rewritten in a form analogous to \eqref{intro-KLT-FT} as follows:
\be\label{intro-KLT-ST}
\mathcal{A}^{\text{closed}} \;=\; \sum_{\beta, \gamma}\; \mathcal{A}^{\text{open}}(\beta)\; m_{\alpha'}^{-1}(\beta | \gamma)\; \mathcal{A}^{\text{open}}(\gamma).
\ee
Here, $\mathcal{A}^{\text{closed}}$ and $\mathcal{A}^{\text{open}}(\beta)$ are the $n$-point closed and open string amplitudes respectively. The role of the string theory KLT kernel is played by the inverse of a matrix $m_{\alpha'}(\beta | \gamma)$, which is constructed out of the bi-adjoint scalar amplitudes with $\alpha'$ corrections. Recall that $\alpha'$ is a parameter inversely proportional to the string tension, such that $\alpha' \to 0$ corresponds to the field theory limit. In this way, \eqref{intro-KLT-ST} is a direct analogue of \eqref{intro-KLT-FT}, where every piece of the puzzle receives string corrections. By evaluating explicit examples of $m_{\alpha'}(\beta | \gamma)$, which from now on we will refer to as the \emph{inverse KLT kernel}, we found that they have a surprisingly simple structure, giving rise to compact expressions in terms of trigonometric functions. Moreover, they can be calculated using Feynman-like diagrammatic rules \cite{Mizera:2016jhj}, hinting at an underlying combinatorial underpinnings. In this work we show that string theory KLT relations in the form \eqref{intro-KLT-ST} are in fact a result of a deep connection between string theory amplitudes, algebraic topology, and combinatorics.

Practically at the same time as the initial work on the KLT relations, on the other side of the globe, mathematicians Aomoto, Cho, Kita, Matsumoto, Mimachi, Yoshida, and collaborators were developing a seemingly unrelated theory of hypergeometric functions \cite{aomoto2011theory,yoshida2013hypergeometric}. It eventually led to the formulation of \emph{twisted de Rham theory}, which is a generalization of the conventional de Rham theory to integrals of multi-valued functions \cite{aomoto2011theory}. Let us first intuitively explain its key ingredients, leaving precise definitions for later sections. A twisted homology group $H_m(X,\mathcal{L}_\omega)$ on some manifold $X$ is a space of \emph{twisted cycles}, which are regions of $X$ together with an additional information about branches of a multi-valued function. Similarly, a twisted cohomology group $H^m(X,\nabla_{\omega})$ is a space of \emph{twisted cocycles}, which are differential forms on $X$ satisfying certain conditions. A pairing between a twisted cycle and a cocycle is then simply an integral of a differential form over a given region of $X$ which is sensitive to the branch structure of the integrand. \emph{Twist} measures multi-valuedness of the integrand.

One can also define a natural set of a \emph{dual} twisted homology $H_m(X,\mathcal{L}^\vee_\omega)$ and a \emph{dual} twisted cohomology $H^m(X,\nabla_{\omega}^\vee)$. For the purpose of this work, the duality is roughly speaking given by complex conjugation. One can define a pairing between these two dual spaces too, giving rise to another integral of a multi-valued function. Having defined two different pairs of twisted homologies and cohomologies, we would like to calculate invariants between them as well. As it turns out, it is possible to pair two twisted cycles belong to a twisted homology and its dual. The resulting object is called an \emph{intersection number of twisted cycles} \cite{MANA:MANA19941660122,MANA:MANA19941680111,MANA:MANA173,MANA:MANA200310105,Matsumoto-Yoshida-progress}. It is computed from the information of how these cycles intersect one another in $X$, as well as their associated branch structure. Similarly, one can also define an \emph{intersection number of twisted cocycles} \cite{cho1995}. What is more, in 1994 Cho and Matsumoto found identities---known as the \emph{twisted period relations}---between pairings computed from different twisted homologies and cohomologies described above \cite{cho1995}.

In this work we show that Kawai--Lewellen--Tye relations are a consequence of twisted period relations. In order to do so, we first formulate string theory tree-level amplitudes in the language of twisted de Rham theory. Open string partial amplitudes $\mathcal{A}^{\text{open}}(\beta)$ are given as pairings between twisted cycles and twisted cocycles, while closed string amplitudes $\mathcal{A}^{\text{closed}}$ come from intersection numbers of twisted cocycles. Finally, inverse of the KLT kernel $m_{\alpha'}(\beta | \gamma)$ is calculated from intersection numbers of twisted cycles. We can schematically summarize these pairings in the following diagram:
\be\label{intro-diagram}
\begin{tikzcd}[row sep = 6em, column sep = 10em]
	H^m(X, \nabla_\omega) \arrow[leftrightarrow]{r}{\displaystyle \mathcal{A}^{\text{closed}}} \arrow[leftrightarrow]{d}[swap]{\displaystyle \mathcal{A}^{\text{open}}(\beta)}
	& H^m(X,\nabla^\vee_\omega) \arrow[leftrightarrow]{d}{\;\displaystyle \mathcal{A}^{\text{open}}(\gamma)} \\
	H_m(X,{\cal L}_\omega) \arrow[leftrightarrow]{r}[swap]{\displaystyle m_{\alpha'}(\beta | \gamma)}
	& H_m(X,{\cal L}^\vee_\omega)
\end{tikzcd}
\ee
Twisted period relations for the above pairings become KLT relations in exactly the same form as \eqref{intro-KLT-ST}. We give a proof of this statement in Section~\ref{sec-klt-as-twisted-period-relations}, where we also define bases of twisted cycles and cocycles relevant for string amplitudes.

These twisted cycles and cocycles turn out to have interesting combinatorial properties. It is known that an $n$-point tree-level open string partial amplitude is given by an integral of a differential form over a simplex $\Delta_{n-3}$, belonging to the moduli space of genus-zero Riemann surfaces with $n$ punctures \cite{green1988superstring}. However, in order to resolve degenerate points close to the vertices of the simplex, one considers a \emph{blowup} of the moduli space, $\pi^{-1}(\mathcal{M}_{0,n}) = \widetilde{\mathcal{M}}_{0,n}$ \cite{Deligne1969,MathScand11642,MathScand12001,MathScand12002,DeConcini1995}. On this space, the simplex becomes a different polytope known as the \emph{associahedron}, $K_{n-1}$ \cite{Devadoss98tessellationsof}. An example of this procedure is given below: 
\be
\includegraphics{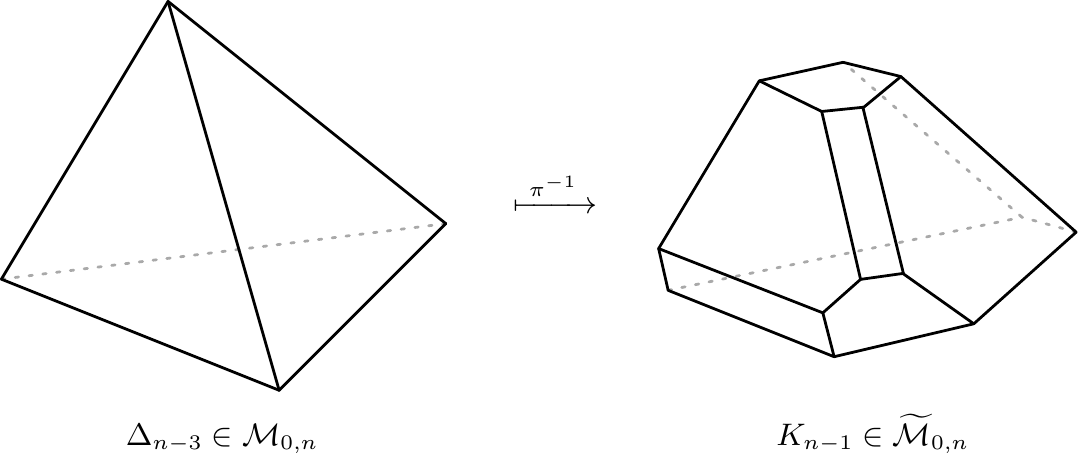}
\ee
Twisted cycles are then given by $(n-3)$-dimensional associahedra with an additional structure keeping track of the branches of the integrand. This structure is most conveniently summarized by introducing an additional regularization of twisted cycles based on the Pochhammer contour \cite{Witten:2013pra} and its higher-dimensional generalizations. We give details of this construction in Section~\ref{subsec-regularization}. In Section~\ref{subsec-twisted-cocycles} we also find a basis of twisted cocycles for string amplitudes and show they are given by logarithmic $(n-3)$-forms. With these constructions, an open string partial amplitude becomes an integration of a logarithmic form over an associahedron. It is interesting to see how physical properties arise in this formulation. \emph{Unitarity} is made manifest from the fact that facets of the associahedra are given by products of two lower-dimensional associahedra. \emph{Locality} is manifest from the fact that a higher-dimensional Pochhammer contour yields only simple poles in all factorization channels. Similarly, each lower-dimensional face of the associahedron has an associated factorization diagram. For instance, contact terms come from the bulk of the polytope, while trivalent diagrams come from its vertices. Since each propagator comes with a power of $\alpha'$, it means that in the field theory limit only the regions of the moduli space around the vertices of the associahedra contribute. We show how to construct this limit explicitly in Appendix~\ref{app-field-theory-limit}.

The most novel concept studied in this work, however, is the evaluation of the intersection numbers of twisted cycles. We show how to calculate them on explicit examples and in general in Section~\ref{sec-inverse-klt}. There, we also prove that combinatorial rules for finding intersection numbers are equivalent to the diagrammatic expansion found empirically in \cite{Mizera:2016jhj}, establishing that the $\alpha'$-corrected bi-adjoint scalar amplitudes $m_{\alpha'}(\beta | \gamma)$ are given by intersection numbers of twisted cycles. Geometric and topological meaning of these objects can be easily pictured. The real section of the moduli space $\widetilde{\mathcal{M}}_{0,n}$ is tiled by $(n-1)!/2$ associahedra $K_{n-1}(\beta)$ \cite{Devadoss98tessellationsof,Devadoss_combinatorialequivalence}, each labelled with some permutation $\beta$. The problem of calculating $m_{\alpha'}(\beta | \gamma)$ reduces to finding the intersection of two associahedra $K_{n-1}(\beta)$ and $K_{n-1}(\gamma)$ in the moduli space:
\be
\includegraphics{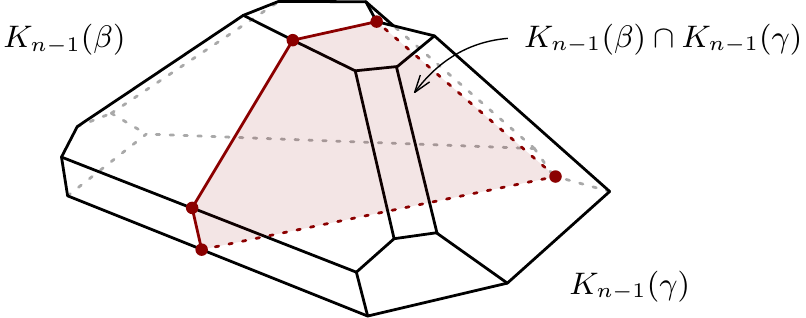}
\ee
The intersection number then receives contributions from all the $(0,1,2,\ldots)$-dimensional faces belonging to the intersection $K_{n-1}(\beta) \cap K_{n-1}(\gamma)$. In the above example, these are five vertices, five edges, and one polygon. Once again, in the field theory limit these contributions localize to vertices only, and hence can be written as a sum over trivalent diagrams. Since the intersection region belongs to both associahedra $K_{n-1}(\beta)$ and $K_{n-1}(\gamma)$ at the same time, the trivalent diagrams have to be compatible with both planar orderings $\beta$ and $\gamma$. This is indeed the standard definition of the field theory bi-adjoint scalar double-partial amplitude $m(\beta | \gamma)$. It is quite surprising that a scattering amplitude in a quantum field theory can be understood as arising from such an abstract mathematical object as an intersection number of twisted cycles.

\begin{outline}
This paper is structured as follows. In Section~\ref{sec-preliminaries} we give an introduction to the topics of twisted de Rham theory, as well as string theory amplitudes. In Section~\ref{sec-klt-as-twisted-period-relations} we define the twisted cycles and cocycles that are relevant for string theory amplitude computations. There, we also establish the equivalence between Kawai--Lewellen--Tye relations and twisted period relations. In Section~\ref{sec-inverse-klt} we discuss the interpretation of the inverse KLT kernel as intersection numbers of twisted cycles. We give a combinatorial description of the blowup procedure leading to the associahedron, and present the regularization of twisted cycles using a generalized Pochhammer contour. After giving explicit examples of the evaluation of intersection numbers for lower-point cases, we prove that they are equivalent to the diagrammatic rules for the computation of $m_{\alpha'}(\beta | \gamma)$ in general. We conclude with the summary of the results and a discussion of open questions in Section~\ref{sec-conclusion}. In Appendix~\ref{app-field-theory-limit} we discuss how to obtain the field theory limit of open string amplitudes from contributions localized around the vertices of the associahedra.
\end{outline}

\pagebreak
\section{\label{sec-preliminaries}Mathematical \emph{\&} Physical Preliminaries}

\textsc{This section is meant} to give an informal introduction to both mathematics of twisted de Rham theory and physics of string theory amplitudes for the readers not familiar with these topics.

\subsection{Twisted de Rham Theory}

\textsc{In the study of hypergeometric functions} one encounters integrals of multi-valued functions. In order to analyze properties of such objects, it is useful to reformulate the problem in the language of algebraic topology, where integrals are understood as pairings between integration \emph{cycles} and corresponding \emph{cocycles} as the integrands. In the case when the integrand is a single-valued object, the problem is governed by \emph{de Rham theory} and its homology and cohomology groups \cite{hatcher2002algebraic}. In the case of multi-valued integrands, one needs to keep track of additional information about the branch structure along the integration region. Study of such objects leads to a generalization of de Rham theory into its \emph{twisted} version.

Twisted de Rham theory dates back to the work of Aomoto \cite{AOMK:1972,aomoto1974,aomoto1974b,aomoto1975,aomoto1982,aomoto1983}, Deligne \cite{deligne1970equations}, Kita \cite{kita1982,kita1992,kita1993,kita_1994}, and Gelfand \cite{Gelfand1986,GelfandGelfand1986} who laid out foundations for this theory, which later grew into a field of research developed by various authors, see, e.g., \cite{cho1995,MANA:MANA19941660122,MANA:MANA19941680111,MANA:MANA173,MANA:MANA200310105,Matsumoto-Yoshida-progress,KITA1997,aomoto1997twisted}. Overview of these results is presented in textbooks by Aomoto and Kita \cite{aomoto2011theory}, as well as Yoshida \cite{yoshida2013hypergeometric}.\footnote{See also textbooks by Haraoka \cite{Haraoka} and Kimura \cite{Kimura} in Japanese, as well as one by Orlik and Terao \cite{orlik1992arrangements}, who discuss hypergeometric functions from the viewpoint of arrangements of hyperplanes.} In this section we outline the basics of twisted de Rham theory that should serve as intuition for the remainder of the paper. We follow the discussion in \cite{aomoto2011theory}.

Despite initial motivation coming from hypergeometric functions, twisted de Rham theory extends to more general objects. We will consider integrals of the form:
\be\label{int-multi-valued}
\int_{\gamma} u(z)\, \varphi(z),
\ee
where $u(z)$ and $\varphi(z)$ are a multi-valued function and a single-valued differential form respectively. Let us define the function $u(z)$ as
\be
u(z) := \prod_{i=1}^k f_i(z)^{\alpha_i} \qquad\text{with}\qquad \alpha_i \in \mathbb{C} \setminus \mathbb{Z},
\ee
where $f_i(z) = f_i(z_1,z_2,\ldots,z_m)$ are linear polynomials defined on an $m$-dimensional complex space minus the singular locus of $u(z)$, called a divisor, $D$:
\be
X := \mathbb{C}^m \!\setminus\! D \qquad\text{with}\qquad D := \bigcup_{i=1}^{k} \,\{ f_i(z) = 0 \}.
\ee
The function $u(z)$ and the $m$-form $\varphi(z)$, together with the $m$-dimensional region $\gamma$ are defined on the same manifold $X$. We demand that $\gamma$ has endpoints only on the divisor $D$, which implies that it does not have any boundaries on $X$. Hence, $\gamma$ can be called a \emph{topological cycle}.

In order to give a more precise definition of \eqref{int-multi-valued} let us introduce a smooth triangulation of $X$ that will serve as an intuitive example. We take the cycle $\gamma$ to be an $m$-simplex $\Delta$. Since $u(z)$ is multi-valued, we need to specify its branch on $\Delta$. We use the notation $\Delta \otimes u_\Delta(z)$ to signify the choice of a branch $u_{\Delta}(z)$ of $u(z)$ on $\Delta$. With this definition \eqref{int-multi-valued} becomes:
\be
\int_{\Delta \otimes u_{\Delta}} \!\!\varphi(z) \;:=\; \int_{\Delta} \Big\{ u(z)\; \text{on the branch}\; u_{\Delta}(z) \Big\}\, \varphi(z).
\ee
We say that $\Delta$ is \emph{loaded} with $u_{\Delta}(z)$. Since on a small neighbourhood around $\Delta$ the form $u_{\Delta}(z) \varphi(z)$ is single-valued, we can apply the ordinary Stokes theorem to find:
\be\label{int-Stokes}
\int_{\partial \Delta} \! u_{\Delta}(z)\, \varphi(z) = \int_{\Delta} d \left(u_{\Delta}(z)\, \varphi(z) \right) = \int_{\Delta} u_{\Delta}(z) \Big( d + \omega \!\w\! \Big) \varphi(z),
\ee
where
\be
\omega := d \log u(z) = \sum_{i=1}^{k} \alpha_i \frac{d f_i(z)}{f_i (z)}
\ee
is a single-valued $1$-form on $\Delta$. The combination in the brackets defines a differential operator $\con := d + \omega \w$, called a \emph{connection}. It is straightforward to check that $\con \cdot \con = 0$, which makes $\con$ an integrable connection \cite{deligne1970equations}. With these definitions, \eqref{int-Stokes} becomes:
\be\label{int-twisted-Stokes}
\int_{\Delta \otimes u_{\Delta}} \!\!\! \con \varphi(z) \;=\; \int_{\partial_\omega \left( \Delta \otimes u_{\Delta} \right)} \! \varphi(z),
\ee
where the remaining part is to specify how the boundary operator $\partial_\omega$ acts on $\Delta \otimes u_\Delta (z)$. Let us illustrate it with a couple of examples. We use the standard notation \cite{nakahara2003geometry} for an $m$-simplex, $\Delta = \la 01 \cdots m \ra$. In the one-dimensional case $\Delta = \la 01 \ra$ we have:
\be
\partial \la 01 \ra = \la 1 \ra - \la 0 \ra \qquad\text{and similarly}\qquad \partial_\omega \left( \la 01 \ra \otimes u_{\la 01 \ra}(z) \right) = \la 1 \ra \otimes u_{\la 1 \ra}(z) - \la 0 \ra \otimes u_{\la 0 \ra}(z).
\ee
Here the branch $u_{\la 1 \ra}(z)$ is induced from $u_{\la 01 \ra}(z)$ at the boundary of $\la 1 \ra$ of $\la 01 \ra$, and similarly for $u_{\la 0 \ra}(z)$. Therefore, the twisted Stokes theorem \eqref{int-twisted-Stokes} in this case becomes:
\be
\int_{\la 01 \ra \otimes u_{\la 01 \ra}} \!\!\! \con \varphi(z) \;=\; \int_{\la 1 \ra \otimes u_{\la 1 \ra}} \!\!\varphi(z) \;-\; \int_{\la 0 \ra \otimes u_{\la 0 \ra}} \!\!\varphi(z),
\ee
where each contribution gives $u(z) \varphi(z)$ evaluated at an appropriate branch at points $z= \la 0\ra$ and $\la 1\ra$.
Similarly, in the two-dimensional case, where $\Delta = \la 012 \ra$ we have:
\be
\partial_\omega \left( \la 012 \ra \otimes u_{\la 012 \ra}(z) \right) = \la 12 \ra \otimes u_{\la 12 \ra}(z) + \la 20 \ra \otimes u_{\la 20 \ra}(z) + \la 01 \ra \otimes u_{\la 01 \ra}(z).
\ee
Here, the twisted cycles associated to the boundaries $\la 12\ra$, $\la 20\ra$, and $\la 01\ra$ are determined by $u_{\la 012\ra}(z)$, which naturally translates to the twisted Stokes theorem:
\be
\int_{\la 012 \ra \otimes u_{\la 012 \ra}} \!\!\! \con \varphi(z) \;=\; \int_{\la 12 \ra \otimes u_{\la 12 \ra}} \!\!\varphi(z) \;+\; \int_{\la 20 \ra \otimes u_{\la 20 \ra}} \!\!\varphi(z) \;+\; \int_{\la 01 \ra \otimes u_{\la 01 \ra}} \!\!\varphi(z).
\ee
A generalization to higher-dimensional simplices is now clear. The twisted boundary operator acts on an $m$-simplex as:
\be\label{boundary-operator}
\partial_\omega \left( \la 01\cdots m \ra \otimes u_{\la 01\cdots m \ra}(z) \right) = \sum_{i=0}^{m} (-1)^{i} \la 01\cdots \hat{i}\cdots m \ra \otimes u_{\la 01\cdots \hat{i}\cdots m  \ra}(z),
\ee
where the hat denotes a removed label. For every triangulable manifold this definition can be used to compute the action of the boundary operator by gluing simplices together.

Let us interpret the above analysis in the language of algebraic topology. In order to track the information about branches we define homology with coefficients in a local system $\mathcal{L}_\omega^\vee$ defined by the differential equation
\be\label{local-system}
\nabla_{\omega}\xi = d\xi + \omega \!\w\! \xi = 0. 
\ee 
It admits a formal solution for $\xi$ of the form $\xi(z) = c / u(z)$, where $c \in \mathbb{C}$ is a constant. The space generated by local solutions of \eqref{local-system} is therefore one-dimensional. Let us cover the manifold $X$ with a locally finite open cover, such that $X=\bigcup_i U_i$, and fix a solution $\xi_i$ on each of the open sets $U_i$. On the intersection of two of them, $U_i$ and $U_j$, we have:
\be
\xi_i(z) = \zeta_{ij}\, \xi_j(z) \qquad\text{for}\qquad z\in U_i \cap U_j,
\ee
where $\zeta_{ij}$ is a constant on $U_i \cap U_j$. Given that a solution $\xi(z)$ on $U_i \cap U_j$ can be expressed as $\xi(z) = \tilde{c}_i \xi_i(z) = \tilde{c}_j \xi_j(z)$ for constants $\tilde{c}_i, \tilde{c}_j \in \mathbb{C}$, we have $\tilde{c}_i = \zeta_{ij}^{-1} \tilde{c}_j$. Therefore, the set of local solutions of \eqref{local-system} defines a flat line bundle, denoted by $\mathcal{L}_\omega^\vee$, obtained by gluing the fibers $\{ \tilde{c}_i\}$ by transition functions $\{ \zeta_{ij}^{-1}\}$. Similarly, we can define a dual line bundle $\mathcal{L}_\omega$, which corresponds to the transition functions $\{\zeta_{ij}\}$. It is generated by local solutions of the differential equation
\be
\nabla_{-\omega}\xi = d\xi - \omega \!\w\! \xi = 0.
\ee
Since the boundary operator \eqref{boundary-operator} coincides with the above system generated by $\mathcal{L}_\omega$, we can define a twisted chain group $C_m(X,\mathcal{L}_\omega)$ with the basis of $\Delta \otimes u_\Delta(z)$. The boundary operator is given by a map:
\be
C_m(X,\mathcal{L}_\omega) \;\xrightarrow{\;\partial_\omega\;}\; C_{m-1}(X,\mathcal{L}_\omega),
\ee
for which one can show $\partial_\omega \circ \partial_\omega = 0$. The definition of the $m$-th twisted de Rham homology group is given by a natural generalization the usual homology group:\footnote{Twisted homology groups $H_{k}(X,\mathcal{L}_\omega)$ with $k<m$ generically vanish \cite{aomoto1975,kita1993,aomoto2011theory}. For the purpose of this work, we will be only interested in top homologies and cohomologies.}
\be\label{def-twisted-homology}
H_{m}(X,\mathcal{L}_\omega) := \ker \partial_\omega \slash\, \text{im}\, \partial_\omega.
\ee
In other words, twisted homology is a space of boundary-less topological cycles with a loading, $\gamma \otimes u_\gamma(z)$, which are not boundaries themselves. We call these elements \emph{twisted} (or \emph{loaded}) cycles.

Let us turn to the associated twisted cohomology, which now has a straightforward definition. Since the function $u(z)$ vanishes at the boundaries of the cycles, the right-hand side of \eqref{int-twisted-Stokes} is equal to zero. This implies that adding a combination $\nabla_{\omega} \xi(z)$ to $\varphi(z)$ does not affect the result of the integration. In other words, $\varphi(z)$ and $\varphi(z) + \nabla_{\omega} \xi(z)$ are in the same cohomology class for any smooth $(m-1)$-form $\xi(z)$. This leads to the definition of the $m$-th twisted cohomology:
\be\label{def-twisted-cohomology}
H^{m}(X,\nabla_{\omega}) := \ker \nabla_\omega \slash\, \text{im}\, \nabla_\omega,
\ee
which means it is a space of cocycles which are closed but not exact with respect to $\nabla_\omega$. We call these elements \emph{twisted} cocycles. In similarity to the twisted homology case, one can also define a dual twisted cohomology with the connection $\nabla_{\omega}^\vee = \nabla_{-\omega}$. We will make use of this fact in the remainder of the paper. We can now use the twisted homology \eqref{def-twisted-homology} and twisted cohomology \eqref{def-twisted-cohomology} to define a non-degenerate pairing:
\be
H_{m}(X,\mathcal{L}_\omega) \times H^{m}(X,\nabla_{\omega}) \;\longrightarrow\; \mathbb{C},
\ee
given by
\be\label{homology-cohomology-pairing}
\la \gamma \otimes u_\gamma,\, \varphi(z) \ra := \int_{\gamma \otimes u_\gamma} \!\!\varphi(z).
\ee
This is a way of formulating the initial integral \eqref{int-multi-valued} in the language of twisted de Rham theory.

Manifolds considered in this work will generically be non-compact. In this case, one ought to consider the \emph{locally finite} twisted homology group $H^{\text{lf}}_m(X,\mathcal{L}_\omega)$ defined using a locally finite cover of $X$. Pairings between different twisted cycles and cocycles require at least one of them to be compact or with compact support \cite{aomoto2011theory}. We will explicitly construct a map from $H_m^{\text{lf}}(X,\mathcal{L}_\omega)$ to the space of compact twisted cycles, $H_m(X,\mathcal{L}_\omega)$, in Section~\ref{subsec-regularization}. We will discuss the use of an inclusion map from $H^{m}(X,\nabla_{\omega})$ to the compactly supported twisted cohomology $H_{c}^{m}(X,\nabla_{\omega})$ in Section~\ref{sec-conclusion}. For mathematically rigorous definitions of these statements see, e.g., \cite{aomoto2011theory}.\footnote{For a treatment of non-compact topological spaces in general, see the textbooks on algebraic topology \cite{hatcher2002algebraic,bott2013differential}.}

\subsection{String Theory Scattering Amplitudes}

\textsc{Much of the structure} of quantum field theories and their generalizations are encapsulated in \emph{scattering amplitudes}. Physically, they calculate the probabilities of given scattering states---such as particles or strings---to interact with each other. Despite the fact that efficient calculation of scattering amplitudes is indispensable in experimentally testing predictions of current models of physics at particle colliders \cite{Olive:2016xmw}, we will be mainly interested in their mathematical structure. Let us focus the discussion on string theory amplitudes.

The very first examples of string amplitudes appeared in the pioneering papers of Veneziano \cite{Veneziano:1968yb}, Virasoro \cite{Virasoro:1969me}, Shapiro \cite{Shapiro:1970gy}, as well as Koba and Nielsen \cite{Koba:1969rw,Koba:1969kh}, long before the formulation of string theory. Since then, calculation of string theory scattering amplitudes developed into a rich field of research of its own, see, e.g., \cite{DHoker:1988pdl,Berkovits:2000fe,Stieberger:2009hq,Mafra:2011nw,Mafra:2011nv}. For historical account of the developments of string theory see \cite{cappelli2012birth}. Here, we will give a brief review of the topics relevant for this paper. Great introduction to the subject is given in the classic textbooks by Green, Schwarz, and Witten \cite{green1988superstring,green1988superstring2}, as well as Polchinski \cite{polchinski1998string,polchinski1998string2}.

Strings come in two types: open and closed. Evolution of strings in spacetime creates a two-dimensional surface called the \emph{worldsheet}. Using the underlying conformal symmetry, we can map the worldsheet into a Riemann surface, which takes the scattering states into vertex operators. Scattering amplitude is then given as an integration of vertex operator correlation function over all their inequivalent positions. The $n$-point open string amplitude takes the form:
\be\label{open-string-amplitude}
\mathcal{A}^{\text{open}}_{\text{full}} = \text{Tr}(T^{\mathrm{a}_1} T^{\mathrm{a}_2} \cdots T^{\mathrm{a}_n}) \int_{\mathfrak{D}(12\cdots n)} \frac{d^n z}{\text{vol SL}(2,\mathbb{R})} \prod_{i<j}\; (z_j - z_i)^{\alpha' s_{ij}}\, F(z) \;+\; \ldots.
\ee
Let us dissect this formula one-by-one. Each string has an associated spacetime momentum $k_i^{\mu}$ for $\mu = 0,1,\ldots,d$, where $d+1$ is the spacetime dimension. We take all momenta to be incoming and impose momentum conservation $\sum_i k_i^\mu = 0$. Each string also has a \emph{colour} $\mathrm{a}_i$ associated to a generator $T^{\mathrm{a}_i}$ of the unitary group $U(N)$. Other possible quantum numbers, such as polarization vectors, are all enclosed in the rational function $F(z)$. In fact, $F(z)$ is a part of the correlation function of vertex operators which depends on the type of string theory used. The multi-valued function $\prod_{i<j}(z_j - z_i)^{\alpha' s_{ij}}$, called the \emph{Koba--Nielsen} factor \cite{Koba:1969rw}, is common to all types of strings. In the exponent we used the parameter $\alpha'$, which is proportional to the inverse of the string tension and serves as a coupling constant of the string amplitude \eqref{open-string-amplitude}. Here $s_{ij} = k_i \cdot k_j$ is an inner product of the momenta known as the \emph{Mandelstam invariant} \cite{Mandelstam:1958xc}. The integration variables $\{ z_1, z_2, \ldots, z_n \}$ are the positions of the vertex operators, which we associate to marked points---or \emph{punctures}---on the boundary of a genus-zero Riemann surface. Due to the inherit $\text{SL}(2,\mathbb{R})$ redundancy of the correlator, one needs to quotient out the action of this group, which is denoted by division by $\text{vol SL}(2,\mathbb{R})$. In practice, it boils down to fixing positions of three punctures, which by convention is taken to be $(z_1, z_{n-1}, z_n) = (0,1,\infty)$. In doing so, one picks up a constant factor $(z_1-z_{n-1})(z_{n-1}-z_n)(z_n - z_1)$ due to the Faddeev--Popov Jacobian \cite{Faddeev:1967fc}. The \emph{disk ordering} $\mathfrak{D}(12\cdots n)$ denotes a region of integration given by $\{z_1 < z_2 < \ldots < z_n\}$ after gauge-fixing. It comes with the associated trace $\text{Tr}(T^{\text{a}_1} T^{\text{a}_2} \cdots T^{\text{a}_n})$ of Chan--Paton factors \cite{Paton:1969je} due to the colour structure of the strings. The ellipsis in \eqref{open-string-amplitude} denote a sum over all $(n-1)!$ cyclically-inequivalent permutations of the vertex operators, each decorated with a trace factor.

It is important to mention that in \eqref{open-string-amplitude} we have only displayed contributions from the genus-zero Riemann surface. In order to obtain the full string theory amplitude, one sums over all possible genera of Riemann surfaces. Genus-zero terms correspond to the \emph{tree-level}---or classical---scattering amplitudes, while the genus-one and higher terms give quantum corrections. For the purpose of this work we will restrict ourselves to tree-level amplitudes only.

The amplitude \eqref{open-string-amplitude} admits a natural splitting into \emph{partial} (or \emph{colour-ordered}) amplitudes $\mathcal{A}^{\text{open}}(\beta)$ defined as coefficients of a Chan--Paton trace with the permutation $\beta$. These will be the objects of our interest. We have:
\be\label{open-string-partial-amplitude}
\mathcal{A}^{\text{open}}(\beta) := \int_{\mathfrak{D}(\beta)} \frac{d^n z}{\text{vol SL}(2,\mathbb{R})} \prod_{i<j}\; \left(z_{\beta(j)} - z_{\beta(i)}\right)^{\alpha' s_{\beta(i), \beta(j)}}\, F(z),
\ee
where the only information about the permuation $\beta$ comes from the disk ordering $\mathfrak{D}(\beta)$ and the choice of the branch for the Koba--Nielsen factor. Scattering amplitudes of closed strings are defined similarly. For their $n$-point scattering we have:
\be\label{closed-string-amplitude}
\mathcal{A}^{\text{closed}} := \int \frac{d^{2n} z}{\text{vol SL}(2,\mathbb{C})} \prod_{i<j}\; |z_{i} - z_{j}|^{2\alpha' s_{ij}}\; F(z)\, F(\bar{z}).
\ee
Here, the integration proceeds over the full moduli space of a genus-zero Riemann surface with $n$ punctures, $\mathcal{M}_{0,n}$. The $\text{SL}(2,\mathbb{C})$ redundancy is fixed by choosing positions of three punctures. The integrand of \eqref{closed-string-amplitude} factors into two functions, a holomorphic and an anti-holomorphic one, which once again depend on the type of string theory under consideration. The common piece is given by the Koba--Nielsen factor. Note that in both \eqref{open-string-partial-amplitude} and \eqref{closed-string-amplitude} we have omitted coupling constants that give rise to an overall normalization factor, see, e.g., \cite{green1988superstring}.

The precise form of the integrands of \eqref{open-string-partial-amplitude} and \eqref{closed-string-amplitude} will not be important for our purposes and can be found, for instance, in \cite{green1988superstring}. It was shown by Mafra, Schlotterer, and Stieberger \cite{Mafra:2011nv,Mafra:2011nw} that open string partial amplitudes can be expanded in a basis of the so-called \emph{Z-theory} amplitudes \cite{Mafra:2016mcc} as follows:
\be\label{open-string-expansion}
\mathcal{A}^{\text{open}}(\beta) = \sum_{\gamma \in \mathcal{C}} n(\gamma)\, Z_{\beta}(\gamma),
\ee
where
\be\label{Z-theory-amplitude}
Z_{\beta}(\gamma) := \int_{\mathfrak{D}(\beta)} \frac{d^n z}{\text{vol SL}(2,\mathbb{R})} \; \frac{\prod_{i<j} \left(z_{\beta(j)} - z_{\beta(i)}\right)^{\alpha' s_{\beta(i), \beta(j)}}}{\left(z_{\gamma(1)}-z_{\gamma(2)}\right)\left(z_{\gamma(2)}-z_{\gamma(3)}\right)\cdots\left(z_{\gamma(n)}-z_{\gamma(1)}\right)}.
\ee
The coefficients of the expansion, $n(\gamma)$, are only a function of kinematic invariants, polarization vectors, and possibly Grassmann variables in the supersymmetric case. The entire dependence on the string parameter $\alpha'$ and the colour ordering $\beta$ is encapsulated in the Z-theory amplitude \eqref{Z-theory-amplitude}. The sum is over a set $\mathcal{C}$ of $(n-3)!$ permutations.\footnote{In fact, \eqref{open-string-expansion} is another instance of a field theory KLT relation \cite{Mafra:2011nv,Mafra:2011nw}. This fact, however, will not play any role in this work.} Such a set is called a Bern--Carrasco--Johansson (BCJ) basis \cite{Bern:2008qj} originally found for Yang--Mills amplitudes and later generalized to the open string ones by Stieberger \cite{Stieberger:2009hq}. The Z-integral \eqref{Z-theory-amplitude} depends on two permutations, $\beta$ serving as a disk ordering, and $\gamma$ which determines the form of the integrand function. Because all the string theoretic properties of open string amplitudes are determined by the Z-theory amplitudes, it will be sufficient to study the integrals \eqref{Z-theory-amplitude} as the primary ingredients in our work. A similar decomposition can be performed in the closed string case \eqref{closed-string-amplitude}. It reads:
\be\label{closed-string-expansion}
\mathcal{A}^{\text{closed}} = \!\!\! \sum_{\beta \in \mathcal{B},\, \gamma \in \mathcal{C}}\!\!\! n(\beta)\, n(\gamma)\, J(\beta | \gamma),
\ee\vspace{-10pt}
where
\be\label{J-intergral}
J(\beta | \gamma) := \int \frac{d^{2n} z}{\text{vol SL}(2,\mathbb{C})} \frac{\prod_{i<j}\; |z_{i} - z_{j}|^{2\alpha' s_{ij}}}{\left(z_{\beta(1)}-z_{\beta(2)}\right)\cdots\left(z_{\beta(n)}-z_{\beta(1)}\right)\left(\overbar{z}_{\gamma(1)} - \overbar{z}_{\gamma(2)}\right) \cdots \left(\overbar{z}_{\gamma(n)} - \overbar{z}_{\gamma(1)}\right)}.
\ee
The sum in \eqref{closed-string-expansion} proceeds over two sets of permutations $\mathcal{B}$ and $\mathcal{C}$, each of length $(n-3)!$. The coefficients $n(\gamma)$ are the same as in the open string case \eqref{open-string-expansion}. The object \eqref{J-intergral} is an integral over the moduli space $\mathcal{M}_{0,n}$ \cite{Stieberger:2014hba}, with the integrand composed of two pieces called the \emph{Parke--Taylor} factors\footnote{The name comes due to the resemblence to the scattering amplitude of gluons with MHV helicity configuration found by Parke and Taylor \cite{Parke:1986gb}.} which also appear in \eqref{Z-theory-amplitude}. In contrast with \eqref{Z-theory-amplitude}, however, \eqref{J-intergral} is symmetric under the exchange of the two permutations $\beta$ and $\gamma$.

Calculation of the above string integrals is a difficult problem that has been approached in many different ways, see, e.g. \cite{Mafra:2011nv,Mafra:2011nw,Broedel:2013aza,Yuan:2014gva,Mafra:2016mcc}. The general approach is to perform an expansion around $\alpha' = 0$. In particular, it is known that in the $\alpha' \to 0$ limit, the integrals $Z_{\beta}(\gamma)$, $J(\beta | \gamma)$, as well as entries of the inverse of the \emph{string theory} KLT kernel $m_{\alpha'}(\beta | \gamma)$ all approach the same answer, up to global scaling:
\be
\lim_{\alpha' \to 0} Z_{\beta}(\gamma) = \lim_{\alpha' \to 0} J(\beta|\gamma) = \lim_{\alpha' \to 0} m_{\alpha'}(\beta|\gamma) = {\alpha'}^{3-n}\, m(\beta | \gamma).
\ee
Here $m(\beta | \gamma)$ are double-partial amplitudes of the so-called \emph{bi-adjoint scalar} \cite{BjerrumBohr:2012mg,Cachazo:2013iea}. This amplitude is given by a sum over all trivalent Feynman diagrams $\mathcal{T}$ which are planar with respect to both $\beta$ and $\gamma$:
\be
m(\beta | \gamma) := (-1)^{w(\beta | \gamma)+1} \!\!\!\sum_{\mathcal{T} \in \mathcal{G}_\beta \cap \mathcal{G}_\gamma} \frac{1}{\prod_{e \in \mathcal{T}} s_e},
\ee
where $\mathcal{G}_\beta$ denotes the space of all trivalent Feynman diagrams planar with respect to the ordering $\beta$, and $e \in \mathcal{T}$ means the set of internal edges of a given diagram $\mathcal{T}$. The Mandelstam invariant $s_e$ equals $p_e^2/2$, where $p_e$ is the momentum flowing through the edge $e$. We have included a sign factor \cite{Mizera:2016jhj} featuring the relative winding number between the two permutations, $w(\beta|\gamma)$. The amplitudes $m(\beta|\gamma)$ are the entries of the inverse of the \emph{field theory} KLT kernel matrix. 

\pagebreak
\section{\label{sec-klt-as-twisted-period-relations}Kawai--Lewellen--Tye Relations as Twisted Period Relations}

\textsc{Twisted de Rham theory} developed primarily in Japan towards the end of twentieth century has been motivated by trying to understand properties of hypergeometric functions. In particular, an interest lies in finding algebraic relations between different hypergeometric functions. The simplest instance of such an identity is a quadratic relation between Euler beta functions, $B(a,b)$:
\be
B(a,b) B(-a,-b) = -\pi \left(\frac{1}{a} + \frac{1}{b}\right) \left( \frac{1}{\tan \pi a} + \frac{1}{\tan \pi b} \right), \qquad\text{where}\qquad B(a,b) := \int_0^1 z^{a-1} (1-z)^{b-1} dz.
\ee
In pursuit of generalizing this relation to other integrals of multi-valued functions, Cho and Matsumoto discovered identities called the \emph{twisted period relations} \cite{cho1995}. In this section we discuss how to apply these relations to the case of string theory scattering amplitudes and show their equivalence with the Kawai--Lewellen--Tye relations \cite{Kawai:1985xq}.

We first review the statement of twisted period relations. Let us consider a twisted homology $H_m(X,\mathcal{L}_\omega)$ and the associated twisted cohomology $H^m(X,\nabla_{\omega})$ on an $m$-dimensional manifold $X$ and choose a basis of twisted cycles $\gamma_i \otimes u_{\gamma_i}(z)$ and twisted cocycles $\varphi_j(z)$ with $i,j = 1,2,\ldots, d$. Recall that due to a twisted version of de Rham theorem, dimensions of both spaces are equal \cite{aomoto2011theory}, i.e., $d := \dim H_m(X,\mathcal{L}_\omega) = \dim H^m(X,\nabla_{\omega})$. We can organize the bilinears between the bases of twisted cycles and cocycles into a $d \times d$ matrix with elements: 
\be\label{period-matrix}
\mathbf{P}_{ij} := \la \gamma_i \otimes u_{\gamma_i},\, \varphi_j(z) \ra = \int_{\gamma_i \otimes u_{\gamma_i}} \varphi_j(z).
\ee
This defines a \emph{twisted period matrix} $\mathbf{P}$.\footnote{Recall that a \emph{period} is an integral of an algebraic function over a domain specified by polynomial inequalities \cite{Kontsevich2001}. \emph{Twisted period} is a natural extension of this definition \cite{aomoto2011theory}.} Similarly, we can choose the dual twisted homology $H_m(X,\mathcal{L}_\omega^\vee)$ together with its associated twisted cohomology $H^m(X,\nabla_{\omega}^\vee)$ on the same manifold. Recall that that \emph{dual} here means that the homology is defined with a multi-valued function $u^{-1}(z)$ insted of $u(z)$. Once again, we choose bases of twisted cycles $\gamma_i^\vee \otimes u_{\gamma_i^\vee}^{-1}(z)$, as well as twisted cocycles $\varphi_j^\vee(z)$ with $i,j = 1,2,\ldots, d$ of the same dimension $d$ as above. This leads to the definition of a dual twisted period matrix $\mathbf{P}^\vee$ with elements:
\be\label{dual-period-matrix}
\mathbf{P}_{ij}^\vee := \la \gamma_i^\vee \otimes u_{\gamma_i^\vee}^{-1},\, \varphi_j^\vee(z) \ra = \int_{\gamma_i^\vee \otimes u_{\gamma_i^\vee}^{-1}} \varphi^\vee_j(z).
\ee
Relating these two matrices requires a definition of additional pairings between twisted homology and cohomology groups. It turns out one can define a non-degenerate pairing
\be\label{intersection-pairing}
H_m(X,\mathcal{L}_\omega) \times H_m(X,\mathcal{L}_\omega^\vee) \;\longrightarrow\; \mathbb{C},
\ee
called the \emph{intersection number of twisted cycles} \cite{MANA:MANA19941660122}. In \eqref{intersection-pairing} at least one of the twisted cycles ought to be compact. It owes its name to the fact that evaluation of this pairing requires the knowledge of how twisted cycles intersect one another topologically, aided with an information of the branch structure of both twisted cycles. Intersection theory of twisted cycles was originally developed by Kita and Yoshida \cite{MANA:MANA19941660122,MANA:MANA19941680111}. We will give precise definition of \eqref{intersection-pairing} in Section~\ref{sec-inverse-klt}, together with the discussion of how to construct a regularization map from $H^{\text{lf}}_m(X,\mathcal{L}_\omega)$ to $H_m(X,\mathcal{L}_\omega)$. For the time being, let us define a $d \times d$ matrix $\mathbf{H}$ built out of the pairings \eqref{intersection-pairing}:
\be
\mathbf{H}_{ij} = \la \gamma_i \otimes u_{\gamma_i},\, \gamma_j^\vee \otimes u_{\gamma_j^\vee}^{-1} \ra.
\ee
Similarly, there exists a pairing between the two twisted cohomologies,
\be\label{cocycle-intersection-pairing}
H^m(X,\nabla_{\omega}) \times H^m(X,\nabla_{\omega}^\vee) \;\longrightarrow\; \mathbb{C},
\ee
known as the \emph{intersection number of twisted cocycles} \cite{cho1995}. In \eqref{cocycle-intersection-pairing} at least one of the twisted cocycles needs to be with compact support. Different ways of evaluating this pairing were given by Deligne and Mostow \cite{zbMATH03996010}, Cho and Matsumoto \cite{cho1995,cho-private-note,matsumoto1998,Matsumoto1998-2}, as well as Ohara \cite{Ohara98intersectionnumbers}. We can now define another $d \times d$ matrix $\mathbf{C}$ with elements:
\be\label{intersection-of-cocycles-matrix}
\mathbf{C}_{ij} = \la \varphi_i(z),\, \varphi_j^\vee(z) \ra.
\ee
Cho and Matsumoto showed \cite{cho1995} that the matrices defined above can be related by:  
\be\label{twisted-period-relations}
\mathbf{C} = \mathbf{P}^\intercal (\mathbf{H}^{-1})^\intercal \mathbf{P}^\vee \qquad\text{or equivalently}\qquad \mathbf{H} = \mathbf{P} (\mathbf{C}^{-1})^\intercal (\mathbf{P}^\vee)^\intercal
\ee
These are the twisted Riemann period relations.\footnote{The name comes due to the resemblance of \eqref{twisted-period-relations} to the standard period relations on Riemann surfaces, see, e.g., \cite{griffiths2014principles,farkas2012riemann}.} As long as the matrices $\mathbf{P},\mathbf{P}^\vee,\mathbf{H},\mathbf{C}$ are defined by bases of their respective homologies and cohomologies, they are invertible. By $\mathbf{P}^\intercal$ we denote a transpose of the matrix $\mathbf{P}$. The relations \eqref{twisted-period-relations} hold under the condition that the cocycles in the bases $\varphi_i(z)$ and $\varphi_j^\vee(z)$ are logarithmic \cite{cho1995}.

Note that the dual twisted homology and cohomology are defined with a multi-valued function $u^{-1}(z)$. In order to apply the above relations to string theory amplitudes, we need to consider a different set of spaces defined with a complex conjugate function $\overbar{u(z)}$ instead. Such a setting was first considered by Hanamura and Yoshida \cite{hanamura1999}, and later studied in the context of Selberg-type integrals by Mimachi and Yoshida \cite{Mimachi2003,Mimachi2004}, see also \cite{mimachi2004-2}. Indeed, a canonical isomorphism $\mathcal{L}_{-\omega} \cong \mathcal{L}_{\overbar{\omega}}$ can be defined when the exponents $\alpha_i$ in $u(z)$ are real and sufficiently generic. From now on we will implicitly use such an isomorphism and work with the dual twisted homology defined by the system $\mathcal{L}_\omega^\vee = \mathcal{L}_{\overbar{\omega}}$ and a dual twisted cohomology defined with the connection $\nabla_{\omega}^\vee = \nabla_{\overbar{\omega}}$. See \cite{Mimachi2003} for details of this construction. The pairing \eqref{intersection-of-cocycles-matrix} then takes the form:\footnote{The name \emph{intersection number} of twisted cocycles is justified only in the case of the dual cohomology defined with $\nabla_{\omega}^\vee = \nabla_{-\omega}$, where the pairing receives contributions only from certain regions of the moduli space. We discuss it in Section~\ref{sec-conclusion}. In the case $\nabla_{\omega}^\vee = \nabla_{\overbar{\omega}}$ there is nothing to \emph{intersect}. To author's best knowledge, the intersection form of cohomology groups \eqref{cohomology-intersection-form} is poorly understood beyond the one-dimensional case.}
\be\label{cohomology-intersection-form}
\la \varphi_i(z),\, \varphi_j^\vee(z) \ra := \int_X |u(z)|^2\; \varphi_i(z) \!\w\! \overbar{\varphi_j^\vee(z)},
\ee
such that the integral converges. Study of the Hodge structure of such integrals was initiated in \cite{hanamura1999}. Let us now turn to the problem of formulating tree-level string theory amplitudes in the language of twisted de Rham theory.

\subsection{Twisted Cycles for String Amplitudes}

\textsc{Open string scattering amplitudes} are defined on the moduli space of genus-zero Riemann surfaces with $n$ punctures, $X=\mathcal{M}_{0,n}$. After gauge fixing the positions of three of them to $(z_1, z_{n-1}, z_n) = (0,1,\infty)$, the regions of integration of the open string amplitudes are given by a disk ordering $\mathfrak{D}(\beta)$ with a permutation $\beta$, which is an $(n-3)$-simplex labelled by $\beta$:
\be\label{simplex}
\Delta_{n-3}(\beta) := \overbar{\{ 0 < z_{\beta(2)} < z_{\beta(3)} < \cdots < z_{\beta(n-2)} < 1 \}}.
\ee
It is embedded in the real section of the moduli space, $ \mathcal{M}_{0,n}(\mathbb{R})$. Here the overbar denotes a closure of the space. Twisted cycles are then defined as follows.
\begin{definition}
	Twisted cycle on $X = \M_{0,n}$ labelled by a permutation $\beta$ is given by 
	\be\label{string-cycles}
	\C(\beta) := \Delta^{o}_{n-3}(\beta) \otimes\, {\mathsf{SL}_\beta}[u(z)],
	\ee
	whose topological part is the interior of the simplex $\Delta_{n-3}(\beta)$. The branch of $u(z)$ for a given twisted cycle is chosen according to the so-called standard loading, denoted by $\mathsf{SL}$. We define it as
	\be\label{standard-loading}
	{\mathsf{SL}_\beta}[u(z)] = \prod_{i<j} \left( z_{\beta(j)} - z_{\beta(i)} \right)^{\alpha' s_{\beta(i),\beta(j)}}.
	\ee
	The set $\normalfont\{ \C(\beta) \,|\, \beta \in (1,\mathfrak{S}_{n-3}(2,3,\ldots,n-2),n-1,n)\}$ of cardinality $(n-3)!$ forms a basis of twisted cycles. Here, $\mathfrak{S}_{n-3}$ denotes permutations of a set of $n-3$ labels.
\end{definition}

\noindent
Twisted cycles are elements of $H_{n-3}^{\text{lf}}(X,\mathcal{L}_{\omega})$. The size of the basis is known to be $(n-3)!$ from the study of Selberg integrals by Aomoto, see, e.g., \cite{doi:10.1093/qmath/38.4.385,aomoto2011theory}, as well as the BCJ basis for open string amplitudes \cite{Stieberger:2009hq}, or equivalently size of the KLT matrix \cite{BjerrumBohr:2010hn}. Of course, one can also choose different bases of twisted cycles labelled by different sets of $(n-3)!$ orderings, not necessarily being related by a permutation operator. The multi-valued function $u(z)$ is given by the Koba--Nielsen factor:
\be\label{Koba-Nielsen}
u(z) := \prod_{i<j} \left(z_i - z_j\right)^{\alpha' s_{ij}} = \prod_{i=2}^{n-2} \left(0 - z_i\right)^{\alpha' s_{1i}} \prod_{i=2}^{n-2} \left(z_i - 1\right)^{\alpha' s_{i,n-1}} \!\!\!\!\prod_{2 \leq i < j \leq n-2}\!\! \left(z_i - z_j\right)^{\alpha' s_{ij}}.
\ee
The twist $1$-form $\omega$ then becomes:
\be
\omega = d \log \prod_{i<j} (z_i - z_j)^{\alpha' s_{ij}} = \alpha' \sum_{i<j} s_{ij}\, d \log (z_{i} - z_{j}) = \alpha' \sum_{i=2}^{n-2} \left( \sum_{j \neq i} \frac{s_{ij}}{z_{i} - z_{j}}\right) dz_i = \alpha' \sum_{i=2}^{n-2} E_i\, dz_i,
\ee
where $E_i := \sum_{j \neq i} s_{ij}/(z_i - z_j)$ are the so-called \emph{scattering equations} \cite{Cachazo:2013gna}. The divisor $D$ is defined by the singular locus of $u(z)$, i.e.,
\be\label{divisor}
D := \bigcup_{i=2}^{n-2}\; \{ z_i = 0 \}\; \bigcup_{i=2}^{n-2}\; \{ z_i - 1 = 0 \} \!\!\bigcup_{2 \leq i < j \leq n-2}\!\! \{ z_i - z_j = 0 \}.
\ee
Since $D$ does not belong to the manifold $X$, the objects \eqref{string-cycles} have no boundaries in $X$, that is $\partial\, \C(\beta) = \varnothing$ for any $\beta$. This justifies the use of the name twisted \emph{cycle}.

The above definition is not fully satisfactory, as it contains singular points when more than two punctures coalesce at once. In order to resolve this issue, we consider a Deligne--Mumford--Knudsen compactification \cite{Deligne1969,MathScand11642,MathScand12001,MathScand12002} of $\mathcal{M}_{0,n}$, given by the so-called \emph{minimal blowup}, $\pi^{-1}(\mathcal{M}_{0,n}) = \widetilde{\mathcal{M}}_{0,n}$ \cite{DeConcini1995,Mimachi2004}. A blowup of a simplex \eqref{simplex} is a polytope called the \emph{associahedron} \cite{10.2307/1993608}. We will study this object more closely in Section~\ref{sec-inverse-klt}. An additional regularization from locally finite twisted homology $H_{n-3}^{\text{lf}}(X,\mathcal{L}_{\omega})$ into $H_{n-3}(X,\mathcal{L}_{\omega})$, where the twisted cycles are compact can be constructed by considering Pochhammer contour and its higher-dimensional generalizations, see, e.g., \cite{kita1993,yoshida2013hypergeometric,aomoto2011theory}. We will show how to obtain it in Section~\ref{subsec-regularization}, and how to use it the study of the field theory limit of open string amplitudes in Appendix~\ref{app-field-theory-limit}.

\subsection{\label{subsec-twisted-cocycles}Twisted Cocycles for String Amplitudes}

\textsc{The dimension of the twisted cohomology} group $H^{n-3}(X,\nabla_{\omega})$ is also $(n-3)!$. A convenient basis for this space studied in the string amplitudes literature \cite{Mafra:2011nv,Mafra:2011nw} is given by the so-called \emph{Parke--Taylor} factors:\footnote{Note that in order to be consistent with the literature, we have not permuted the differential form in the numerator. As a consequence, Parke--Taylor factors for different permutations are related by relabelling and an additional change of sign.}
\be
\PT(\beta) = \frac{dz_1 \w dz_{2} \w dz_{3} \w \cdots \w dz_{n-2} \w dz_{n-1} \w dz_n}{(z_{\beta(1)} - z_{\beta(2)})(z_{\beta(2)} - z_{\beta(3)}) \cdots (z_{\beta(n-1)} - z_{\beta(n)})(z_{\beta(n)} - z_{\beta(1)})} \bigg/ \text{vol SL}(2,\mathbb{R}).
\ee
Here one needs to fix the $\text{SL}(2,\mathbb{R})$ redundancy in the same way as for twisted cycles by taking $(z_1, z_{n-1}, z_n) = (0,1,\infty)$ and compensating with a constant Faddeev--Popov factor:
\be
\text{vol SL}(2,\mathbb{R}) = \frac{dz_1 \w dz_{n-1} \w dz_n}{(z_1 - z_{n-1})(z_{n-1} - z_n)(z_n - z_1)},
\ee
This leads to the following definition for twisted cocycles.
\begin{definition}
	Twisted cocycle on $X = \mathcal{M}_{0,n}$ labelled by a permutation $\beta$ is given by
	\be\label{PT-def}
	\PT(\beta) := \frac{dz_{2} \w dz_{3} \w \cdots \w dz_{n-2}}{(0 - z_{\beta(2)})(z_{\beta(2)} - z_{\beta(3)}) \cdots (z_{\beta(n-2)} - 1)}.
	\ee
	The set $\normalfont\{ \PT(\beta) \,|\, \beta \in (1,\mathfrak{S}_{n-3}(2,3,\ldots,n-2),n-1,n)\}$ of cardinality $(n-3)!$ forms a basis of twisted cocycles.
\end{definition}

\noindent
Twisted cocycles are elements of $H^{n-3}(X,\nabla_{\omega})$. Once again, it is often necessary to consider a blowup of \eqref{PT-def} defined on $\widetilde{\mathcal{M}}_{0,n}$. We illustrate how to perform it in practice in the Appendix~\ref{app-field-theory-limit}, see also \cite{Ohara98intersectionnumbers}. In order to satisfy the assumptions of the twisted period relations \eqref{twisted-period-relations}, it is required that the twisted cycles \eqref{PT-def} are logarithmic. In the following we prove by construction that \eqref{PT-def} is a logarithmic differential form.
\begin{claim}
	The Parke--Taylor factor \eqref{PT-def} can be represented as a logarithmic $(n-3)$-form:
	\be
	\PT(\beta) = (-1)^{n} \sgn(\beta)\, d\log \left(\frac{0 - z_{\beta(2)}}{z_{\beta(2),\beta(3)}}\right) \,\wedge\, d\log \left(\frac{z_{\beta(2),\beta(3)}}{z_{\beta(3),\beta(4)}}\right) \,\wedge\, \cdots \,\wedge\, d\log \left(\frac{z_{\beta(n-3),\beta(n-2)}}{z_{\beta(n-2)} - 1}\right),\nn
	\ee
	where $z_{ab} := z_a - z_b$. Note that the prefactor is a constant.
\end{claim}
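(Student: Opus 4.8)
The plan is to verify the claimed logarithmic representation by direct expansion, showing that the wedge product of the $d\log$ forms on the right-hand side produces exactly the rational $(n-3)$-form $\PT(\beta)$ in \eqref{PT-def}. First I would reduce to the identity permutation $\beta = (1,2,\ldots,n)$, since the general case follows by relabelling the integration variables $z_i$; the $\sgn(\beta)$ factor then appears precisely because permuting the variables in the numerator $dz_2 \w dz_3 \w \cdots \w dz_{n-2}$ introduces the sign of the permutation, while the denominator is manifestly permutation-covariant. So it suffices to prove
\be
\frac{dz_2 \w \cdots \w dz_{n-2}}{(0-z_2)(z_2-z_3)\cdots(z_{n-3}-z_{n-2})(z_{n-2}-1)} = (-1)^n\, d\log\!\left(\frac{-z_2}{z_{23}}\right) \w d\log\!\left(\frac{z_{23}}{z_{34}}\right) \w \cdots \w d\log\!\left(\frac{z_{n-3,n-2}}{z_{n-2}-1}\right).
\ee

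The key computational step is to expand each factor $d\log(g_k/g_{k+1}) = dg_k/g_k - dg_{k+1}/g_{k+1}$, where $g_1 = -z_2$, $g_k = z_{k,k+1} = z_k - z_{k+1}$ for $2 \le k \le n-3$, and $g_{n-2} = z_{n-2} - 1$. I would then argue that in the wedge product $\bigwedge_{k=1}^{n-3} (dg_k/g_k - dg_{k+1}/g_{k+1})$, the cross terms involving $dg_{k+1}/g_{k+1}$ from the $k$-th factor and $dg_{k+1}/g_{k+1}$ from the $(k+1)$-th factor cancel pairwise because they contain a repeated one-form and hence vanish by antisymmetry of the wedge — this is the telescoping mechanism. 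A clean way to organize this is to note that the map sending the ordered tuple $(g_1,\ldots,g_{n-2})$ to $\bigwedge_k d\log(g_k/g_{k+1})$ is an alternating-sum telescope, and a short induction on $n$ (or a direct appeal to the standard identity for "canonical forms" of this type, e.g. the one underlying Parke--Taylor forms on $\M_{0,n}$) collapses it to $\sum_{\sigma} \pm \frac{dg_{\sigma(1)} \w \cdots}{g_1 \cdots g_{n-2}}$ with a single surviving numerator. Finally I would compute the Jacobian: expressing each $dg_k$ in terms of $dz_2,\ldots,dz_{n-2}$ and collecting the surviving top-form, one checks that $dg_1 \w dg_2 \w \cdots \w dg_{n-3} = (-1)^{?}\, dz_2 \w \cdots \w dz_{n-2}$ up to an explicit sign, and tracking this sign together with the $(-1)$'s pulled out of the $d\log$ differences yields the overall $(-1)^n$.

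The main obstacle I anticipate is purely bookkeeping: getting the overall sign $(-1)^n$ exactly right, since it requires carefully coordinating (i) the signs from writing $d\log(g_k/g_{k+1})$ as a difference, (ii) the sign from which of the $n-3$ surviving permutations of the $dg_k$'s actually appears after the antisymmetry cancellations, and (iii) the Jacobian sign relating $dg_1 \w \cdots \w dg_{n-3}$ to $dz_2 \w \cdots \w dz_{n-2}$ (noting that $g_1 = -z_2$ and $g_{n-2} = z_{n-2}-1$ are the ones tied to the gauge-fixed points, so they do not contribute their own $dz$'s independently). A robust way to sidestep most of the subtlety is to check the base case $n=4$ explicitly (where $\PT$ is a $1$-form and the identity reads $dz_2/((-z_2)(z_2-1)) = (-1)^4 d\log(-z_2/(z_2-1))$, which is immediate), and then induct, peeling off the last $d\log$ factor and using the inductive hypothesis for $\M_{0,n-1}$ — this isolates the sign change to a single step at each level and makes the $(-1)^n$ accumulate transparently.
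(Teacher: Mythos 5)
Your direct-expansion strategy is a genuinely different route from the paper's proof, which instead runs a double induction in $n$: the paper strips off the last $d\log$ factor, observes that the remaining product of $n-4$ factors is \emph{not} $\PT(\I_{n-1})$ because $z_{n-2}$ is a variable there rather than a fixed endpoint, and resolves the discrepancy by introducing an extra term proportional to $\PT(\I_{n-2})$; it then assumes the claim for both $n-1$ and $n-2$ to close the recursion. Your own inductive ``robust fallback'' does not notice this variable-versus-constant mismatch, which is precisely the subtlety that makes the paper's induction two-step rather than one-step, so as written your fallback does not actually close.

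The direct approach can, however, be made to work, and is arguably cleaner than the paper's argument once the mechanism is stated correctly --- but your description of that mechanism is wrong in a way that would sink the proof. After expanding $\bigwedge_{k=1}^{n-3}\bigl(\tfrac{dg_k}{g_k} - \tfrac{dg_{k+1}}{g_{k+1}}\bigr)$ and discarding the terms killed by $\omega\wedge\omega = 0$, you are \emph{not} left with ``a single surviving numerator.'' What survives is a sum of $n-2$ nonvanishing terms, one for each choice of index $j$ to omit from $\{1,\ldots,n-2\}$, i.e.\ $\sum_{j=1}^{n-2}(-1)^{n-2-j}\,\omega_1\wedge\cdots\widehat{\omega_j}\cdots\wedge\omega_{n-2}$ (for $n=5$ there are three, not one). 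The crucial algebraic input needed to collapse this alternating sum into the single Parke--Taylor fraction is the affine constraint $\sum_{k=1}^{n-2} g_k = -1$, hence $\sum_k dg_k = 0$: this lets you trade $dg_{n-2}$ for $-\sum_{k\le n-3} dg_k$ in each term and then recognize $\sum_j \tfrac{g_j}{\prod_k g_k} = \tfrac{-1}{\prod_k g_k}$. Your proposal never identifies this relation; without it the $n-2$ terms have no reason to combine, and calling the remaining difficulty ``purely bookkeeping'' misdiagnoses it as a sign-chasing problem when it is in fact a missing structural identity. The Jacobian computation you mention is fine and does supply a factor $(-1)^{n-3}$ from the lower-triangular change of variables $dg_1\wedge\cdots\wedge dg_{n-3} = (-1)^{n-3}\,dz_2\wedge\cdots\wedge dz_{n-2}$, but only after the $n-2$ partial-fraction terms have been summed using $\sum_k g_k = -1$ does this produce the asserted $(-1)^n\,\PT(\I_n)$.
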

\begin{proof}
	We will prove the claim inductively in $n$. For clarity of notation let us specialize to the canonical permutation $\I_n = (12 \cdots n)$ without loss of generality. The cases $n=3,4$ can be checked explicitly:
	\be
	\PT(\I_3) = -1,
	\ee
	\be
	\PT(\I_4) = d \log \frac{0 - z_2}{z_2 - 1} = \left( \frac{1}{z_2} - \frac{1}{z_2 - 1} \right) dz_2 = \frac{dz_2}{(0-z_2)(z_2-1)}.
	\ee
	Assuming that the statement is true for $n-2$ and $n-1$, for $n \geq 5$ we find:
	\begin{align}
	\PT(\I_n) &= (-1)^{n}\; d\log \frac{0 - z_{2}}{z_{23}} \w \cdots \w d\log \frac{z_{n-4,n-3}}{z_{n-3,n-2}} \w d\log \frac{z_{n-3,n-2}}{z_{n-2} - 1}\tr
	\label{thm1-n}&= (-1)^{n} \left( d\log \frac{0 - z_{2}}{z_{23}} \w \cdots \w d\log \frac{z_{n-4,n-3}}{z_{n-3,n-2}} \right) \!\w\! \left( \frac{dz_{n-3}}{z_{n-3,n-2}} - \frac{(z_{n-3}-1)\, dz_{n-2}}{z_{n-3,n-2}\, (z_{n-2}-1)}  \right)\!.\qquad\,
	\end{align}
	The term in the first pair of brackets is almost proportional to $\PT(\I_{n-1})$, however it includes an additional variable $z_{n-2}$ which is a constant in the definition of $\PT(\I_{n-1})$. We need to take it into account by adding an extra differential with respect to $z_{n-2}$. Then the term in the first brackets becomes:
	\be\label{thm1-n-1}
	\frac{\PT(\I_{n-1})}{(-1)^{n-1}} + \left( d\log \frac{0-z_{2}}{z_{23}} \w \cdots \w d\log \frac{z_{n-5,n-4}}{z_{n-4,n-3}} \right) \w \frac{dz_{n-2}}{z_{n-3,n-2}}.
	\ee
	Once again, the term in the brackets is proportional to the lower-point case $\PT(\I_{n-2})$ plus terms including the $1$-form $dz_{n-3}$. However, the additional terms give rise to the $2$-form $dz_{n-3} \w dz_{n-2}$ in the expression \eqref{thm1-n-1}, and hence vanish in the full expression for $\PT(\I_n)$, since they are wedged with the second bracket in \eqref{thm1-n}. Up to these terms \eqref{thm1-n-1} equals
	\be
	\frac{\PT(\I_{n-1})}{(-1)^{n-1}}\,  + \frac{\PT(\I_{n-2})}{(-1)^{n-2}} \w \frac{dz_{n-2}}{z_{n-3,n-2}}.
	\ee
	Note that in both $\PT(\I_{n-1})$ and $\PT(\I_{n-2})$ we have set the punctures fixed at $1$ to have an arbitrary position, denoted by $z_{n-3}$ and $z_{n-4}$ respectively. Plugging the above expression into \eqref{thm1-n} we find:
	\begin{align}
	\PT(\I_n) &= (-1)^{n} \left( \frac{\PT(\I_{n-1})}{(-1)^{n-1}}\,  + \frac{\PT(\I_{n-2})}{(-1)^{n-2}} \w \frac{dz_{n-2}}{z_{n-3,n-2}} \right) \w \left( \frac{dz_{n-3}}{z_{n-3,n-2}} - \frac{(z_{n-3}-1)\, dz_{n-2}}{z_{n-3,n-2}\, (z_{n-2}-1)}  \right)\tr
	&= \frac{1}{z_{n-3,n-2}} \left( \frac{z_{n-3}-1}{z_{n-2}-1}\, \PT(\I_{n-1}) \w dz_{n-2} - \frac{1}{z_{n-3,n-2}}\, \PT(\I_{n-2}) \w dz_{n-3} \w dz_{n-2} \right).
	\end{align}
	We now use the inductive assumption to obtain:
	\begin{align}
	\PT(\I_n) &= \frac{1}{z_{n-3} - z_{n-2}} \bigg( \frac{z_{n-3}-1}{z_{n-2}-1}\, \frac{1}{(0-z_2) (z_2 - z_3)\cdots (z_{n-4}-z_{n-3}) (z_{n-3} - z_{n-2})} \tr
	&\qquad\qquad\qquad\qquad- \frac{1}{z_{n-3} - z_{n-2}}\, \frac{1}{(0-z_2) (z_2 - z_3)\cdots (z_{n-4} - z_{n-3})} \bigg)\, dz_2 \w \cdots \w dz_{n-2}\tr
	&= \frac{dz_2 \w dz_3 \w \cdots \w dz_{n-2}}{(0-z_2)(z_2 - z_3)\cdots (z_{n-3} - z_{n-2})(z_{n-2} - 1)} \left( \frac{z_{n-3} - 1}{z_{n-3} - z_{n-2}} - \frac{z_{n-2} - 1}{z_{n-3} - z_{n-2}} \right)\tr
	&= \frac{dz_2 \w dz_3 \w \cdots \w dz_{n-2}}{(0-z_2)(z_2 - z_3)\cdots (z_{n-3} - z_{n-2})(z_{n-2} - 1)},
	\end{align}
	which completes the proof.
\end{proof}
Note how due to its recursive nature, $\PT(\I_n)$ in its logarithmic form contains Fibonacci number of terms, $F_{n-2}$ \cite{A000045}, that all collapse to a single one \eqref{PT-def} once summed over. One may wonder if generalizations of the Parke--Taylor factor used to describe multi-trace amplitudes \cite{Schlotterer:2016cxa} or general scalar theories \cite{Baadsgaard:2016fel} are logarithmic. In the following we show that they are not, and therefore cannot enter the bases of the twisted homologies used in twisted period relations \eqref{twisted-period-relations}.
\begin{claim}
	The multi-trace Parke--Taylor factors of the form
	\be\normalfont
	\PT(\beta | \gamma | \cdots) = \frac{dz_1 \w dz_2 \w \cdots \w dz_n}{(z_{\beta(1)} - z_{\beta(2)}) \cdots (z_{\beta(|\beta|)} -z_{\beta(1)}) (z_{\gamma(1)} - z_{\gamma(2)}) \cdots (z_{\gamma(|\gamma|)} - z_{\gamma(1)}) \cdots\;} \bigg/ \text{vol SL}(2,\mathbb{R}),
	\ee
	where the permutations $\beta, \gamma, \ldots$ are a partition of $(12\cdots n)$, are not logarithmic on $\widetilde{\M}_{0,n}$.
\end{claim}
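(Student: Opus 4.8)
The plan is to exhibit, for every admissible partition, a boundary divisor of $\widetilde{\M}_{0,n}$ along which $\PT(\beta|\gamma|\cdots)$ has a pole of order at least two; since a logarithmic form by definition has only first-order poles along each component of the normal-crossing boundary $\partial\widetilde{\M}_{0,n}=\bigcup_S D_S$ — the union running over $S\subset\{1,\dots,n\}$ with $2\le|S|\le n-2$, and $D_S$ the locus where the punctures in $S$ collide — this will be enough. Throughout I assume every block of the partition has size $\ge 2$, since otherwise the expression is not even defined, and I use that being logarithmic is a property of the form on $\M_{0,n}$ and hence independent of the choice of $\mathrm{SL}(2)$ gauge slice.

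First I would treat the easy case: if some block equals $\{a,b\}$, its cyclic factor is $(z_a-z_b)(z_b-z_a)=-(z_a-z_b)^2$, a manifest double pole along $D_{\{a,b\}}$; near that divisor $dz_a\w dz_b=d\epsilon\w dz_b+O(\epsilon)$ (writing $z_a=z_b+\epsilon$) is finite while the remaining factors are generically nonzero, so the form is already non-logarithmic. Hence from now on assume all blocks have size $\ge 3$; then $n\ge 6$, and for any single block $S$ its complement $S^{c}$ (the union of the other blocks) has size $\ge 3$.

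Next I would choose the $\mathrm{SL}(2)$ gauge so that three of the punctures labelled by $S^{c}$ are fixed (legitimate since $|S^{c}|\ge 3$), making the Faddeev--Popov factor constant near $D_S$, and then pass to coordinates adapted to $D_S$: a gluing parameter $\epsilon$ with $D_S=\{\epsilon=0\}$, the node position $z_*$ on the $S^{c}$-component, and bubble coordinates $w_i=(z_i-z_*)/\epsilon$ for $i\in S$ normalized by $w_{i_1}=0$, $w_{i_2}=1$ for a chosen pair $i_1,i_2\in S$, so the remaining $w_i$ with $i\in S\setminus\{i_1,i_2\}$ are $|S|-2$ honest moduli. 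From $z_i=z_*+\epsilon w_i$ one gets $dz_{i_1}=dz_*$, $dz_{i_2}=dz_*+d\epsilon$, and $dz_i=dz_*+w_i\,d\epsilon+\epsilon\,dw_i$ otherwise, so the numerator restricted to the $S$-block collapses cleanly:
\be
\bigwedge_{i\in S} dz_i \;=\; dz_*\w d\epsilon\w\!\!\!\bigwedge_{i\in S\setminus\{i_1,i_2\}}\!\!\!\big(\epsilon\,dw_i\big)\;=\;\epsilon^{\,|S|-2}\;dz_*\w d\epsilon\w\!\!\!\bigwedge_{i\in S\setminus\{i_1,i_2\}}\!\!\! dw_i,
\ee
which vanishes to order exactly $|S|-2$ in $\epsilon$ with nonzero leading coefficient. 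On the other hand every factor of the $S$-cycle is $z_a-z_b=\epsilon(w_a-w_b)$ with $a,b\in S$, so the $S$-cyclic part of the denominator is $\epsilon^{|S|}$ times a rational function of the $w$'s not identically zero on $D_S$, while the cross-factors $z_i-z_j$ with $i\in S$, $j\in S^{c}$ and the cyclic factors of the remaining blocks are nonzero constants in $\epsilon$ up to $O(\epsilon)$. Collecting powers, near $D_S$ the form behaves as $\epsilon^{-2}\,dz_*\w d\epsilon\w(\cdots)$ with $(\cdots)$ generically nonzero, i.e.\ it has a genuine second-order pole there, and therefore is not logarithmic. (As a check, for $n=4$ and the partition $\{1,2\}\cup\{3,4\}$ this reduces after gauge-fixing to $dz_2/z_2^2$.)

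I expect the main obstacle to be the bookkeeping in the last step — making sure that no factor in numerator or denominator conspires to lower the order of the pole. On the numerator side this is exactly what the closed form for $\bigwedge_{i\in S}dz_i$ above is designed to rule out; on the denominator side one must check that the residual function (the product of $w$-differences along the $S$-cycle, the cross-factors, and the other blocks' cyclic factors) is not identically zero on $D_S$, which holds because these pieces depend on disjoint sets of moduli and none is the zero function. A secondary point, should one want the statement in the strict de Rham sense, is simply to note that a second-order pole in $\PT(\beta|\gamma|\cdots)$ itself already violates the defining property of a logarithmic form, irrespective of $d\,\PT(\beta|\gamma|\cdots)$; I would not try to decide here whether the underlying cohomology class nonetheless admits a logarithmic representative, as the claim concerns the form.
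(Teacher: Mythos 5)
Your argument is correct and essentially matches the paper's proof: both perform a power-counting at the blowup of the locus where the punctures of a single block $S$ collide, finding that the numerator scales as $\epsilon^{|S|-2}$ while the $S$-cyclic factor in the denominator scales as $\epsilon^{|S|}$, yielding a genuine second-order pole. The only presentational differences are your gauge choice (all three fixed punctures outside $S$, which makes the Faddeev--Popov Jacobian manifestly constant near $D_S$) and your separate treatment of size-$2$ blocks, whereas the paper fixes $z_1=0$ inside the block, substitutes $z_i=\tau y_i$, and runs the same scaling argument uniformly for $2\le m\le n-2$.
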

\begin{proof}
	Recall that a differential form is logarithmic if it has no higher-order poles along the divisor of $X$ given by the singular locus of $u(z)$. We will show that for the multi-trace Parke--Taylor factor, there always exists a higher-order pole, and therefore it cannot be logarithmic. Let us focus on the subpermutation $\beta$, which without loss of generality we can choose to be $\beta = (12\cdots m)$ for some $2 \leq m \leq n-2$. Since $|\beta| < n-1$, we can fix two of the punctures in the remaining permutations to be $1$ and $\infty$. Let us also take $z_1 = 0$. We then perform a blowup along the face $\{z_1 = z_2 = \cdots = z_m\}$ by taking
	\be
	z_i = \tau y_i \qquad \text{for}\qquad i=1,2,\ldots,m,
	\ee
	as well as set $y_1 = 0$, $y_m = 1$. Changing the variables of integration from $\{z_2, z_3, \ldots, z_m\}$ to $\{\tau,y_2,y_3,\ldots,y_{m-1}\}$, the differentials in the numerator scale as
	\be
	dz_2 \w dz_3 \w \ldots \w dz_m \;\sim\; \tau^{m-2} d\tau,
	\ee
	while the denominator scales as
	\be
	(0 - z_2)(z_2 - z_3) \cdots (z_m - 0) \;\sim\; \tau^{m}.
	\ee
	Hence the contour given by $\{|\tau| = \varepsilon\}$ receives the contribution proportional to $d\tau / \tau^2$, which is a double pole. We conclude that multi-trace Parke--Taylor factors are not logarithmic.
\end{proof}

\subsection{KLT Relations Revisited}

\textsc{With the definitions} of twisted cycles \eqref{string-cycles} and twisted cocycles \eqref{PT-def} we can study their pairings. We also have analogous definitions for the dual spaces $H_{n-3}(X,\mathcal{L}_{\omega}^\vee)$ and $H^{n-3}(X,\nabla_{\omega}^\vee)$, whose bases we label with $\C(\beta)^\vee$ and $\PT(\beta)^\vee$ respectively. Elements of the period matrices \eqref{period-matrix} and \eqref{dual-period-matrix} then become:
\be\label{pairing-form}
\la \C(\beta),\, \PT(\gamma) \ra = Z_{\beta}(\gamma) \qquad\text{and}\qquad \la \C(\beta)^\vee,\, \PT(\gamma)^\vee \ra = Z_{\beta}(\gamma),
\ee
Both of these bilinears give the Z-integrals as defined in \eqref{Z-theory-amplitude}. Similarly, a pairing between two twisted cocycles is given by the J-integral \eqref{J-intergral}:
\be\label{intersection-form}
\la \PT(\beta),\, \PT(\gamma)^\vee \ra = J(\beta | \gamma).
\ee
In Section~\ref{sec-inverse-klt} we will prove that:
\be\label{klt-cycle-cycle}
\la \C(\beta),\, \C(\gamma)^\vee \ra = \left(\frac{i}{2}\right)^{n-3} m_{\alpha'}(\beta | \gamma),
\ee
for an appropriately defined pairing between the two twisted cycles computed by their intersection number. Here, $m_{\alpha'}(\beta | \gamma)$ denotes the $\alpha'$-corrected bi-adjoint scalar amplitudes introduced in \cite{Mizera:2016jhj}. We can build matrices out of the above pairings and apply the twisted period relation \eqref{twisted-period-relations} in order to obtain the relation:\footnote{By $m_{\alpha'}^{-1}(\beta | \gamma)$ we denote the inverse of a matrix $m_{\alpha'}(\gamma | \beta)$ with rows labelled by $\gamma \in \mathcal{C}$ and columns labelled by $\beta \in \mathcal{B}$.}
\be\label{twisted-period-relations-Z}
J(\delta | \epsilon) = \sum_{\beta \in \mathcal{B},\, \gamma \in \mathcal{C}} Z_{\beta}(\delta)\, m_{\alpha'}^{-1}(\beta | \gamma)\, Z_{\gamma}(\epsilon).
\ee
Note that here we have absorbed the constant factor $(i/2)^{n-3}$ from \eqref{klt-cycle-cycle} into the definition a coupling constant of $J(\delta | \epsilon)$. Using the fact that open string amplitudes can be expanded in the basis of Z-integrals, and closed string amplitudes can be expanded in the basis of J-integrals as:
\be\label{expansion}
\mathcal{A}^{\text{open}}(\beta) = \sum_{\delta \in \mathcal{D}} n(\delta)\, Z_{\beta}(\delta) \qquad\text{and}\qquad \mathcal{A}^{\text{closed}} = \!\!\sum_{\delta \in \mathcal{D},\, \epsilon \in \mathcal{E}}\!\! n(\delta)\, n(\epsilon)\, J(\delta | \epsilon), 
\ee
we find
\be\label{KLT-relation}
\mathcal{A}^{\text{closed}} = \!\!\sum_{\beta \in \mathcal{B},\, \gamma \in \mathcal{C}}\!\! \mathcal{A}^{\text{open}}(\beta)\, m_{\alpha'}^{-1}(\beta | \gamma)\, \mathcal{A}^{\text{open}}(\gamma).
\ee
These are the Kawai--Lewellen--Tye relations \cite{Kawai:1985xq}. We conclude that twisted period relations for string theory amplitudes are equivalent to KLT relations. In fact, similar identities can be written for any other string-like models having BCJ representations of the form \eqref{expansion}.

\begin{example}
	Let us illustrate \eqref{KLT-relation} with an example for $n=4$. The sizes of the bases are $1$, and we can choose them to be $\mathcal{B}=\mathcal{C}=\{(1234)\}$. The KLT relations \eqref{KLT-relation} then read:
	\be\normalfont\label{KLT-4}
	\mathcal{A}^{\text{closed}}_4 = \mathcal{A}^{\text{open}}(1234)\, \left(\frac{1}{\tan \pi \alpha' s} + \frac{1}{\tan \pi \alpha' t} \right)^{-1} \!\! \mathcal{A}^{\text{open}}(1234),
	\ee
	where we used the notation $s = s_{12}$ and $t = s_{23}$. We will give a method of calculating the above coefficient of KLT expansion in Section~\ref{subsec-four-point}. The four-point open string amplitude is given by the Veneziano amplitude \cite{Veneziano:1968yb} proportional to the beta function, $B(\alpha' s, \alpha' t)$. Plugging it into \eqref{KLT-4} and using trigonometric identities, we find:
	\be\normalfont
	\mathcal{A}^{\text{closed}}_4 = \frac{\sin \pi \alpha' s\, \sin \pi \alpha' t}{\sin \pi \alpha'(s+t)} B(\alpha' s, \alpha' t)^2 = -\pi {\alpha'}^2 u^2 \frac{\Gamma(\alpha' s)\, \Gamma(\alpha' t)\, \Gamma(\alpha' u)}{\Gamma(1-\alpha' s)\,\Gamma(1-\alpha' t)\,\Gamma(1-\alpha' u)},
	\ee
	where $u = s_{13} = -s - t$ by momentum conservation. This expression is indeed proportional to the Virasoro--Shapiro amplitude for four-point closed string scattering \cite{Virasoro:1969me,Shapiro:1970gy}.
\end{example}

\begin{example}
	For $n=5$ the size of the basis is $2$. Let us take $\mathcal{B}=\{(12345),(12435)\}$ and $\mathcal{C}=\{(13254),(14253)\}$. The KLT relations \eqref{KLT-relation} become:
	\begin{align}
	\mathcal{A}^{\mathrm{closed}}_5 &=
	{\renewcommand{\arraystretch}{1.6}
	\left[ \begin{array}{c}
	\mathcal{A}^{\mathrm{open}}(12345)\\
	\mathcal{A}^{\mathrm{open}}(12435)
	\end{array} \right]^\intercal}
	\left[ \begin{array}{cc}
	\dfrac{1}{\sin \pi \alpha' s_{23} \, \sin \pi \alpha' s_{45}} & 0 \\
	0 & \dfrac{1}{\sin \pi \alpha' s_{24}\, \sin \pi \alpha' s_{35}}
	\end{array} \right]^{-1}
	{\renewcommand{\arraystretch}{1.6}
	\left[ \begin{array}{cc}
	\mathcal{A}^{\mathrm{open}}(13254)\\
	\mathcal{A}^{\mathrm{open}}(14253)
	\end{array} \right]}\tr\tr
	&= \sin \pi \alpha' s_{23}\, \sin \pi \alpha' s_{45} \,\mathcal{A}^{\mathrm{open}}(12345)\, \mathcal{A}^{\mathrm{open}}(13254) \;+\; (3 \leftrightarrow 4).
	\end{align}
	In Section~\ref{subsec-five-point} we will discuss how to calculate entries of the above inverse of the KLT kernel. For more examples of KLT relations we refer the reader to \cite{Mizera:2016jhj}.
\end{example}

\subsection{\label{subsec-circuit-matrix}Basis Expansion and the Circuit Matrix}

\textsc{A related question} in twisted de Rham theory is how monodromy group acts on the period matrix \eqref{period-matrix}, see, e.g. \cite{yoshida2013hypergeometric}. More precisely, the problem translates to finding a matrix $\mathbf{M}$ which relates two period matrices $\mathbf{P}'$ and $\mathbf{P}$ with different choices of bases for twisted cycles, say $\mathcal{D}$ and $\mathcal{D}'$:
\be
\mathbf{P}' = \mathbf{M}\mathbf{P}.
\ee
The representative of the monodromy group, $\mathbf{M}$, is called a \emph{circuit matrix} \cite{yoshida2013hypergeometric}. Following the derivation given in \cite{Mizera:2016jhj}, we can show how to construct the entries of the matrix $\mathbf{M}$ from intersection numbers of twisted cycles in the following way. Let $\mathbf{H}$ be a $d \times d$ matrix of intersection numbers of twisted cycles defined with bases $\mathcal{C}$ and $\mathcal{D}$ for the rows and columns respectively, and ${\mathbf{H}}_k'$ be an $(n-3)!$ vector of intersection numbers of a given twisted cycle labelled by $k$ and the basis $\mathcal{C}$. Similarly, let $\mathbf{P}_l$ be a vector of pairings between the basis $\mathcal{D}$ and a given twisted cocycle labelled by $l$, and $\mathbf{P}_{kl}$ be a pairing between the twisted cycle $k$ and twisted cocycle $l$. We can organize these objects into a $(d+1) \times (d+1)$ matrix, whose determinant equals \cite{bernstein2009matrix}:
\be
\setlength{\arraycolsep}{3pt}
\det \left[ \begin{array}{c|c}
	\mathbf{H} & \mathbf{H}_k' \\
	\hline
	\mathbf{P}_l^\intercal & \mathbf{P}_{kl}' \\
\end{array} \right] = \bigg( \mathbf{P}_{kl}' - (\mathbf{H}_k')^\intercal (\mathbf{H}^{-1})^\intercal \mathbf{P}_l \bigg) \det \mathbf{H} = 0.
\ee
This expression vanishes because the final column of the matrix is linearly dependent of the remaining $d$ columns, which form a basis. Since $\det \mathbf{H}$ is non-vanishing, the term in the brackets ought to be equal to zero. Repeating the same procedure $d \times d$ times, we can build a new period matrix $\mathbf{P}'$, whose basis of twisted cycles is $\mathcal{D}'$, while $\mathbf{P}$ has a basis $\mathcal{D}$. Note that both matrices have the same bases of twisted cocycles. Rows of $\mathbf{H}'$ are labelled by the basis $\mathcal{C}$, while its columns are in $\mathcal{D}'$. The final expression reads:
\be\label{circuit-matrix}
\mathbf{P}' = (\mathbf{H}')^\intercal (\mathbf{H}^{-1})^\intercal \mathbf{P} \qquad\text{implying}\qquad \mathbf{M} = (\mathbf{H}')^\intercal (\mathbf{H}^{-1})^\intercal,
\ee
which gives an explicit realization of the circuit matrix $\mathbf{M}$. Note that this expression is independent of the choice of the basis $\mathcal{C}$, whose labels are contracted in the expression for $\mathbf{M}$. When $\mathcal{D}=\mathcal{D}'$ we have $\mathbf{M}=\I$, as expected. It would be interesting to understand how \eqref{circuit-matrix} arises directly from twisted de Rham theory.

In the case of string amplitudes, this expression translates to:
\be\label{basis-expansion}
\mathcal{A}^{\text{open}}(\beta) = \sum_{\gamma \in \mathcal{C},\, \delta \in \mathcal{D}} m_{\alpha'}(\beta | \gamma)\, m_{\alpha'}^{-1}(\gamma | \delta)\, \mathcal{A}^{\text{open}}(\delta),
\ee
which gives a way of expressing a given open string partial amplitude in a BCJ basis \cite{Bern:2008qj,Stieberger:2009hq} given by a set $\mathcal{D}$ of $(n-3)!$ partial amplitudes $\mathcal{A}^{\text{open}}(\delta)$ for $\delta \in \mathcal{D}$. For examples of how to evaluate \eqref{basis-expansion} see \cite{Mizera:2016jhj}.

\pagebreak
\section{\label{sec-inverse-klt}Inverse KLT Kernel as Intersection Numbers of Twisted Cycles}

\textsc{Intersection theory for twisted cycles} was introduced by Kita and Yoshida in 1992 \cite{MANA:MANA19941660122}, who later developed it further in a series of papers \cite{MANA:MANA19941680111,MANA:MANA173,MANA:MANA200310105}. Since then, intersection numbers have been evaluated for a large family of different types of hypergeometric functions \cite{matsumoto1994,Ohara98evaluationof,Matsumoto1998-2,Yoshiaki-GOTO2015203,majima2000,Cho1995-2,goto2015,goto_matsumoto_2015,doi:10.1142/S0129167X13500948,OST2003}, including Selberg-type integrals \cite{Mimachi2003,Mimachi2004,mimachi2004-2,Ohara98intersectionnumbers,MIMACHI2003209,doi:10.1142/S0218216511008887}. For our purposes, intersection numbers of twisted cycles play a central role in the KLT relations by computing entries of the inverse of the KLT kernel. It is therefore important to understand how to evaluate them in the setting of string intergrals. In this section we discuss a combinatorial way for computing intersection numbers of twisted cycles and prove its equivalence to the diagrammatic rules for calculating $m_{\alpha'}(\beta | \gamma)$ given in \cite{Mizera:2016jhj}.

Let us first review the key aspects of the intersection numbers of twisted cycles. Let $H^{\text{lf}}_m(X,\mathcal{L}_\omega)$ be the $m$-th locally finite twisted homology group on a non-compact $m$-dimensional manifold $X = \mathbb{C}^m \setminus D$, where the divisor $D$ is the singular locus of a multi-valued function $u(z) = \prod_{i=1}^k f_i(z)^{\alpha_i}$. The twist $1$-form $\omega = d \log u(z)$ defines an integrable connection $\nabla_{\omega} = d + \omega \w$. The twisted homology has coefficients in $\mathcal{L}_\omega$, the local system of solutions to the differential equation $d\xi = \omega \w \xi$. Twisted cycles are then elements of $H^{\text{lf}}_m(X,\mathcal{L}_\omega)$. Working under the assumption that the exponents $\alpha_i \in \mathbb{R} \setminus \mathbb{Z}$ of $u(z)$ are sufficiently generic, one can define an isomorphism
\be\label{reg-definition}
H_m^{\text{lf}}(X,\mathcal{L}_\omega) \;\xrightarrow{\;\text{reg}\;}\; H_m(X,\mathcal{L}_\omega),
\ee
which is the inverse of the natural map from $H_m(X,\mathcal{L}_\omega)$ to $H_m^{\text{lf}}(X,\mathcal{L}_\omega)$. We refer to the map \eqref{reg-definition} as \emph{regularization} \cite{aomoto2011theory}. We will give plenty of explicit examples of regularized twisted cycles in the following sections.

Similarly, we have a dual $m$-th locally finite twisted homology group $H_m^{\text{lf}}(X,\mathcal{L}_\omega^\vee)$ with the coefficients in the local system $\mathcal{L}_\omega^\vee$ defined with $d \xi = - \omega \w \xi$. Kita and Yoshida showed \cite{MANA:MANA19941660122} that there exists a non-degenerate pairing,
\be\label{int-pairing}
H_{m}(X,\mathcal{L}_\omega) \times H_{m}^{\text{lf}}(X,\mathcal{L}_\omega^\vee) \;\xrightarrow{\;\;\bullet\;\;}\; \mathbb{C},
\ee
known as the intersection form. Together with the regularization map \eqref{reg-definition}, it defines the \emph{intersection number} of two twisted cycles,
\be
\mathsf{C} = \gamma \otimes u_\gamma(z) \in H_{m}^{\text{lf}}(X,\L) \qquad\text{and}\qquad \mathsf{C}^\vee = \gamma^\vee \otimes u^{-1}_{\gamma^\vee}(z) \in H_{m}^{\text{lf}}(X,\mathcal{L}_{-\omega})
\ee
as
\be\label{def-intersection-number}
\text{reg}\,\mathsf{C} \bullet \mathsf{C}^\vee = \sum_{z \in \gamma \,\cap\, \gamma^\vee} \mathsf{Int}_{z}(\gamma, \gamma^\vee)\, u_\gamma(z)\, u_{\gamma^\vee}^{-1}(z).
\ee
Here, $\mathsf{Int}_{z}(\gamma, \gamma^\vee)$ is the topological intersection number of two topological cycles $\gamma$ and $\gamma^\vee$ at point $z$. The sum proceeds over all intersections between the two cycles. When they intersect non-tangentially---which will be the case throughout this work---the topological intersection number $\mathsf{Int}$ is equal to $+1$ or $-1$ depending on their relative orientation, as follows:
\be\label{topological-int}
\raisebox{-.35\height}{\includegraphics{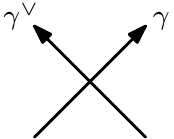}} \;=\; +1 \quad\qquad\text{or}\quad\qquad \raisebox{-.35\height}{\includegraphics{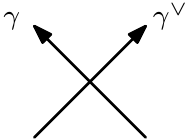}} =\; -1.
\ee

Despite the fact that most of the literature on intersection numbers of twisted cycles has been focused on studying the pairing with the dual homology defined with $\mathcal{L}^\vee = \mathcal{L}_{-\omega}$, one can also apply these ideas to the case complex conjugate case $\mathcal{L}^\vee = \mathcal{L}_{\overbar{\omega}}$ which is more relevant to physics. Hanamura and Yoshida \cite{hanamura1999} considered an isomorphism $\mathcal{L}_{-\omega} \cong \mathcal{L}_{\overbar{\omega}}$ which can be canonically defined if all $\alpha_{i}$ are real and sufficiently generic. Then, for two twisted cycles given by
\be
\mathsf{C} = \gamma \otimes u_\gamma(z) \in H_{m}^{\text{lf}}(X,\L) \qquad\text{and}\qquad \mathsf{C}^\vee = \gamma^\vee \otimes \overline{u_{\gamma^\vee}(z)} \in H_{m}^{\text{lf}}(X,\mathcal{L}_{\overbar{\omega}})
\ee
the intersection number is defined as:
\be\label{def-intersection-number-complex}
\reg\,\mathsf{C} \bullet \mathsf{C}^\vee = \sum_{z \in \gamma \,\cap\, \gamma^\vee} \mathsf{Int}_{z}(\gamma, \gamma^\vee)\, u_{\gamma}(z)\, \overline{u_{\gamma^\vee}(z)} \,/\, |u(z)|^2,
\ee
which is analogous to \eqref{def-intersection-number}. Indeed, when the exponents $\alpha_i$ are real, both definitions agree with each other. For this reason, for considerations of intersection numbers of twisted cycles it will not be important to make distinction between the two cases $\mathcal{L}_{-\omega}$ and $\mathcal{L}_{\overbar{\omega}}$, and hence we will denote twisted cycles belonging to both twisted homologies with same symbols. We will also not distinguish between $\C$ and $\C^\vee$, as they are given by the same definition \eqref{string-cycles}.

Let us focus on the twisted cycles relevant to open string scattering amplitudes. Recall that the multi-valued function defining the local system $\L$ is given by the Koba--Nielsen factor:
\be\label{u-def}
u(z) = \prod_{i<j} (z_i - z_j)^{\alpha' s_{ij}}.
\ee
Here, the Mandelstam invariants $s_{ij} = k_i \cdot k_j$ in the exponents are chosen in such a way that none of the invariants $s_{ij\ldots} = (k_i + k_j + \ldots)^2/2$ is an integer. In the following we will set $\alpha' = 1$ for clarity of notation. The manifold $X$ is the moduli space of genus-zero Riemann surfaces with $n$ punctures, $\mathcal{M}_{0,n}$. Twisted cycles $\C(\beta)$ on this space were defined in \eqref{string-cycles} with the standard loading operator $\mathsf{SL}$, which chooses the branch of the Koba--Nielsen factor for a given permutation $\beta$ in a canonical way. Using this definition, in the case of $n=4$ we have:
\be
\C(1234) = \{ 0 < z_2 < 1 \} \otimes z_2^{s_{12}}(1-z_2)^{s_{23}} = \overrightarrow{(0,1)} \otimes z^s (1-z)^t,
\ee
where we denote the only manifold coordinate as $z = z_2$ and the exponents with the usual notation $s = s_{12}$ and $t = s_{23}$. In the case of $n=5$ the basis has two elements:
\be
\C(12345) = \{ 0 < z_2 < z_3 < 1 \} \otimes z_2^{s_{12}} (1-z_2)^{s_{24}} (z_3 - z_2)^{s_{23}} z_3^{ s_{34}} (1-z_3)^{ s_{34}},
\ee
\be
\C(13245) = \{ 0 < z_3 < z_2 < 1 \} \otimes z_2^{s_{12}} (1-z_2)^{ s_{24}} (z_2 - z_3)^{ s_{23}} z_3^{ s_{34}} (1-z_3)^{s_{34}}.
\ee
One can also define other bases of twisted cycles $\C(\beta)$. They have a straightforward definition analogous to \eqref{string-cycles}. For instance, in the next section we will make us of the four-point twisted cycles:
\be
\C(2134) = \overrightarrow{(-\infty,0)} \otimes (-z)^s (1-z)^t \qquad\text{and}\qquad \C(1324) = \overrightarrow{(1,\infty)} \otimes z^s (z-1)^t.
\ee
Before evaluating intersection numbers let us give an explicit construction of the regularization map \eqref{reg-definition} for twisted cycles $\C(\beta)$, as well as discuss how they are affected by the blowup procedure \cite{DeConcini1995}.

\subsection{\label{subsec-regularization}Regularization of Twisted Cycles}

\textsc{The cycles relevant} for string amplitudes \eqref{string-cycles} are non-compact. Since the definition of the intersection number requires at least one of the twisted cycles to be compact, we need to employ a regularization. In this section we discuss an explicit construction of such a map, based on the Pochhammer contour and its higher-dimensional generalizations, see, e.g., \cite{kita1993,yoshida2013hypergeometric,aomoto2011theory}.

Let us review how the standard Pochhammer contour is constructed. We start by considering the integral:
\be\label{beta-integral}
I := \int_0^1 z^s (1-z)^t\, \varphi(z),
\ee
where $s,t \notin \mathbb{Z}$ and $\varphi(z)$ is any single-valued $1$-form. As defined, the integral converges only for sufficiently positive values of $s$ and $t$. In order to make the it convergent for all values of these parameters, one can employ an alternative contour of integration $\gamma$, known as the Pochhammer contour:
\be
\gamma\, := \includegraphics[valign=c]{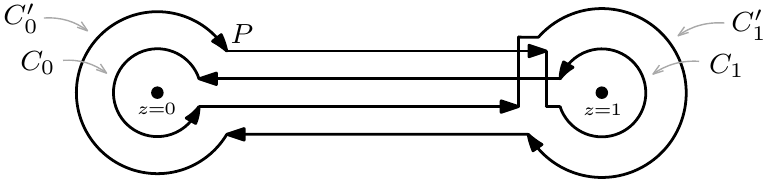}
\ee

\noindent
This contour winds around the two branch points $z=0,1$ once in both directions. We picture the branch cuts as extending from $z=0,1$ downwards to $-i\infty$. Let us track how this contour is related to the one used in \eqref{beta-integral}. Starting from the point $P$ and moving right, we first obtain the contribution equal to $I$. After winding around $z=1$ in a positive direction along $C_1$, one picks up a phase factor $e^{2\pi i t}$, so that the next stretch towards $z=0$ equals to $-e^{2\pi i t} I$, where the minus comes from a different orientation that \eqref{beta-integral}. Next, winding around $z=0$ gives an additional factor of $e^{2\pi i s}$ from $C_0$, so that the following contribution becomes $e^{2\pi i (s+t)} I$. Winding around $z=1$, this time in a negative direction $C_1^\prime$, takes the phase factor back to $e^{2\pi i t}$, so that the final contribution is $e^{2\pi i t} I$. After performing another turn around $z=0$ in a negative direction given by $C_0^\prime$, we land at the point $P$ on the original branch. Summing up the contributions, we have:
\be
\oint_{\gamma} z^s (1-z)^t \,\varphi(z) = \left( 1 - e^{2\pi i t} + e^{2\pi i (s+t)} - e^{2\pi i s} \right) \int_0^1 z^s (1-z)^t \,\varphi(z),
\ee
or equivalently
\be
\int_0^1 z^s (1-z)^t\, \varphi(z) = \oint_{\gamma'} z^s (1-z)^t \,\varphi(z) \qquad\text{with}\qquad \gamma' := \frac{\gamma}{\left( e^{2\pi i s} - 1\right) \left( e^{2\pi i t} - 1\right)}.
\ee
Let us split the contour $\gamma'$ into three parts: regions near the two branch points $z=0,1$, and the interval $\overrightarrow{(\varepsilon, 1-\varepsilon)}$. In order to be precise, we will use a small parameter $\varepsilon$ as the radius of the circular contours. The contributions near the branch point at $z=0$ give:
\be
\frac{C_0 + C_0^\prime}{\left(e^{2\pi i s}-1\right)\left(e^{2\pi i t}-1\right)} = \frac{\left( e^{2\pi i t} - 1 \right) S(\varepsilon,0)}{\left(e^{2\pi i s}-1\right)\left(e^{2\pi i t}-1\right)} = \frac{S(\varepsilon,0)}{e^{2\pi i s}-1},
\ee
where by $S(a,z)$ we denote a positively oriented circular contour with centre at $z$ and starting at a point $a$. Similarly, around $z=1$ we find the contribution
\be
\frac{C_1 + C_1^\prime}{\left(e^{2\pi i s}-1\right)\left(e^{2\pi i t}-1\right)} = \frac{\left(1 - e^{2\pi i s}\right) S(1-\varepsilon, 1)}{\left(e^{2\pi i s}-1\right)\left(e^{2\pi i t}-1\right)} = - \frac{S(1-\varepsilon,1)}{e^{2\pi i t}-1}.
\ee
Finally, the contours along the real axis simply give $\overrightarrow{(\varepsilon, 1-\varepsilon)}$. Putting everything together, we find the regularization of the original cycle $\overrightarrow{(0,1)}$ to be:
\begin{align}
\reg\, \overrightarrow{(0,1)} \;&:=\; \gamma^\prime\tr
&\,=\; \frac{S(\varepsilon,0)}{e^{2\pi i s}-1} + \overrightarrow{(\varepsilon, 1-\varepsilon)} - \frac{S(1-\varepsilon,1)}{e^{2\pi i t}-1}\tr
\label{reg-0-1}&\,=\; \includegraphics[valign=c]{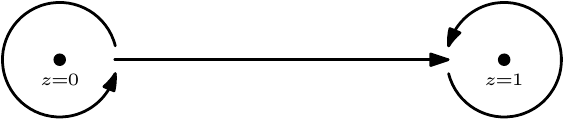}\;.
\end{align}
Here, we have introduced a graphical notation to denote the regularized cycle. It is understood that the circular parts of the contour come multiplied with the additional factors $1/(e^{2\pi i s}-1)$ and $-1/(e^{2\pi i t}-1)$ that are not represented explicitly. We will make a repeated use of this regularization in the following sections. Note that we have been implicitly working with a twisted cycle $\overrightarrow{(0,1)} \otimes z^s (1-z)^t$ relevant for string amplitude calculations.

Generalizations to higher-dimensional cycles can be made by performing a similar regularization \cite{aomoto2011theory}. Since locally we can describe a manifold $X$ as a direct product of lower-dimensional spaces, we can employ the regularization \eqref{reg-0-1} near the singularities on these product spaces. In the case of $X = \mathcal{M}_{0,n}$ with $n \geq 5$, however, there is an additional difficulty coming from the fact that the singular locus of $u(z)$ is not normally crossing. For example, in the case of $n=5$ the function $u(z)$ defining the local system $\L$ is singular at
\be
\{ z_2 = 0\} \cup \{ z_2 - 1 = 0 \} \cup \{ z_2 - z_3 = 0 \} \cup \{ z_3 = 0 \} \cup \{ z_3 - 1 = 0\},
\ee
which has degenerate points at $(z_2, z_3) = (0,0)$, $(1,1)$, and also $(\infty,\infty)$. The way forward is to consider a blowup of this space \cite{Deligne1969,MathScand11642,MathScand12001,MathScand12002,DeConcini1995}, denoted by $\widetilde{\mathcal{M}}_{0,5} = \pi^{-1}(\mathcal{M}_{0,5})$, where all triple singular points get resolved.
\begin{figure}[t]
\centering
\includegraphics[width=\textwidth]{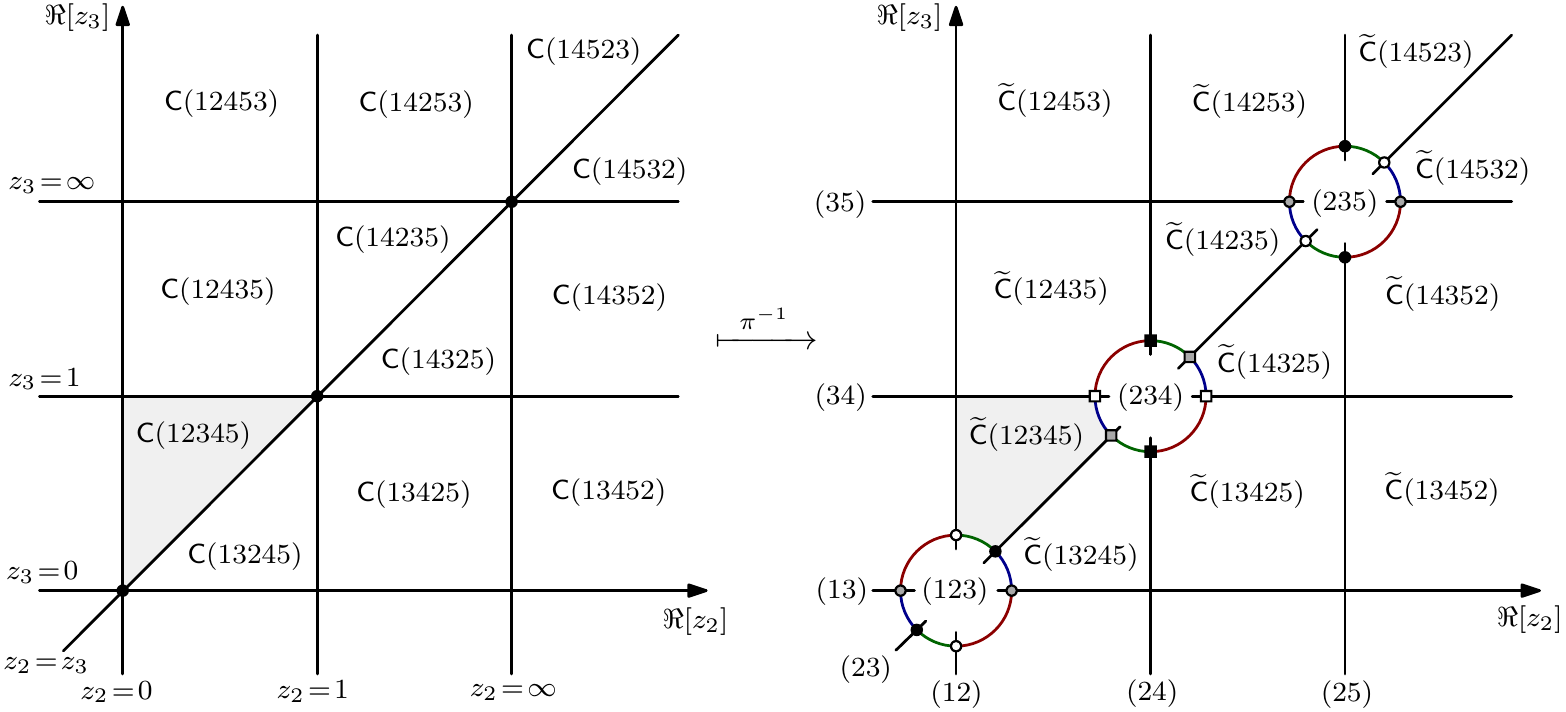}
\caption{\label{fig-blowup}Real slice of the moduli space of Riemann spheres with five punctures, $\mathcal{M}_{0,5}$, and its image under the blowup map $\pi^{-1}(\M_{0,5}) = \widetilde{\M}_{0,5}$.}
\end{figure}
In Figure~\ref{fig-blowup} we have illustrated the real section of $\M_{0,5}$, denoted by $\M_{0,5}(\mathbb{R})$, where the twisted cycles live before the blowup, as well as its image, $\widetilde{\M}_{0,5}(\mathbb{R})$. Note that in this representation we brought the point at infinity to a finite position for convenience. The resulting space is divided into twelve chambers separated by the singular lines. Each of the lines has an associated label corresponding to the exponent of the given zero in $u(z)$, or equivalently a phase factor that one picks up upon crossing the branch line. For example, the line defined by $\{z_2 - 1 = 0\}$ is labelled with $(24)$, since it corresponds to the factor $(z_2 - 1)^{s_{24}}$ in $u(z)$.

Blowup has been performed in the neighbourhood of the points $(0,0)$, $(1,1)$, and $(\infty,\infty)$, resulting in three new locally-defined curves labelled by $(123)$, $(234)$, and $(235)$. For example, near $(z_2,z_3) = (0,0)$ the blowup introduced a line $(123)$ corresponding to the factor ${z_2}^{s_{12}}(z_2 - z_3)^{s_{23}}{z_3}^{s_{13}}$, whose exponents sum up to $s_{12} + s_{23} + s_{13} = s_{123}$. Points labelled with the same symbol on these new curves are identified, and so are the segments between them. Each of the vertices can be uniquely specified as intersection of two lines, for instance the point $(z_2,z_3)=(0,1)$ is written as $(12) \cap (34)$.

Each of the twelve chambers after the blowup forms a polygon known as the \emph{associahedron}, $K_4$.\footnote{Historically, skeleton of the associahedron first appeared the doctoral thesis of Tamari in 1951 \cite{zbMATH03086819}. In 1963, Stasheff gave a realization of the associahedron as a cell complex in his work on associativity of H-spaces \cite{10.2307/1993608,10.2307/1993609}. For this reason, associahedron is often referred to as the \emph{Stasheff polytope}. Since then, many realizations of the polytope have been constructed, see, e.g., \cite{stasheff1997operads,LEE1989551,Loday2004,Hohlweg2007,doi:10.1093/imrn/rnn153,Ceballos2015}. For a historical account see, e.g., the introduction of \cite{Ceballos2015}. Connection to the moduli space $\widetilde{\M}_{0,n}$ was first found by Kapranov in \cite{1992alg.geom.10002K,KAPRANOV1993119}, and from a combinatorial point of view later by Devadoss \cite{Devadoss98tessellationsof,Devadoss_combinatorialequivalence}. It was also independently rediscovered by Yoshida in the context of hypergeometric functions \cite{Yoshida1996}.} Since twisted cycles $\tC(\beta)$ are defined as associahedra with a uniquely specified standard loading given by \eqref{standard-loading}, we will sometimes not distinguish between the two. Notice that the canonical twisted cycle $\tC(12345)$ is neighbouring all the other cycles, except for $\tC(14253)$, by either an edge or a vertex. Adjacency relations between different associahedra will be important in the evaluation of intersection numbers of twisted cycles. Regularized twisted cycles have a natural definition analogous to \eqref{reg-0-1}. For example, $\reg\, \tC(12345)$ in a small neighbourhood of the edge $(23)$ can be represented as:\footnote{Orientations of the cycles are naturally induced from the right-handed manifold $X$. More precisely, a multi-dimensional residue $\{ |g_i(z)| = \varepsilon \}$ is oriented by $\bigwedge_i d \arg g_i > 0$ \cite{yoshida2013hypergeometric}. For the purpose of this section, however, this fact will not be important as we will consider pairings between twisted cycles, for which possible signs due to orientations always cancel out.}
\be\label{Pochhammer-1}
\reg\, \tC(12345) \Big|_{(23)} \;=\; \left\{ \includegraphics[valign=c]{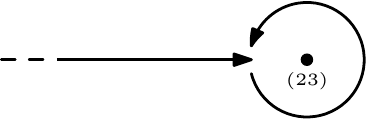} \right\} \w \left\{ \includegraphics[valign=c]{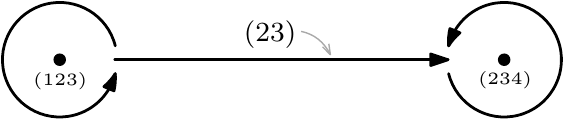}\right\},
\ee
where in both cases the horizontal direction is embedded in the real part of $\widetilde{\M}_{0,5}$ illustrated in Figure~\ref{fig-blowup}. Similarly, near the point $(12) \cap (123)$ we have:
\be\label{Pochhammer-2}
\reg\, \tC(12345) \Big|_{(12) \cap (123)} \;=\; \left\{ \includegraphics[valign=c]{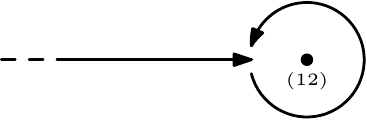} \right\} \w \left\{ \includegraphics[valign=c]{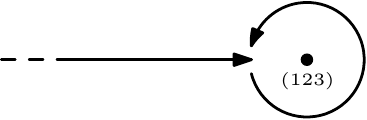} \right\}.
\ee

In general, one considers the Deligne--Mumford--Knudsen compactification \cite{Deligne1969,MathScand11642,MathScand12001,MathScand12002} of $\mathcal{M}_{0,n}$, denoted by $\widetilde{\M}_{0,n}$, in which the singular locus of $u(z)=0$ is normally crossing. It is given by the procedure called the \emph{minimal blowup} \cite{DeConcini1995,Mimachi2004}. It is known that real part of each chamber of $\widetilde{\M}_{0,n}$ is isomorphic to an associahedron $K_{n-1}$, see, e.g., \cite{KAPRANOV1993119,Devadoss98tessellationsof}. We will give properties of associahedra for general $n$ in Section~\ref{general-case}, after studying examples of intersection numbers for $n=4,5$, which will illustrate how they are connected to adjacency relations between different associahedra. Generalized Pochhammer contour for $K_{n-1}$ is defined analogously to \eqref{Pochhammer-1} and \eqref{Pochhammer-2}. We can now give a precise definition of the pairing between twisted cycles, which gives rise to the entries of $\mathbf{H}$.

\begin{definition}
Non-degenerate pairing between two twisted cycles is given by
\be\normalfont
\langle \mathsf{C}(\beta), \mathsf{C}(\gamma) \rangle := \reg\, \widetilde{\C}(\beta) \bullet \widetilde{\C}(\gamma),
\ee
where $\widetilde{\C}(\beta)$ and $\widetilde{\C}(\gamma)$ are two, not necessarily distinct, twisted cycles defined as a blowup of \eqref{string-cycles}. For simplicity we will use the same notation for $n=4$, even though in this case there is no need for a blowup.
\end{definition}

\subsection{\label{subsec-four-point}Four-point Examples}

\textsc{We start evaluation} of intersection numbers with the simplest example of $n=4$, which will illustrate most of the core ideas at play. We first consider the case of the self-intersection number of the twisted cycle $\C(1234)$. In order to avoid degeneracy on the interval $(\varepsilon, 1-\varepsilon)$, let us make a small deformation of one of the cycles into a sine-like curve, on top of the regularization \eqref{reg-0-1} for the other cycle:
\begin{align}
\langle \C(1234), \C(1234) \rangle &= \left( \reg\,\overrightarrow{(0,1)} \otimes z^s (1-z)^t \right) \bullet \left( \overrightarrow{(0,1)}_{\sin} \otimes z^s (1-z)^t \right) \tr
&= \includegraphics[valign=c]{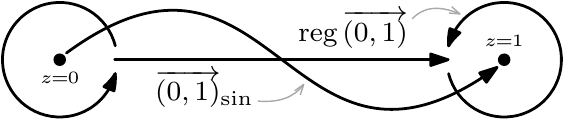}\tr
\label{self-int-4}&= -\frac{1}{e^{2\pi i s}-1} - 1 - \frac{1}{e^{2\pi i t}-1}.
\end{align}
There are three intersection points: near $z=0$, at $z=1/2$, and near $z=1$. The first contribution gives $1/(e^{2\pi i s}-1)$ from the definition \eqref{reg-0-1} times $-1$ arising from the topological intersection number \eqref{topological-int} for this relative orientation of the cycles. Similarly, the second factor is simply $-1$ due to the relative orientation at the intersection point at $z=1/2$. The final factor is $-1/(e^{2\pi i t}-1)$ times $+1$ due to the orientation.

Intersection numbers are independent of the deformation of the second twisted cycle \cite{MANA:MANA19941660122}. For instance, we can calculate it with one of the cycles deformed into a small upside-down sine curve to obtain:
\begin{align}
\langle \C(1234), \C(1234) \rangle &= \left( \reg\,\overrightarrow{(0,1)} \otimes z^s (1-z)^t \right) \bullet \left( \overrightarrow{(0,1)}_{-\!\sin} \otimes z^s (1-z)^t \right) \tr
&= \includegraphics[valign=c]{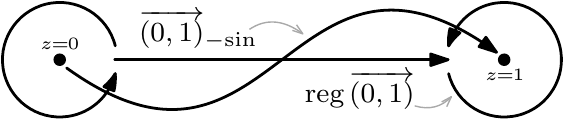}\tr
\label{self-int-4b}&= -\frac{e^{2\pi i s}}{e^{2\pi i s}-1} + 1 - \frac{e^{2\pi i t}}{e^{2\pi i t}-1}.
\end{align}
This time, the two end-point intersection numbers have picked up monodromy factors. Near $z=0$ we have $1/(e^{2\pi i s}-1)$ from the definition of \eqref{reg-0-1} times the phase factor $e^{2\pi i s}$ times $-1$ due to the orientation. Similar reasoning gives the contribution from the neighbourhood of $z=1$. The mid-point intersection has changed orientation and hence give the contribution $+1$. Another choice is a deformation into an arc curve:
\begin{align}
\langle \C(1234), \C(1234) \rangle &= \left( \reg\,\overrightarrow{(0,1)} \otimes z^s (1-z)^t \right) \bullet \left( \overrightarrow{(0,1)}_{\text{arc}} \otimes z^s (1-z)^t \right) \tr
&= \includegraphics[valign=c]{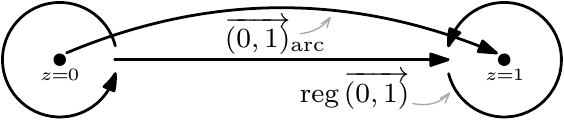}\tr
\label{self-int-4c}&= -\frac{1}{e^{2\pi i s}-1} - \frac{e^{2\pi i t}}{e^{2\pi i t}-1},
\end{align}
which receives contributions from only two intersection points, which we have analyzed before separately. Finally, we have the deformation:
\begin{align}
\langle \C(1234), \C(1234) \rangle &= \left( \reg\,\overrightarrow{(0,1)} \otimes z^s (1-z)^t \right) \bullet \left( \overrightarrow{(0,1)}_{-\!\text{arc}} \otimes z^s (1-z)^t \right) \tr
&= \includegraphics[valign=c]{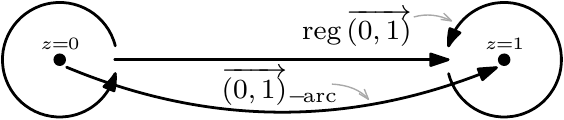}\tr
\label{self-int-4d}&= -\frac{e^{2\pi i s}}{e^{2\pi i s}-1} - \frac{1}{e^{2\pi i t}-1}.
\end{align}
It is straightforward to show that all the above calculations \eqref{self-int-4}, \eqref{self-int-4b}, \eqref{self-int-4c}, and \eqref{self-int-4d} give the same answer:
\be\label{C-1234-1234}
\langle \C(1234), \C(1234) \rangle = \frac{i}{2} \left( \frac{1}{\tan \pi s} + \frac{1}{\tan \pi t}\right).
\ee

Let us now turn to studying intersection numbers of two distinct twisted cycles. Intersecting $\C(1234)$ with $\C(2134)$ one obtains:
\begin{align}
\langle \C(1234), \C(2134) \rangle &= \left( \reg\,\overrightarrow{(0,1)} \otimes z^s (1-z)^t \right) \bullet \left( \overrightarrow{(-\infty,0)} \otimes (-z)^s (1-z)^t \right) \tr
&= \includegraphics[valign=c]{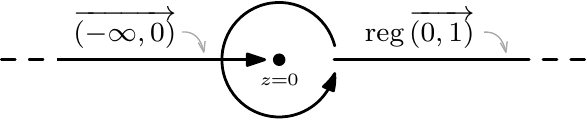}\tr
\label{C-1234-2134}&= \frac{e^{\pi i s}}{e^{2\pi i s} - 1} = \frac{i}{2}\left( -\frac{1}{\sin \pi s} \right).
\end{align}
This time, there is only one intersection point near $z=0$ giving $1/(e^{2\pi i s} - 1)$ times the monodromy factor $e^{\pi i s}$. The topological intersection number gives $+1$. In the remaining case of intersecting twisted cycles $\C(1234)$ and $\C(1324)$ we have:
\begin{align}
\langle \C(1234), \C(1324) \rangle &= \left( \reg\,\overrightarrow{(0,1)} \otimes z^s (1-z)^t \right) \bullet \left( \overrightarrow{(1,\infty)} \otimes (z)^s (z-1)^t \right) \tr
&= \includegraphics[valign=c]{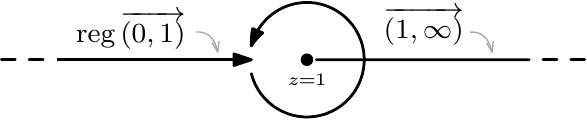}\tr
\label{C-1234-1324}&= \frac{e^{\pi i t}}{e^{2\pi i t} - 1} = \frac{i}{2}\left( -\frac{1}{\sin \pi t} \right),
\end{align}
which comes from the contribution near $z=1$. Note that in both cases \eqref{C-1234-2134} and \eqref{C-1234-1324}, the minus sign in the final answer can be tracked down to the fact that the two cycles involved induce opposite orientation on the boundaries at $z=0$ and $z=1$ respectively. For example, in the case \eqref{C-1234-1324} the boundary of the first cycle $\C(1234)$ is $\partial \overrightarrow{(0,1)} = \{1\} - \{0\}$, and for the second cycle $\C(1324)$ is $\partial \overrightarrow{(1,\infty)} = \{\infty\} - \{1\}$, so the boundaries at $z=1$ contributing to the intersection number have opposite orientations.

All of the remaining combinations of four-point twisted cycles can be obtained by relabelling the cases considered above. As we will see, the four-point cases \eqref{C-1234-2134} and \eqref{C-1234-1324} from this section will also serve as building blocks for intersection numbers for higher multiplicities.

\subsection{\label{subsec-five-point}Five-point Examples}

\textsc{Before moving on} to the most general case, let us study several five-point examples to gain some intuition about higher-dimensional twisted cycles. Without loss of generality we can fix the first twisted cycle to be $\widetilde{\C}(12345)$ and consider its intersections with other cycles. A representation of the real section of the moduli space $\widetilde{\M}_{0,5}$ was given in Figure~\ref{fig-blowup}, from which one can read off the adjacency of different twisted cycles.

We first consider the self-intersection number $\langle \C(12345), \C(12345) \rangle$. Kita and Yoshida showed \cite{MANA:MANA19941680111} that one can define a deformation of the second cycle similar to the sinusoid we used in the $n=4$ case \eqref{self-int-4}. The deformation is made in such a way that the self-intersection number receives contributions from neighbourhoods of the barycenter of the associahedron $K_4$ itself, barycenters of all its facets, and its vertices. Due to the regularization employed, locally near a vertex given by $H_1 \cap H_2$, where $H_1$ and $H_2$ are two facets, we receive a contribution $1/(e^{2\pi i s_{H_1}}-1)(e^{2\pi i s_{H_2}}-1)$, near the barycenter of each facet $H_1$ we get $1/(e^{2\pi i s_{H_1}}-1)$, and near the barycenter of the whole associahedron we get $1$. Explicitly, we have:
\begin{align}
\langle \C(12345), \C(12345) \rangle &= 1 + \frac{1}{e^{2\pi i s_{12}}-1} + \frac{1}{e^{2\pi i s_{23}}-1} + \frac{1}{e^{2\pi i s_{34}}-1} + \frac{1}{e^{2\pi i s_{45}}-1} + \frac{1}{e^{2\pi i s_{51}}-1} \tr
& \hspace{1.8em} + \frac{1}{\left(e^{2\pi i s_{12}}-1\right)\left(e^{2\pi i s_{34}}-1\right)} + \frac{1}{\left(e^{2\pi i s_{23}}-1\right)\left(e^{2\pi i s_{45}}-1\right)}\tr
& \hspace{1.8em} + \frac{1}{\left(e^{2\pi i s_{34}}-1\right)\left(e^{2\pi i s_{51}}-1\right)} + \frac{1}{\left(e^{2\pi i s_{45}}-1\right)\left(e^{2\pi i s_{12}}-1\right)}\tr
\label{self-int-5}& \hspace{1.8em} + \frac{1}{\left(e^{2\pi i s_{51}}-1\right)\left(e^{2\pi i s_{23}}-1\right)},
\end{align}
which is a sum over contributions from five vertices, five facets, and one polygon. See \cite{MANA:MANA19941680111} for details of the derivation. As a matter of fact, \eqref{self-int-5} admits an alternative, more concise form:
\begin{align}
\langle \C(12345), \C(12345) \rangle &= \left( \frac{i}{2} \right)^2 \Bigg( 1 + \frac{1}{\tan \pi s_{12}\, \tan \pi s_{34}} + \frac{1}{\tan \pi s_{23}\, \tan \pi s_{45}} + \frac{1}{\tan \pi s_{34}\, \tan \pi s_{51}}\tr
&\hspace{14.3em}+ \frac{1}{\tan \pi s_{45}\, \tan \pi s_{12}} + \frac{1}{\tan \pi s_{51}\, \tan \pi s_{23}} \Bigg).
\end{align}

Other cases can be obtained by reducing to previously calculated results. For example, two twisted cycles $\widetilde{\C}(12345)$ and $\widetilde{\C}(13245)$ share the facet $(23)$. Working locally in its neighbourhood, we can write the intersection number as a product of the one in the real direction orthogonal to $(23)$ times the boundary $(23)$ itself:
\begin{align}\label{C-12345-13245}
\langle \C(12345), \C(13245) \rangle &= \includegraphics[valign=c]{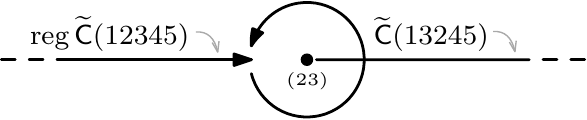} \tr
&= \frac{e^{\pi i s_{23}}}{e^{2\pi i s_{23}} - 1} \left( \reg\, \widetilde{\C}(12345) \Bigg|_{(23)} \!\!\!\bullet\; \widetilde{\C}(13245) \Bigg|_{(23)} \right)\tr
&= -\frac{i}{2} \frac{1}{\sin \pi s_{23}} \left( \includegraphics[valign=c]{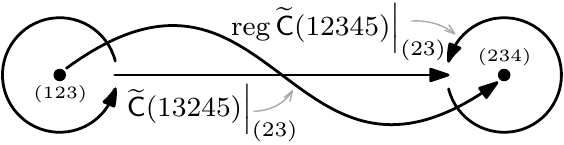} \right)\tr
&= - \left(\frac{i}{2}\right)^2 \frac{1}{\sin \pi s_{23}} \left( \frac{1}{\tan \pi s_{45}} + \frac{1}{\tan \pi s_{51}} \right).\end{align}
Other cases follow the same algorithm. The remaining topology to consider is that of the intersection of $\widetilde{\C}(12345)$ with $\widetilde{\C}(12453)$. This time, these two cycles intersect at a vertex point $(12) \cap (45)$. We first consider the real direction orthogonal to $(12)$ and then the intersection within $(12)$. Being careful about the orientations of the cycles we find:
\begin{align}\label{C-12345-12453}
\langle \C(12345), \C(12453) \rangle &= \includegraphics[valign=c]{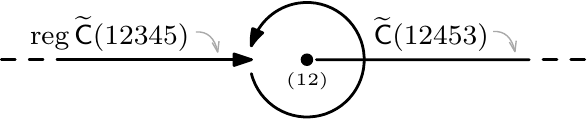} \tr
&= \frac{e^{\pi i s_{12}}}{e^{2\pi i s_{12}} - 1} \left( \reg\, \widetilde{\C}(12345) \Bigg|_{(12)} \!\!\!\bullet\; \widetilde{\C}(12453) \Bigg|_{(12)} \right)\tr
&= -\frac{i}{2} \frac{1}{\sin \pi s_{12}} \left( \includegraphics[valign=c]{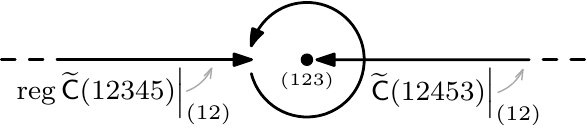} \right)\tr
&= -\left(\frac{i}{2}\right)^2 \frac{1}{\sin \pi s_{12}}\,\frac{1}{\sin \pi s_{45}}.
\end{align}
Here, in the second step, both cycles induce the same orientation on the facet $(123)$, giving an overall plus sign contribution. We will come back to the point of orientations induced on boundaries in the next section.

There is only one chamber in $\widetilde{\M}_{0,5}(\mathbb{R})$ which is not adjacent to $\tC(12345)$, pictured near the top of Figure~\ref{fig-blowup}. It corresponds to the twisted cycle $\tC(13524)$. Because the two cycles do not intersect, we have
\be
\langle \C(12345), \C(13524) \rangle = 0.
\ee
In general, if two chambers are not adjacent, the corresponding intersection number vanishes.

Having studied several examples, the general strategy for evaluating intersection numbers is now clear: after identifying the intersection face $F$ of the two cycles, we project onto facets containing $F$ one by one until the problem reduces to calculating self-intersection numbers for smaller twisted cycles. We will now prove that for general $n$ this procedure reproduces the results of \cite{Mizera:2016jhj} and can be streamlined using simple diagrammatic rules.

\subsection{\label{general-case}Proof of the General Case}

\textsc{Let us review the structure} of $\widetilde{\M}_{0,n}(\mathbb{R})$ in the general case. It is known that this space is divided into $(n-1)!/2$ chambers, each isomorphic to an $(n-3)$-dimensional associahedron $K_{n-1}$, see, e.g., \cite{Devadoss98tessellationsof,Devadoss_combinatorialequivalence}. The space is divided by $2^{n-1} - n - 1$ \cite{A000247} hyperplanes bounding the associahedra, including the ones at infinity. For concreteness, let us specialize to the associahedron defined with the identity permutation, $\I_n := (12\cdots n)$, which we denote by
\be
K_{n-1}(\I_n) := \overbar{\pi^{-1} \{ 0 < z_2 < z_3 < \cdots < z_{n-2} < 1 \}},
\ee
where the overbar means closure of this space, so that bounding facets are also included. Associahedra for other permutations are defined analogously. Twisted cycles on the blowup space $\widetilde{\M}_{0,n}$ are then given as interior of the associahedron loaded with the function $u(z)$ with an appropriate phase given by the standard loading \eqref{standard-loading}:
\be
\tC(\beta) = K^{\mathrm{o}}_{n-1}(\beta) \otimes \mathsf{SL}_\beta [u(z)].
\ee
This is a blowup of the definition \eqref{string-cycles}. Note that in this way we have identified only half of the $(n-1)!$ cyclically-inequivalent permutations. This is because each associahedron comes with an orientation induced from the right-handed space $\widetilde{\M}_{0,n}(\mathbb{R})$. The remaining half of the twisted cycles with reversed permutation $\beta^\intercal$, for example $\I_n^\intercal = (n \cdots 21)$, can be related to the $(n-1)!/2$ set by
\be
\tC(\beta^\intercal) = (-1)^{n}\, \tC(\beta),
\ee
which means that when $n$ is even, twisted cycles corresponding to associahedra with both orientations are identified.\footnote{An alternative is to consider an orientable double cover of $\widetilde{\M}_{0,n}(\mathbb{R})$, whose combinatorics has been studied in \cite{doi:10.1137/130947532,DEVADOSS201575}. Such space also has a known decomposition into $(n-2)!$ \emph{permutohedra} \cite{gelfand2009discriminants}, which in the language of amplitudes corresponds to the choice of a Del Duca--Dixon--Maltoni basis \cite{DelDuca:1999rs}.} In the odd case, the minus sign arises because of the change of integration region and gauge fixing condition for $\{z_1, z_{n-1}, z_n\}$. Note that a given permutation corresponds to a right-handed associahedron if the labels $\{z_1, z_{n-1}, z_n\}$ come in the canonical ordering, and to a left-handed one otherwise.

Following \cite{Mimachi2004}, we will label the $n(n-3)/2$ facets bounding the chamber $K_{n-1}(\I_n)$ with:
\be\label{hyperfaces}
(12\; \cdots\; i),\quad (23 \;\cdots\; i+1),\quad \ldots,\quad (n-i,\; n-i+1,\; \cdots,\; n-1) \qquad \text{for}\qquad i=2,3,\ldots,n-2.\quad
\ee
Each of these facets is a direct product of two other associahedra \cite{Devadoss98tessellationsof,brown2006multiple}. In our notation, for a face labelled by $(\omega)$ we have:
\be\label{hyperface-isomorphism}
(\omega) \;\cong\; K_{|\omega|}(\omega,\, -I) \times K_{n-|\omega|}(\I_n \!\setminus\! \omega,\, I),
\ee
where by $\I_n \!\setminus\! \omega$ we mean the complement of $\omega$ in $\I_n$. We have introduced a new label $I$, which can be thought of as corresponding to a puncture at infinity in both disk orderings. It inherits the phases from the Koba--Nielsen factor $u(z)$ corresponding particles in the set $\omega$, and hence the puncture with the label $I$ can be represented as having associated momentum
\be
k_I^\mu := \sum_{i \in \omega} k_i^\mu = - \sum_{i \in \I_n \!\setminus \omega} k_i^\mu.
\ee
Similarly, in the second associahedron, the label $-I$ has an associated momentum $-k_I^\mu$. Note that when $|\omega|=2$, $K_2$ is a point and hence the corresponding facets can be thought of as being isomorphic to a single associahedron.

Every codimension-$k$ face $F$ of $K_{n-1}(\I_n)$, for $k=1,2,\ldots,n-3$, can be uniquely specified as an intersection of $k$ facets $H_1,H_2,\ldots,H_k$ from the above set \eqref{hyperfaces}, i.e., $F=H_1 \cap H_2 \cap \cdots \cap H_k$. For each face $F$, the condition \emph{disjoint/contained} is satisfied for all pairs of facets $H_i = (ab\cdots)$ and $H_j=(cd\cdots)$. It says that their labels are either disjoint, $ab\cdots \cap cd\cdots = \varnothing$, or one contains the other, i.e., $ab\cdots \subset cd\cdots$ or $cd\cdots \subset ab\cdots$. For example, for the associahedron $K_{4}(\I_5)$ we have five facets:
\be
(12), \quad (23), \quad (34), \quad (123), \quad (234),
\ee
and its five codimension-$2$ faces, or vertices, are given by:
\be
(12) \cap (34), \quad (12) \cap (123), \quad (23) \cap (123), \quad (23) \cap (234), \quad (34) \cap (234),
\ee
which can be read off from the Figure~\ref{fig-blowup}.

It is known that two associahedra sharing a facet $H$ from the family \eqref{hyperfaces} have permutations that differ by a transposition of the labels of $H$ \cite{Yoshida1996}. For example, $K_4(12345)$ and $K_4(14325)$ share the facet $(234)$. Whenever two associahedra are adjacent through a codimension-$k$ face, they can be reached by a series of $k$ such transposition moves, up to a change of orientation. Conversely, if such a series does not exist, then two associahedra are not adjacent. For instance, $K_4(12345)$ and $K_4(12453)$ and adjacent through the vertex $(123) \cap (23)$, which means that one can pass between them by crossing through $(12)$ and $(123)$ in either order. At the same time, $K_4(12345)$ is not adjacent to $K_4(13524)$, since they do not share any facets, see Figure~\ref{fig-blowup}. As a generalization of \eqref{hyperface-isomorphism}, a codimension-$k$ face is isomorphic to a product of $k+1$ associahedra \cite{Devadoss98tessellationsof}.

Finally, let us remark on orientations that associahedra induce on their faces. For each facet $(\omega)$ from the set \eqref{hyperfaces}, $K_{n-1}(\I_n)$ and its neighbour induce the same orientation on $(\omega)$ if $|\omega|$ is odd, and an opposite one if $|\omega|$ is even \cite{Mimachi2004}. For example, $K_4(12345)$ and $K_4(14325)$ induce the same orientation on $(234)$, while $K_4(12345)$ and $K_4(13245)$ induce on opposite orientation on $(23)$, as can be seen from Figure~\ref{fig-blowup}. Orientations of the lower-dimensional faces can be deduced from applying the same rules recursively. For a combinatorial description of the boundary operator acting on associahedra see \cite{Yoshida1996}.

In combinatorics, associahedron $K_{n-1}$ is a convex polytope whose vertices correspond to all legal ways of inserting bracketings in an word of length $n-1$ in the following way. A pair of brackets is assigned to each of the $n(n-3)/2$ facets \eqref{hyperfaces}. Then, two facets meet if and only if their bracketings are compatible, i.e., satisfy the disjoint/contained condition. Repeating this procedure, every codimension-$k$ face $F$ corresponds to a correct insertion of $k$ pairs of brackets. The number of such faces is given by $T(n-2,k+1)$ \cite{A033282}. Another interpretation, originally due to Loday \cite{Loday2004}, is that of rooted trees with $n-1$ leaves, where a face $F$ is a tree with $k+1$ nodes. We illustrate this in Figure~\ref{fig-associahedra}.\footnote{For more visualizations of associahedra and tiling of moduli spaces see the work of Devadoss, e.g., \cite{Devadoss98tessellationsof,Devadoss_combinatorialequivalence,CARR20062155,devadoss2012shape}.}. We will think of the rooted trees as Feynman diagrams \cite{PhysRev.76.769}.
\begin{figure}[t]
	\centering
	\includegraphics{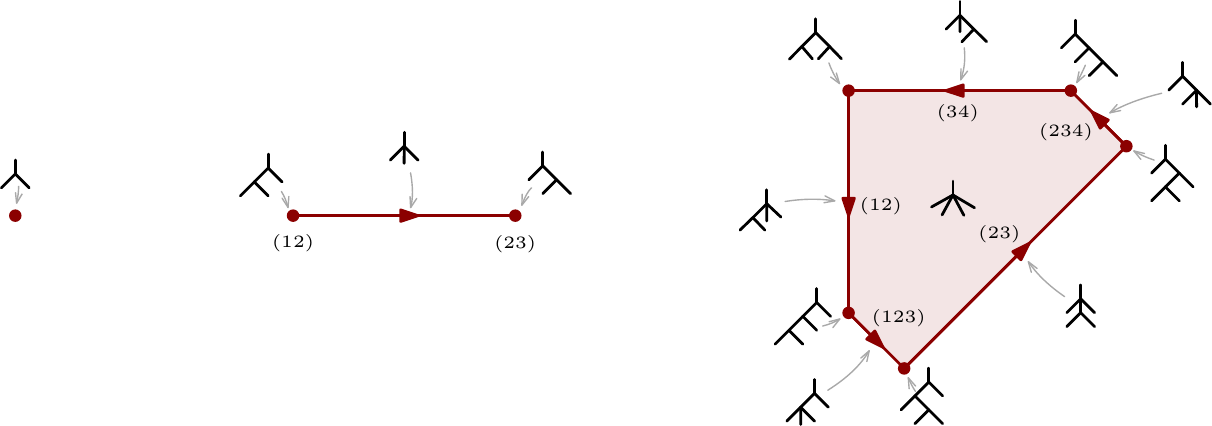}
	\caption{\label{fig-associahedra}Combinatorial interpretation of the associahedra $K_2(\I_3)$, $K_3(\I_4)$, and $K_4(\I_5)$ in terms of rooted trees. Each face has an associated factorization channel.}
\end{figure}

Let us now prove the statement that intersection numbers of twisted cycles give rise to the rules for evaluating the inverse KLT kernel $m_{\alpha'}(\beta|\gamma)$ given in \cite{Mizera:2016jhj}. We split the arguments into two parts. Firstly, we show that self-intersection numbers are proportional to diagonal amplitudes $m_{\alpha'}(\I_n|\I_n)$ in Lemma \ref{lemma-1}. Secondly, we show that the rules for evaluating arbitrary intersection numbers reduce to the self-intersections as building blocks according to the graphical rules of \cite{Mizera:2016jhj} in Theorem \ref{theorem-1}.

\pagebreak
\begin{lemma}\label{lemma-1}
The self-intersection number of the twisted cycle with identity permutation, $\tC(\I_n)$, is equal to the diagonal $\alpha'$-corrected bi-adjoint scalar amplitude $m_{\alpha'}(\I_n | \I_n)$ given in \cite{Mizera:2016jhj} up to a global factor,
\be\label{lemma-1-claim}
\la \C(\I_n), \C(\I_n) \ra = \left(\frac{i}{2}\right)^{n-3} \!\! m_{\alpha'}(\I_n | \I_n).
\ee
\end{lemma}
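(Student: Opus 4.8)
The plan is to establish the identity inductively in $n$, using the recursive structure of the associahedron together with the known diagrammatic rules for $m_{\alpha'}(\I_n | \I_n)$ from \cite{Mizera:2016jhj}. First I would recall the Kita--Yoshida construction of the deformation of the second copy of $\tC(\I_n)$ that realizes the self-intersection as a sum of contributions localized at the barycenter of $K_{n-1}$, the barycenters of all its faces, and its vertices; this is exactly the pattern visible in the $n=4$ case \eqref{C-1234-1234} and the $n=5$ case \eqref{self-int-5}. The key combinatorial input, established in Section~\ref{general-case}, is that codimension-$k$ faces $F = H_1 \cap \cdots \cap H_k$ of $K_{n-1}(\I_n)$ are in bijection with insertions of $k$ compatible bracketings, i.e. with Feynman diagrams having $k$ internal edges labelled by the poles $s_{H_1}, \ldots, s_{H_k}$, and that such a face is a product of $k+1$ smaller associahedra via \eqref{hyperface-isomorphism} and its higher-codimension generalization.

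The main step is to show that the contribution to $\reg\,\tC(\I_n) \bullet \tC(\I_n)$ localized near (the barycenter of) a codimension-$k$ face $F = H_1 \cap \cdots \cap H_k$ equals
\be
\prod_{j=1}^{k} \frac{1}{e^{2\pi i s_{H_j}}-1}
\ee
times a product, over the $k+1$ associahedron factors of $F$, of \emph{lower-point} self-intersection numbers of the same type. Near $F$ the moduli space looks locally like a product of one transverse disk per facet $H_j$ (on which the regularized Pochhammer contour contributes the monodromy factor $1/(e^{2\pi i s_{H_j}}-1)$, as in \eqref{Pochhammer-1}--\eqref{Pochhammer-2}), times the face $F$ itself; the induced orientations agree on the two copies precisely as described in Section~\ref{general-case}, so there is no extra sign. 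Iterating the factorization \eqref{hyperface-isomorphism} and applying the inductive hypothesis \eqref{lemma-1-claim} to each factor $K_{|\omega|}$, every such contribution is converted into a product of ``$\tfrac{i}{2}\,(\tan\cdot)^{-1}$''-building blocks. Summing over all faces $F$ and using the elementary identity
\be
\frac{1}{e^{2\pi i x}-1} = \frac{i}{2}\left(\frac{1}{\tan \pi x} - i\right), \qquad\text{equivalently}\qquad 1 + \frac{2}{e^{2\pi i x}-1} = \frac{i}{\tan\pi x},
\ee
I would resum the contributions into the compact trigonometric form, exactly as \eqref{self-int-5} collapses into its concise version. The final step is to match this resummation term-by-term against the diagrammatic expansion of $m_{\alpha'}(\I_n | \I_n)$ in \cite{Mizera:2016jhj}: each trivalent-and-higher Feynman diagram planar with respect to $\I_n$ corresponds to a face $F$ of $K_{n-1}(\I_n)$, and its $\alpha'$-dressed propagator factors $\tfrac{1}{2}\cot(\pi s_e)$ are reproduced precisely by the building blocks above, yielding the overall constant $(i/2)^{n-3}$.

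The base cases $n=3$ (trivial, $\la \C(\I_3),\C(\I_3)\ra = 1$) and $n=4$ (equation \eqref{C-1234-1234}) have already been computed. The main obstacle I anticipate is the careful bookkeeping of orientations: one must verify that at \emph{every} face $F$ the two deformed copies of $\tC(\I_n)$ induce compatible orientations so that all local topological intersection numbers come out $+1$, and that the orientation rules for lower-dimensional faces (obtained by applying the codimension-one rule recursively, as in Section~\ref{general-case}) are consistent with the orientations inherited by the associahedron factors in \eqref{hyperface-isomorphism}. A secondary subtlety is justifying that the Kita--Yoshida deformation can be chosen globally on all of $K_{n-1}$ so that the self-intersection genuinely localizes only at barycenters of faces and at vertices, with no spurious intersection points; this relies on the normal-crossing property of the divisor after the minimal blowup and on the product structure of the Pochhammer regularization near each stratum.
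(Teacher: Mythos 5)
Your proposal takes a genuinely different route from the paper's proof, but it has two gaps that I don't think can be closed as stated.

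First, the localization claim is over-counted. By Kita--Yoshida (and as written in the paper's \eqref{self-int-general}), the contribution from the neighbourhood of the barycenter of a single codimension-$k$ face $F = H_1 \cap \cdots \cap H_k$ is \emph{exactly} $\prod_{j=1}^k 1/(e^{2\pi i s_{H_j}}-1)$, with no additional factor. You instead claim this contribution equals that product \emph{times} a product of lower-point self-intersection numbers over the $k+1$ associahedron factors of $F$. If that were true, then summing over all faces $F$ would double-count: the lower-point self-intersection number on the factors of $F$ itself already includes contributions from barycenters of sub-faces of $F$, and those sub-faces also appear as separate terms $F'$ in your outer sum. A correct recursive decomposition is possible --- e.g.\ grouping all faces contained in a fixed facet $H$ reproduces $\frac{1}{e^{2\pi i s_H}-1}$ times the product of lower self-intersections on the factors of $H$ --- but that requires an inclusion--exclusion over facets, which is not what you describe.

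Second and more fundamentally, the ``resummation'' from the Pochhammer form to the $\tan$ form is the crux of the lemma and cannot be dismissed as a term-by-term application of $1/(e^{2\pi i x}-1) = \tfrac{i}{2}(\cot\pi x - i)$. The Kita--Yoshida sum is indexed by \emph{all} faces of $K_{n-1}(\I_n)$, i.e.\ by all rooted trees with vertices of arbitrary valency $\geq 3$ (Schr\"oder--Hipparchus count), each weighted by $\prod 1/(e^{2\pi i s}-1)$. By contrast, $m_{\alpha'}(\I_n|\I_n)$ in \cite{Mizera:2016jhj} is a sum over trees with only \emph{odd}-valent vertices, each vertex weighted by a Catalan number, with propagators $1/\tan\pi s$ (see the paper's remark comparing \texttt{A001003} and \texttt{A049124}). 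These are different index sets with different weights; there is no bijection of terms. The nontrivial combinatorics of converting one into the other is precisely what the paper's proof handles by introducing the auxiliary equations of motion \eqref{eom-1}--\eqref{eom-2}: solving for the vertex functional $V'[\varphi]$ that makes the two auxiliary QFTs produce identical tree amplitudes yields the Catalan generating function $V'[\varphi] = \tfrac12(1-\sqrt{1-4\varphi^2})$, which is the whole content. Your sketch replaces this with ``I would resum the contributions into the compact trigonometric form,'' which asserts exactly the statement that needs proving. Note also that the paper's argument is \emph{not} inductive in $n$; it compares the two Feynman-diagram sums for fixed $n$ via generating functions, so there is no need to propagate an inductive hypothesis through the face stratification at all.

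Minor: you also assert that all local topological intersection numbers ``come out $+1$.'' The paper's $n=4$ computation \eqref{self-int-4} gives $-1,-1,+1$ at the three intersection points, and the general Kita--Yoshida formula carries a global sign $(-1)^{n-1}$ --- so the orientation bookkeeping is more delicate than your sketch suggests, though this is a secondary issue compared to the two above.
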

\begin{proof}
Kita and Yoshida showed \cite{MANA:MANA19941680111} that for general $n$ the self-intersection number is given as a sum over contributions from barycenters of all the codimension-$(0,1,\ldots,n-3)$ faces of the associahedron. The contribution coming from a codimension-$k$ face $F = H_1 \cap H_2 \cap \cdots H_k$ is a product of $k$ terms $1/( e^{2\pi i s_{H_i}} - 1 )$ for every facet $H_i$ intersecting at $F$. More explicitly, we have:
\be\label{self-int-general}
\la \C(\I_n), \C(\I_n) \ra = (-1)^{n-1} \sum_{k=0}^{n-3} \;\sum_{F = H_1 \cap \cdots \cap H_k} \frac{1}{\left(e^{2\pi i s_{H_1}}-1\right) \left(e^{2\pi i s_{H_2}}-1\right) \cdots \left(e^{2\pi i s_{H_k}}-1\right)},
\ee
where we have also included the global sign \cite{MANA:MANA19941680111}. The term in the sum corresponding to $k=0$, i.e., the one coming from the barycenter of the whole associahedron is regarded as $1$. Examples of the evaluation of \eqref{self-int-general} were given in \eqref{self-int-4} for $n=4$, and in \eqref{self-int-5} for $n=5$.

Another way of thinking about the self-intersection number \eqref{self-int-general} is using the interpretation of the associahedron described by Feynman diagrams, as illustrated in Figure~\ref{fig-associahedra}. In this way, the sum \eqref{self-int-general} proceeds over all possible Feynman diagrams in an auxiliary theory described by the following rules. Every internal edge with momentum $p^\mu$ gives rise to a propagator $-1/(e^{i\pi p^2} - 1)$. The theory also has an infinite number of Feynman vertices with valency $p=3,4,5,\ldots$, each coming with a factor of $(-1)^{p-1}$. It is straightforward to check that the signs give rise to the correct prefactor in \eqref{self-int-general}. It is useful to construct equations of motion for such a theory. Let us use a normalization such that factors of $i/2$ from \eqref{lemma-1-claim} are absorbed into propagators and vertices:
\be\label{eom-1}
-\frac{i}{2}\left( e^{i \pi \Box} - 1 \right) \phi = \sum_{p=3}^{\infty} \left(-\frac{2}{i}\right)^{p-3} \phi^{p-1} = \frac{\phi^2}{1 - 2 i \phi}.
\ee
Here, $\phi$ is a real scalar matrix-valued field, and we have denoted $\Box := \partial_\mu \partial^\mu$. The left-hand side gives a normalized propagator $-2/(i(e^{i \pi p^2} - 1))$, while the terms on right-hand side give rise to $p$-valent vertices with factors $(-2/i)^{p-3}$. The additional minus signs are responsible for the factor of $(-1)^{n-3}$ in \eqref{self-int-general}. One can verify that such a change of normalization yields a global factor for an amplitude that counterweights the prefactor of \eqref{lemma-1-claim}.

It was also conjectured in \cite{Mizera:2016jhj} that the $\alpha'$-corrected bi-adjoint scalar amplitudes $m_{\alpha'}(\I_n | \I_n)$ can be expanded using another auxiliary theory with propagators given by $1/\tan \frac{\pi \Box}{2}$. The equation of motion then reads:
\be\label{eom-2}
\left(\tan \frac{\pi \Box}{2}\right) \varphi = V'[\varphi], 
\ee
where the scattering field is denoted by $\varphi$ and the functional $V'[\varphi]$ describes Feynman rules for the vertices. The goal is to prove this theory yields the same amplitudes as the one described by \eqref{eom-1}, with the function $V'[\varphi]$ generating Catalan numbers as proposed in \cite{Mizera:2016jhj}.

Scattering amplitudes can be obtained from both \eqref{eom-1} and \eqref{eom-2} using the following standard procedure, see, e.g., \cite{itzykson2012quantum}. One introduces a coupling constant $g$, such that when $g=0$ the theory becomes free, i.e., the scattering field has no self-interactions. It is then possible to Taylor expand the field around $g=0$, so that equations of motion can be solved iteratively. On the support of this solution, the left-hand sides of \eqref{eom-1} and \eqref{eom-2} become generating functions of integrated scattering amplitudes. It is important that the left-hand sides contain the inverse propagator, which is responsible for striping away the only remaining external propagator. The bottom line is that in order for both equations of motion to produce the same amplitudes, the right-hand sides of both \eqref{eom-1} and \eqref{eom-2} have to be equal on the support of both equations of motion.

Given this knowledge, let us find $V'[\varphi]$ that gives the same amplitudes in both theories. Equating right-hand sides of \eqref{eom-1} and \eqref{eom-2} gives:
\be\label{equating-eom}
\left(\tan \frac{\pi \Box}{2}\right) \varphi = -\frac{i}{2}\left( e^{i \pi \Box} - 1 \right) \phi \qquad \text{and} \qquad V'[\varphi] = \frac{\phi^2}{1 - 2 i \phi}.
\ee
Let us expand the tangent in the first equation to get:
\be
\left( e^{i \pi \Box} - 1 \right) \left( \frac{1}{i}\left( e^{i \pi \Box} + 1 \right)^{-1} \varphi + \frac{i}{2} \phi \right) = 0.
\ee
Since the term in the second brackets is not in the kernel of $\left( e^{i \pi \Box} - 1 \right)$ for generic momenta, it has to vanish. Formally multiplying the term in the second brackets by the operator $i \left( e^{i \pi \Box} + 1 \right)$ from the left, we obtain:
\be
0 = \varphi - \frac{1}{2} \left( e^{i \pi \Box} + 1 \right) \phi = \varphi - \frac{1}{2} \left( e^{i \pi \Box} - 1 \right) \phi - \phi.
\ee
We recognize the second term as being proportional to the left-hand side of the equation of motion \eqref{eom-1}. On the support of \eqref{eom-1} we have:
\be
\varphi - i V'[\varphi] - \phi = 0.
\ee
Finally, using the second equation in \eqref{equating-eom}, we can eliminate $\phi$ in order to get a constraint on the functional $V'[\varphi]$:
\be
V'[\varphi]^2 - V'[\varphi] + \varphi^2 = 0. 
\ee
This equation has two solutions for $V'[\varphi]$, however one of them has a constant independent of $\varphi$, which would not have an interpretation as a Feynman vertex in \eqref{eom-2}. The other solution is
\be
V'[\varphi] = \frac{1}{2} \left( 1 - \sqrt{1 - 4\varphi^2} \right) = \sum_{\substack{p=3\\ \text{odd}}}^{\infty} C_{(p-3)/2}\, \varphi^{p-1}
\ee
which is a generating function for the Catalan numbers $C_{(p-3)/2}$ \cite{A000108}. There is an infinite number of vertices of odd valency $p=3,5,7,9,\ldots$, each contributing to a factor $C_{(p-3)/2}$ equal to $1,1,2,5,\ldots$ respectively. This verifies the conjecture posed in \cite{Mizera:2016jhj} and concludes the proof of \eqref{lemma-1-claim}.
\end{proof}

\begin{remark}
Total number of terms in the $1/(e^{i \pi p^2} - 1)$ representation is given by the Schr\"oder--Hipparchus number \cite{A001003}. Total number of terms in the $1/\tan \frac{\pi \Box}{2}$ representation is given by the series \cite{A049124}. For $n \geq 4$ the latter is always smaller.
\end{remark}

\begin{theorem}\label{theorem-1}
The intersection number of two twisted cycles $\tC(\beta)$ and $\tC(\gamma)$ is equal to the $\alpha'$-corrected bi-adjoint scalar amplitude $m_{\alpha'}(\beta | \gamma)$ given in \cite{Mizera:2016jhj} up to a global factor,
\be\label{theorem-1-claim}
\la \C(\beta), \C(\gamma) \ra = \left(\frac{i}{2}\right)^{n-3} \!\! m_{\alpha'}(\beta | \gamma).
\ee
\end{theorem}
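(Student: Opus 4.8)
The plan is to run the \emph{project onto facets one at a time} procedure sketched at the end of Section~\ref{subsec-five-point} as an induction on $n$ whose base case is Lemma~\ref{lemma-1}. First I would dispose of the trivial case. If $K_{n-1}(\beta)$ and $K_{n-1}(\gamma)$ are not adjacent in the tiling of $\widetilde{\M}_{0,n}(\mathbb{R})$, their cells are disjoint and $\la \C(\beta), \C(\gamma)\ra = 0$; combinatorially, non-adjacency of the two associahedra is equivalent to the statement that $\beta$ and $\gamma$ admit no common compatible Feynman diagram, so the defining sum for $m_{\alpha'}(\beta|\gamma)$ from \cite{Mizera:2016jhj} is empty and $m_{\alpha'}(\beta|\gamma)=0$, whence \eqref{theorem-1-claim} holds. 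When the chambers do meet, their intersection $F := K_{n-1}(\beta) \cap K_{n-1}(\gamma)$ is a common face of some codimension $k \geq 0$. For $k=0$ one has $\beta = \gamma$ up to a possible orientation reversal, absorbed by $\tC(\beta^\intercal)=(-1)^{n}\tC(\beta)$ together with the matching symmetry of $m_{\alpha'}$, and the claim reduces to Lemma~\ref{lemma-1}.

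For $k \geq 1$, write $F = H_1 \cap \cdots \cap H_k$ with the $H_i=(\omega_i)$ drawn from the family \eqref{hyperfaces}; by the disjoint/contained condition the blocks $\omega_i$ are pairwise nested or disjoint, and $\gamma$ is obtained from $\beta$ by reversing them, so $K_{n-1}(\beta)$ and $K_{n-1}(\gamma)$ lie on opposite sides of each $H_i$. Choose any one facet $H_1$. Near it the blowup is locally a product, and the generalized Pochhammer regularization \eqref{Pochhammer-1}--\eqref{Pochhammer-2} writes each $\reg\widetilde{\C}$ as a small circle around $H_1$ weighted by $1/(e^{2\pi i s_{\omega_1}}-1)$, wedged with its restriction to $H_1$. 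As in \eqref{C-12345-13245}--\eqref{C-12345-12453}, the two normal circles intersect once to give $e^{\pi i s_{\omega_1}}/(e^{2\pi i s_{\omega_1}}-1) = -\tfrac{i}{2}(\sin\pi s_{\omega_1})^{-1}$ times a topological sign, and the computation descends to an intersection number inside $H_1$. Invoking the facet factorization \eqref{hyperface-isomorphism}, $H_1 \cong K_{|\omega_1|}(\omega_1,-I)\times K_{n-|\omega_1|}(\I_n\setminus\omega_1,I)$ with $k_I=\sum_{i\in\omega_1}k_i$, the restricted cycles become products: on the first factor they are reverses of one another, hence yield a self-intersection number handled by Lemma~\ref{lemma-1} up to a sign $(-1)^{|\omega_1|+1}$, while on the second factor they still differ by the remaining $k-1$ transpositions, so the induction hypothesis applies on a moduli space with fewer punctures. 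Iterating, after $k$ projections one obtains $k$ factors $-\tfrac{i}{2}(\sin\pi s_{\omega_i})^{-1}$, a product of self-intersection numbers of the $k+1$ associahedra into which $F$ decomposes — each evaluated by Lemma~\ref{lemma-1} — and an overall sign gathered from the orientations; a bookkeeping of the powers of $i/2$, using $\dim F = n-3-k$, reproduces the prefactor $(i/2)^{n-3}$.

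The last step is to identify this output with the empirical diagrammatic expansion of $m_{\alpha'}(\beta|\gamma)$ in \cite{Mizera:2016jhj}: there $m_{\alpha'}(\beta|\gamma)$ is assembled from one factor $(\sin\pi s_{\omega})^{-1}$ for every channel $\omega$ shared by $\beta$ and $\gamma$ and bounding their common face, multiplied by $k+1$ ``blocks'', each a diagonal amplitude of the kind computed in Lemma~\ref{lemma-1}, with overall sign $(-1)^{w(\beta|\gamma)+1}$. Since the nesting of the $\omega_i$ is precisely the propagator structure of the trees compatible with both orderings, and the $k+1$ associahedral factors of $F$ are precisely the $k+1$ blocks, the two expressions match term by term, which gives \eqref{theorem-1-claim}. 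I expect the genuine obstacle to be the sign bookkeeping: one must verify that the topological signs accumulated when peeling facets — governed by the rule that $K_{n-1}(\I_n)$ and its neighbour induce equal orientations on $(\omega)$ exactly when $|\omega|$ is odd — combine with the factors $(-1)^{n-1}$ from the self-intersection blocks into exactly the relative-winding sign $(-1)^{w(\beta|\gamma)+1}$ of \cite{Mizera:2016jhj}, and that the whole recursion is independent of the order in which the facets $H_i$ are removed, which follows from but must be checked against the disjoint/contained structure.
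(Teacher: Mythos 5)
Your route is the same as the paper's: peel facets of the intersection face $F$ one at a time via the local product form of the generalized Pochhammer contour, convert each normal direction into a factor $-\tfrac{i}{2}(\sin\pi s_{\omega})^{-1}$, use the facet isomorphism \eqref{hyperface-isomorphism} to split off a block, and terminate with Lemma~\ref{lemma-1}. But there is a genuine gap in the recursion step. You write ``Choose any one facet $H_1$'' and then assert that on the $K_{|\omega_1|}(\omega_1,-I)$ factor the two restricted cycles ``are reverses of one another, hence yield a self-intersection number,'' while on the complementary factor they differ ``by the remaining $k-1$ transpositions.'' Neither claim is true for an arbitrary choice. If some $H_j$ with $j\geq 2$ satisfies $\omega_j\subsetneq\omega_1$, then the flip $H_j$ acts \emph{inside} the contracted block, so the restriction of $\gamma$ to $\omega_1$ is not a reversal of the restriction of $\beta$ and the first factor is a nontrivial (not self-) intersection number; simultaneously, that flip is invisible on the complementary factor, so the second factor carries strictly fewer than $k-1$ flips. (For instance with $\beta=(123456)$, $H_1=(1234)$, $H_2=(23)$, the restriction of $\gamma$ to $\{1,2,3,4\}$ is $(4231)$, not the reversal $(4321)$.)

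The paper avoids this by imposing a minimality condition on the facet to be peeled: it picks $H$ so that every other $H_i$ either contains $H$ or is disjoint from $H$ (such an $H$ always exists because the $\omega_i$ are pairwise nested or disjoint). Only then is the $K_{|\omega_1|}$ factor a genuine self-intersection reducible to Lemma~\ref{lemma-1}, and only then do \emph{all} the remaining flips descend to the complementary factor so the inductive hypothesis applies there. Your argument as written can be repaired either by inserting this minimality clause, or by strengthening the inductive hypothesis so that \emph{both} factors are handled by induction rather than insisting that one factor be a self-intersection; but as stated the step does not go through. You are right that the sign bookkeeping (matching the accumulated $(-1)^{|\omega|-1}$ factors and the global orientation flip against $(-1)^{w(\beta|\gamma)+1}$) is the other point needing care, and there your sketch tracks the paper's treatment.
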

\begin{proof}
We will show that evaluation of intersection numbers gives rise to a recursion relation which is the same as the graphical rules found in \cite{Mizera:2016jhj}.

Let $\tC(\beta)$ and $\tC(\gamma)$ be two twisted cycles on $\widetilde{\M}_{0,n}$. A codimension-$h$ intersection face $F$ of the corresponding associahedra $K_{n-1}(\beta)$ and $K_{n-1}(\gamma)$ can be written as
\be\label{assoc-intersection}
F := K_{n-1}(\beta) \cap K_{n-1}(\gamma) = H_1 \cap H_2 \cap \ldots \cap H_h.
\ee
If $F = \varnothing$ then $\la \C(\beta), \C(\gamma) \ra = 0$, since the cycles are not adjacent. If $F = K_{n-1}(\beta) = K_{n-1}(\gamma)$, then necessarily $\tC(\beta) = \tC(\gamma)$ up to an orientation, which gives the self-intersection number $\pm \la \C(\beta), \C(\beta)\ra$ reducing to the case proven in Lemma \ref{lemma-1}.

Otherwise, let us first fix orientations of both twisted cycles to be the same. If a change of orientation was needed and $n$ is odd then the intersection number picks up a minus sign. Since $F$ is also a codimension-$h$ face belonging to both $K_{n-1}(\beta)$ and $K_{n-1}(\gamma)$, the labels of the facets in the set $\{ H_1, H_2, \ldots, H_h \}$ are necessarily pairwise disjoint/contained. Let us pick one such facet $H$, such that all other $H_i$ either contain $H$ or are disjoint with $H$. Permutation $\beta$ then splits naturally into two parts, $\beta = (H,\beta \!\setminus\! H)$, where $\beta \!\setminus\! H$ denotes the complement of $H$ in $\beta$. Similarly, for $\gamma$ we have $\gamma = (H, \gamma \!\setminus\! H)$.

Let us consider intersection of these two twisted cycles locally as a product of the one in the real orthogonal direction $H^\perp$ times the one within $H$. Since the intersection in $H^\perp$ reduces to the previously studied case \eqref{C-1234-2134}, we have:
\begin{align}
\la \C(\beta), \C(\gamma) \ra &= \frac{i}{2}\frac{(-1)^{|H|-1}}{\sin \pi s_H}\;  \la \C(\beta) \vert_H,\, \C(\gamma)\vert_H \ra \tr
\label{recursion-relation}&= \frac{i}{2} \frac{(-1)^{|H|-1}}{\sin \pi s_H}\;  \left< \C(H,\, -I),\, \C(H,\, -I) \right> \times \left< \C(\beta \!\setminus\! H,\, I),\, \C(\gamma \!\setminus\! H,\, I) \right>,
\end{align}
where we have used that the facet $H$ is a product of two smaller twisted cycles according to \eqref{hyperface-isomorphism}.
The new twisted cycles have loading naturally induced from the one of $\tC(\beta)$ and $\tC(\gamma)$. A potential minus sign arises since $\beta$ and $\gamma$ induce different orientations on $H$ when $|H|$ is even. The new label $I$ corresponds to momentum $k_I^\mu = \sum_{i \in H} k_i^\mu$. Since $|H| \geq 2$, the right-hand side of \eqref{recursion-relation} is a product of a self-intersection number and an intersection number for a cycle with smaller $n$. Thus, it provides a recursion relation that can be used to evaluate an arbitrary intersection number. Simple arithmetic reveals that the overall prefactor becomes $(i/2)^{n-3}$ after performing all the recursive steps.

It follows that the intersection number \eqref{theorem-1-claim} receives contributions from the intersection face $F = H_1 \cap H_2 \cap \cdots \cap H_h$ constructed out of the factors $1/\sin \pi s_{H_i}$ and $1/\tan \pi s_{H'}$ for $H' \subset H_i$.

Let us demonstrate how to conveniently calculate \eqref{recursion-relation} using graphical rules of \cite{Mizera:2016jhj}.\footnote{Similar combinatorial rules for studying adjacency of associahedra in $\widetilde{\M}_{0,n}(\mathbb{R})$ were described in \cite{Devadoss98tessellationsof} using an interpretation of the associahedron $K_{n-1}$ as a configuration space of triangulations of $n$-sided polygons.} We will do so in two steps, first giving the rule for the absolute value of \eqref{theorem-1-claim} and then its sign. One can arrive at the permutation $\gamma$ by a series of $h$ label flips $H_1, H_2, \ldots, H_h$, which are the same as in \eqref{assoc-intersection}, up to a final orientation change. Let us illustrate both permutations as circles connecting the labels $(1,2,\ldots,n)$ in the corresponding orders. We start with two orderings $\beta$, and perform a series of flips $H_1, H_2, \ldots, H_h$ on the second permutation to arrive at the ordering $\gamma$, possibly up to a global orientation change, as follows:
\be
\includegraphics[valign=c]{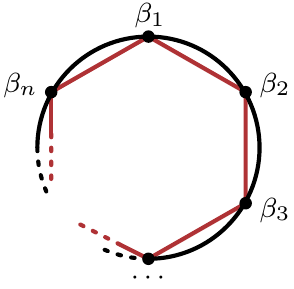} \quad\xmapsto{\;H\;}\quad \includegraphics[valign=c]{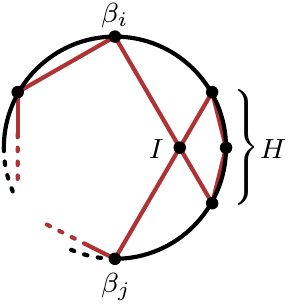} \quad\xmapsto{\{H_1,H_2,\ldots,H_h\} \setminus {H}}\quad \includegraphics[valign=c]{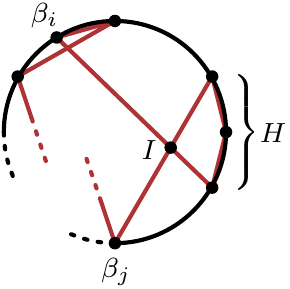}
\ee
Here we have arranged the flips so that $H$ used in \eqref{recursion-relation} is the first one performed. The remaining flips $\{H_1, H_2, \ldots, H_h\} \setminus \{H\}$ do not change the fact that there exists a crossing point $I$ when the labels in $H$ are brought arbitrarily close together. The rule is then to associate a self-intersection number $\left< \C(H,\, -I) , \C(H,\, -I) \right>$ to the polygon created by labels $(H,-I)$, a factor $1/\sin \pi s_H = 1/\sin \pi s_I$ to the intermediate edge, and an intersection number $\left< \C(\beta \!\setminus\! H,\, I) , \C(\gamma \!\setminus\! H,\, I) \right>$ to the remainder of the diagram. Repeating this procedure recursively, one obtains the full intersection number \eqref{theorem-1-claim}.

Finally, we prove that the sign of \eqref{theorem-1-claim} is given by $(-1)^{w(\beta|\gamma)+1}$, where $w(\beta | \gamma)$ is the relative winding number of the two permutations, following the prescription of \cite{Mizera:2016jhj}. Keeping track of signs in the above algorithm, we start with two identical permutations $\beta$, for which $w(\beta|\beta) = 1$ gives a plus sign, as expected. Then applying a single flip $H$ gives a sign $(-1)^{|H|-1}$. Similarly, the winding number changes by $|H|-1$, giving the same contribution. In the final step, a potential orientation flip would contribute $(-1)^{n}$, while the winding number changes by $n-2$, thus giving an identical sign contribution. This shows that $(-1)^{w(\beta|\gamma)+1}$ calculates the correct sign of \eqref{theorem-1-claim} and hence concludes the proof.
\end{proof}

\begin{example}
Let us illustrate this procedure in practice by calculating $\la \C(12345), \C(13245) \ra$, or equivalently $m_{\alpha'}(12345|13245)$. After drawing a circle diagram with both permutations $(12345)$ and $(13245)$, we find the dual of the polygon created by the red loop. This results in a diagram connecting two subamplitudes, $m_{\alpha'}(23,-I|23,-I)$ and $m_{\alpha'}(451I | 451I)$ with $k_I^{\mu} = k_2^\mu + k_3^\mu$, with the propagator given by $1/\sin \pi s_{23}$. The first subamplitude is equal to $1$, while the second one is a sum of two propagators given by tangents in the factorization channels $s_{34}$ and $s_{51}$, according to \eqref{C-1234-1234}. To summarize, we have:
\be
\includegraphics[valign=c]{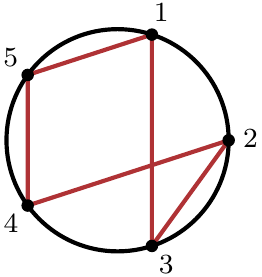} \;=\; \includegraphics[valign=c]{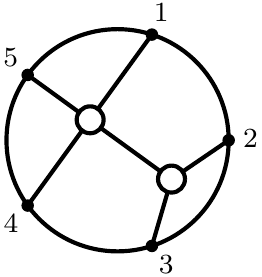} \;=\; -\left(\frac{i}{2}\right)^{2} \frac{1}{\sin \pi s_{23}} \left( \frac{1}{\tan \pi s_{45}} + \frac{1}{\tan \pi s_{51}} \right).
\ee
The minus sign arises because of the winding number $w(12345|13245) = 2$. The above result encodes the fact that the two associahedra $K_4(12345)$ and $K_4(13245)$ share the face $(23)$, as well as two vertices $(23)\cap (123)$ and $(23) \cap (234)$, see Figure~\ref{fig-blowup}.
\end{example}

\begin{example}
Let us consider another example by evaluating $\la \C(12345), \C(12453) \ra$, or equivalently $m_{\alpha'}(12345|12453)$. This time, the dual graph is composed of three subamplitudes, each of which is trivalent and hence contributes the contact term $1$. The two propagators in the $s_{12}$ and $s_{45}$ channels are given by sine factors, as follows:
\be
\includegraphics[valign=c]{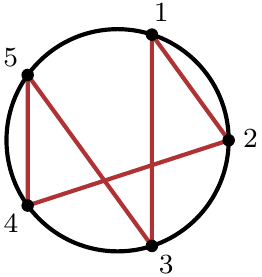} \;=\; \includegraphics[valign=c]{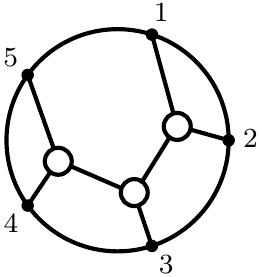} \;=\; -\left(\frac{i}{2}\right)^{2} \frac{1}{\sin \pi s_{12}}\, \frac{1}{\sin \pi s_{45}}.
\ee
The winding number is $w(12345|12453) = 2$, giving an overall minus sign. Since there is only one trivalent graph contributing, it means that the associahedra $K_4(12345)$ and $K_4(12453)$ share only a single vertex $(12) \cap (123)$, see Figure~\ref{fig-blowup}.
\end{example}

\begin{example}
Next, let us consider a six-point example of $\la \C(123456), \C(124365) \ra$, or equivalently $m_{\alpha'}(123456|124365)$. The dual diagram reduces to three subamplitudes, two of which give $1$, while the third contributes a four-point self-intersection number \eqref{C-1234-1234}. Recall that self-intersection numbers themselves are given by diagrams with vertices of odd valency and propagators built out of tangent functions \cite{Mizera:2016jhj}. In our case, we have:
\be
\includegraphics[valign=c]{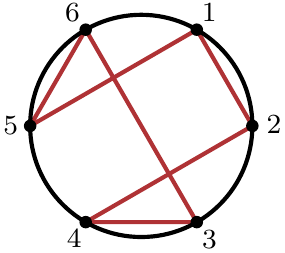} \;=\; \includegraphics[valign=c]{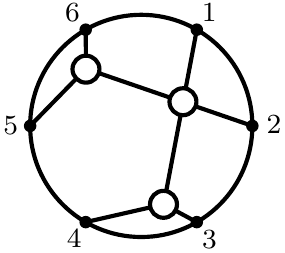} \;=\; \left(\frac{i}{2}\right)^{3} \frac{1}{\sin \pi s_{34}}\, \frac{1}{\sin \pi s_{56}} \left( \frac{1}{\tan \pi s_{12}} + \frac{1}{\tan \pi s_{234}} \right).
\ee
The plus sign arises because $w(123456|124365)=3$. The above answer also encodes the fact that the two associahedra $K_5(123456)$ and $K_5(124365)$ share the vertices $(34) \cap (1234) \cap (12)$ and $(34) \cap (1234) \cap (234)$, as well as the codimension-$2$ edge $(34) \cap (1234)$ between them.
\end{example}

\begin{example}
Finally, let us consider a $12$-point example of the intersection number of twisted cycles $ \tC(123456789,10,11,12)$ and $\tC(127856,11,12,9,10,34)$. For these two permutations the dual diagram is trivalent and hence given as a product of propagators of the sine type as follows:
\begin{align}
\includegraphics[valign=c]{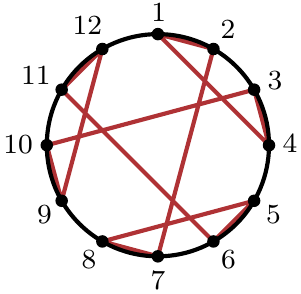} \;=\; \includegraphics[valign=c]{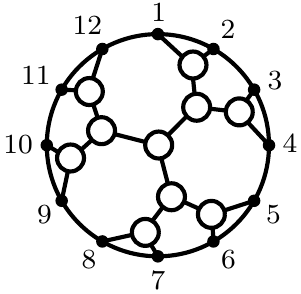} \;=\;& -\left(\frac{i}{2}\right)^{9} \frac{1}{\sin \pi s_{12}}\, \frac{1}{\sin \pi s_{34}}\, \frac{1}{\sin \pi s_{56}}\, \frac{1}{\sin \pi s_{78}}\, \frac{1}{\sin \pi s_{9,10}}\tr[-3em]
&\quad\;\;\times \frac{1}{\sin \pi s_{11,12}}\, \frac{1}{\sin \pi s_{1234}}\, \frac{1}{\sin \pi s_{5678}}\, \frac{1}{\sin \pi s_{9,10,11,12}}.
\end{align}
The winding number equals to $4$, which gives an overall minus sign. The corresponding nine-dimensional associahedra intersect only at a single vertex $(12) \cap (34) \cap (56) \cap (78) \cap (9,10) \cap (123456789,10) \cap (1234) \cap (5678) \cap (12345678)$ in the moduli space. For more examples of how to evaluate $m_{\alpha'}(\beta | \gamma)$ we refer the reader to \cite{Mizera:2016jhj}.
\end{example}

In the field theory limit, $\alpha' \to 0$, only the faces of maximal codimension contribute. In other words, a given intersection number reduces to a sum over trivalent diagrams. Since the intersection number has only terms coming from the intersection face, $F = K_{n-1}(\beta) \cap K_{n-1}(\gamma)$, these diagrams are necessarily planar with respect to both orderings, $\beta$ and $\gamma$. This gives rise to the usual definition of the bi-adjoint scalar double-partial amplitudes $m(\beta | \gamma)$ \cite{BjerrumBohr:2012mg,Cachazo:2013iea}.

In general, since each facet of a given associahedron can be written as a product of lower-dimensional associahedra according to \eqref{hyperface-isomorphism}, the corresponding scattering amplitude factors into two lower-point amplitudes connected by a label $I$. Physically, this is the statement of \emph{unitarity}. Since the regularized twisted cycles are given by the generalized Pochhammer contour, near the facet it receives the contribution of $1/(e^{2\pi i s_I} - 1)$, which contains an infinite number of simple poles at $s_I = 0, \pm 1, \pm 2, \ldots$, allowing propagation of massless, massive, and tachyonic modes. Physically, this corresponds to the statement of \emph{locality}. Both of these properties are associated to twisted cycles. In the case of pairings between twisted cycles and twisted cocycles, such as the ones giving rise to open string amplitudes \eqref{expansion}, it is the role of the cocycle to select if a given factorization channel is utilized or not.

\pagebreak
\section{\label{sec-conclusion}Conclusion}

\textsc{In this work we have shed} new light on the Kawai--Lewellen--Tye relations \cite{Kawai:1985xq}. By applying the tools of twisted de Rham theory, we have shown that they follow from the underlying algebro-topological identities known as the \emph{twisted period relations}. On the way, we have formulated tree-level string theory amplitudes in a way that makes connections to combinatorics and topology. In particular, we have explored the relation to the polytope called the \emph{associahedron}. We have shown that the inverse of the KLT kernel can be computed from the knowledge of how associahedra intersect one another in the moduli space. From this perspective, the \emph{inverse} of the KLT kernel appears to be a more fundamental object than the kernel itself, in both string and field theory. Introduction of twisted de Rham theory in the study of string integrals opens new directions not only for the KLT relations, but also scattering amplitudes in a more general setting.

Since the formalism of twisted de Rham theory applies to a broad spectrum of topological spaces and multi-valued functions, one may wonder about generalizations of the calculations presented in this work to other cases. Indeed, the most natural extension is to consider higher-genus amplitudes in string theory. This direction looks particularly promising in the light of the recent analysis of monodromy properties of higher-genus string integrals by Tourkine and Vanhove \cite{Tourkine:2016bak}, as well as Hohenegger and Stieberger \cite{Hohenegger:2017kqy}. A related construction of KLT relations for one-loop field theory integrands has been recently given by He, Schlotterer, and Zhang \cite{He:2016mzd,He:2017spx}. We leave the study of these intriguing questions for future research. Aside of string theory, intersection numbers of twisted cycles have been previously calculated in the context of conformal field theories by Mimachi and Yoshida \cite{Mimachi2003,Mimachi2004} in order to explain identities between correlation functions found by Dotsenko and Fateev \cite{Dotsenko:1984ad,Dotsenko:1984nm}. It would be interesting to investigate whether these identities can be further generalized, perhaps even to the case of the conjectured conformal field theory on the null boundary of asymptotically-flat spacetimes \cite{Strominger:2017zoo}.

The Deligne--Mumford--Knudsen compactification of the moduli space $\mathcal{M}_{0,n}$ can be constructed as a Chow quotient of the Grassmannian $\text{Gr}(2,n)$ \cite{1992alg.geom.10002K}. It is also known that planar amplitudes in the $\mathcal{N}=4$ super Yang--Mills (SYM) theory---which are the field theory limit of superstring amplitudes---can be defined on the positive Grassmannian \cite{ArkaniHamed:2012nw}. Additionally, they also have a description in terms of a geometric object found by Arkani-Hamed and Trnka \cite{Arkani-Hamed:2013jha,Arkani-Hamed:2013kca}. It is natural to expect it to be related to the associahedron described in this work. The problem is somewhat akin to the so-called Grassmanian dualities \cite{ArkaniHamed:2009dg} with an additional complication of the $\alpha'$ corrections. The main obstacle comes from the fact that our results are largely independent of the spacetime dimension and precise theory under consideration, as it only suffices that its amplitudes have a BCJ representation in terms of Z-integrals. Associahedra also appear in a slightly different context of cluster algebras used to study $\mathcal{N}=4$ SYM amplitudes \cite{Golden:2013xva,Golden:2014xqa}.\footnote{Some steps in the direction of applying twisted de Rham theory to $\mathcal{N}=4$ SYM amplitudes have been taken in \cite{Abe:2015ucn}.}

One may then wonder about seeing the field theory limit from a different point of view. Indeed, one such possibility coming naturally from twisted de Rham theory is to consider the dual twisted homology and cohomology defined with the multi-valued function $u^{-1}(z)$ instead of $\overbar{u(z)}$. This corresponds to reversing the string parameter $\alpha'$ in one of the amplitudes. An equivalent procedure has been recently studied in the context of string theory by Siegel and collaborators \cite{Siegel:2015axg,Huang:2016bdd,Leite:2016fno,Li:2017emw}. In particular, Huang, Siegel, and Yuan showed that considering a \emph{chiral} version of KLT relations, one obtains the field theory amplitude as follows \cite{Huang:2016bdd}:
\be\label{chiral-KLT-relation}
\mathcal{A}^{\text{GR}} = \!\!\sum_{\beta \in \mathcal{B},\, \gamma \in \mathcal{C}}\!\! \mathcal{A}^{\text{open}}(\beta)\, m_{\alpha'}^{-1}(\beta | \gamma)\, \mathcal{A}^{\text{open}}_{\text{chiral}}(\gamma).
\ee
where we have used the same notation as in \eqref{KLT-relation}, except $\mathcal{A}^{\text{open}}_{\text{chiral}}(\gamma)$ denotes a string amplitude with a flipped spacetime signature $\eta_{\mu\nu} \to -\eta_{\mu\nu}$, or equivalently a replacement $\alpha' \to -\alpha'$. The equation \eqref{chiral-KLT-relation} is in fact equivalent to the original twisted Riemann period relation found in \cite{cho1995}. In this interpretation, the field theory amplitude $\mathcal{A}^{\text{GR}}$ is an intersection number of twisted \emph{cocycles}, given by the pairing:
\be
H^{n-3}_c(X,\mathcal{L}_\omega) \times H^{n-3}(X,\mathcal{L}_\omega^\vee) \;\longrightarrow\; \mathbb{C},
\ee
where the dual system is defined through $\mathcal{L}_\omega^\vee = \mathcal{L}_{-\omega}$ and one of the cohomologies is with compact support \cite{cho1995}. Methods for evaluation of this pairing have been given in \cite{zbMATH03996010,cho1995,cho-private-note,matsumoto1998,Matsumoto1998-2,Ohara98intersectionnumbers}. In particular, Matsumoto gave a simple proof \cite{matsumoto1998} of the fact that these intersection numbers localize on the intersections of the singular locus of $u(z)$. Translating to the string theory case, with the bases of twisted cocycles given by \eqref{PT-def}, the intersection numbers are given by a sum of multi-dimensional residues around the vertices of \emph{all} the associahedra in the moduli space $\widetilde{\mathcal{M}}_{0,n}$. A given vertex contributes if only if the differential form $\PT(\beta) \w \overbar{\PT(\gamma)}$ has a double pole at the place corresponding to this vertex. Since both differential forms are logarithmic, double poles arise only if the two Parke--Taylor factors share a factorization channel. In this way, the sum over all residues receives contributions only from the vertices laying on the intersection $K_{n-1}(\beta) \cap K_{n-1}(\gamma)$ in the moduli space. This gives rise to the bi-adjoint scalar amplitude $m(\beta | \gamma)$ as a sum over all Feynamn diagrams compatible with both permutations $\beta$ and $\gamma$. It is also possible to consecutively apply the global residue theorem (GRT) \cite{griffiths2014principles}---in a way analogous to the one considered by Dolan and Goddard \cite{Dolan:2013isa}---in order to obtain a dual description that localizes on the residues around the \emph{scattering equations} $\bigwedge_{i=2}^{\!n-2} \{|E_i| = \varepsilon\}$. In pictures, the duality translates between different types of residues as follows:
\be
\includegraphics[scale=1]{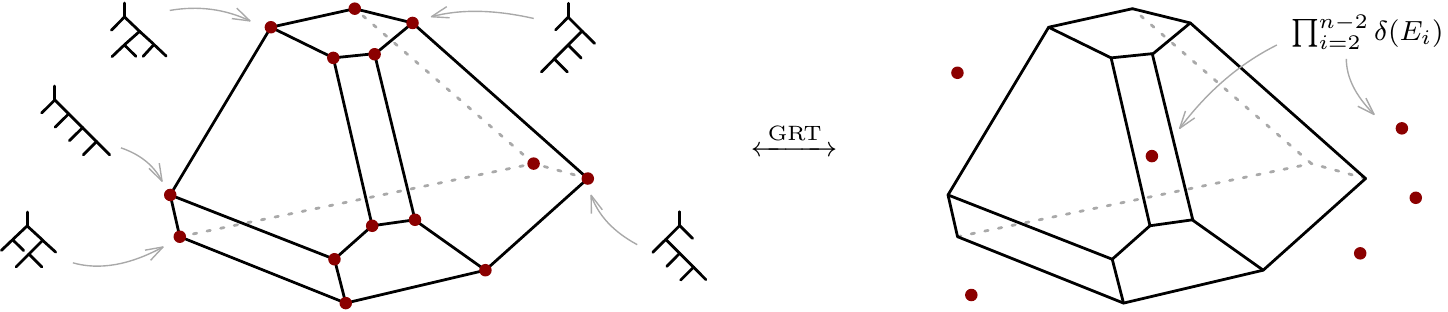}
\ee
It is known that when all the exponents of the Koba--Nielsen factor \eqref{Koba-Nielsen} are positive, there are $(n-3)!$ solutions of the constraint $\prod_{i=2}^{n-2} \delta(E_i)$ laying in the hypercube $(0,1)^{n-3} \subset \widetilde{\mathcal{M}}_{0,n}(\mathbb{R})$ with one solution per associahedron \cite{Cachazo:2016ror}. Of course, the advantage of this approach is the fact that the residues are computed far away from the faces, so no blowup is necessary for explicit computations. The resulting formula for the intersection number of twisted cocycles reads, up to normalization factors:
\be\label{CHY}
\oint_{\bigwedge_{i=2}^{\!n-2} \{|E_i| = \varepsilon\}} \!\!\! \frac{\PT(\beta) \w \overbar{\PT(\gamma)}}{\prod_{i=2}^{n-2} E_i}
\ee
This is the so-called Cachazo--He--Yuan formula \cite{Cachazo:2013hca,Cachazo:2013iea} for the bi-adjoint scalar amplitude $m(\beta | \gamma)$. Other amplitudes, such as $\mathcal{A}^{\text{GR}}$, can be expanded in the basis of $m(\beta | \gamma)$, as in \eqref{expansion}. Notice that the field theory limit is obtained in the limit of vanishing twist $\omega = \alpha' \sum_{i=2}^{n-2} E_i\, dz_i$. It is tempting to suggest that there could exist an alternative derivation of the CHY formulae by imposing a constraint on $\omega$, perhaps in relation to Morse theory discussed in \cite{aomoto2011theory}. It would also be interesting to find out how it relates to other approaches of connecting CHY formalism and string theory, see, e.g., \cite{Berkovits:2013xba,Bjerrum-Bohr:2014qwa,Casali:2016atr,He:2016iqi,Li:2017emw,Mizera:2017sen}, in particular in the context of ambitwistor strings \cite{Mason:2013sva}. We leave the study of these connections for future investigations.

Also in the field theory limit, there exists another instance of relations between gravity and Yang--Mills amplitudes known as the \emph{BCJ double-copy} \cite{Bern:2008qj}. In this context, Carrasco studies the space of trivalent graphs and its relation to associahedra and permutohedra \cite{Carrasco-talk}. It would be interesting to see how this story fits with ours. Here, twisted cycles play the role of \emph{colour} factors, while twisted cocycles play the role of \emph{kinematics} factors. Moreover, blowup of the moduli space $\widetilde{\mathcal{M}}_{0,n}$---or its double cover \cite{doi:10.1137/130947532}---provides a natural way of understanding the configuration space of trivalent diagrams as its limit. We hope this language could contribute to deeper understanding of \emph{colour-kinematics} duality and its connection to KLT relations, particularly at higher loops.

Last but not least, it is important to understand questions arising from this work on the level of rigour of mathematics. Such issues involve, for example, study of the Hodge structure of the intersection form for twisted cohomology groups \eqref{intersection-form}, finding an algebro-topological derivation of the form of the circuit matrix given in \eqref{circuit-matrix}, or study of the formula \eqref{CHY} in the context of Morse theory. From the point of view of combinatorics, further study of the moduli spaces of marked bordered Riemann surfaces of higher genus and their tilings, along the lines of \cite{Liu:587181,devadoss2011deformations}, is important in understanding higher-loop generalizations of KLT relations. It is also known that string amplitudes have a rich motivic structure, see, e.g., \cite{Schlotterer:2012ny,Broedel:2013tta,Stieberger:2016xhs}. In particular, J-integrals \eqref{J-intergral} and Z-integrals \eqref{Z-theory-amplitude} can be related by \cite{Stieberger:2014hba}:
\be
J(\beta | \gamma) = \textsf{sv}\left[ Z_\beta(\gamma) \right],
\ee
where $\textsf{sv}$ is the \emph{single-valued projection} introduced by Brown \cite{BROWN2004527,Brown:2013gia}. This relation bears resemblance to the twisted period relations for the basis of twisted cycles and cocycles \eqref{twisted-period-relations-Z}. It would interesting to study the connection between motivic structure and twisted de Rham theory in this context.

\begin{appendices}

\titleformat{name=\section}[display]
{\normalfont}
{\footnotesize\textsc{Appendix \thesection}}
{0pt}
{\Large\bfseries}
[\vspace{-10pt}\color{Maroon}\rule{\textwidth}{.6pt}]

\pagebreak
\section{\label{app-field-theory-limit}Field Theory Limit from the Generalized Pochhammer Contour}

\textsc{As was shown} in Section~\ref{sec-klt-as-twisted-period-relations}, tree-level open string partial amplitudes can be understood as pairings between twisted cycles and twisted cocycles. In this appendix, we show how to obtain its field theory limit, $\alpha' \to 0$, by utilizing the generalized Pochhammer contour and blowup of the moduli space described in Section~\ref{subsec-regularization}.

Recall that open string amplitudes can be expanded in the basis of Z-theory amplitudes \eqref{Z-theory-amplitude}. They in turn are given by the pairing:
\be\label{cycle-cocycle-pairing}
H_{n-3}(X,\mathcal{L}_\omega) \times H^{n-3}(X,\nabla_\omega) \;\longrightarrow\; \mathbb{C},
\ee
denoted by $Z_{\beta}(\gamma) = \la \C(\beta), \PT(\gamma) \ra$, using the basis of twisted cycles \eqref{string-cycles} and cocycles \eqref{PT-def} for string amplitudes. We also employ the regularization $\reg\, \tC(\beta)$ in order to make twisted cycles compact, and work on the blowup of the moduli space, $\widetilde{\M}_{0,n}$. Notice that the information about the factors of $\alpha'$ of $Z_{\beta}(\gamma)$ is entirely contained in the regularized twisted cycle $\reg\, \tC(\beta)$. In order to take the field theory limit, let us count the powers of $\alpha'$ contributing to different pieces of the generalized Pochhammer contour based on the associahedron $K_{n-1}(\beta)$.

Each face $F$ of codimension $k$ can be written as $F = H_1 \cap H_2 \cap \cdots \cap H_k$. Near each facet $H_i$, $\reg\, \tC(\beta)$ picks up a factor $1/(e^{2\pi i \alpha' s_{H_i}} - 1)$, which in the $\alpha' \to 0$ limit scales as $1/\alpha'$. We conclude that the string integral in the $\alpha' \to 0$ limit receives leading contributions from the faces $F$ of maximal codimension $n-3$, or in other words, vertices of the associahedron $K_{n-1}(\beta)$. Since all the singularities of the string amplitude are encapsulated in the choice of the generalized Pochhammer contour, and the integrals to be performed are finite when $\alpha' \to 0$, we can take  $u(z) \to 1$ in the same limit. To summarize, we have:
\be\label{field-theory-limit}
\lim_{\alpha' \to 0} \la \C(\beta), \PT(\gamma) \ra \;=\; \frac{1}{(2\pi i \alpha')^{n-3}} \! \sum_{v = H_1 \cap \cdots \cap H_{n-3}} \frac{1}{\prod_{i=1}^{n-3} (\pm s_{H_i})} \!\! \oint\limits_{\substack{|H_{i}|=\varepsilon\\ i=1,\ldots,n-3}} \!\!\!\!\PT(\gamma).
\ee
where the sum proceeds over all the Catalan number $C_{n-2}$ \cite{A000108} of vertices $v$ of $K_{n-1}(\beta)$. The integrals are performed along an appropriately oriented tubular neighbourhood of each vertex $v$.\footnote{For a reference on the computation of multi-dimensional contour integrals see, e.g., Chapter 5 of \cite{griffiths2014principles}.}

Let us work out explicit examples of the evaluation of \eqref{field-theory-limit}. One needs to be extra careful about sign factors coming from orientation induced by the associahedron on the vertices. Let us illustrate this fact for $n=4$. In the $\alpha' \to 0$ limit, the regularized twisted cycle defined in \eqref{reg-0-1} becomes
\be
\lim_{\alpha' \to 0} \reg \overrightarrow{(0,1)} = \frac{1}{\alpha'} \left( \frac{S(\varepsilon,0)}{2\pi i s} - \frac{S(1-\varepsilon,1)}{2\pi i t} \right).
\ee
Recall that we use $S(a,z)$ to denote a positively-oriented circular contour starting at $a$ and with a centre at $z$. The contours around the two vertices of $K_{3}(\I_4)$ come with different signs due to different orientations induced from $\overrightarrow{(0,1)}$. Let us now evaluate \eqref{field-theory-limit} for a four-point example $\la \C(1234), \PT(1234) \ra$. From the pole around $z=0$ we obtain:
\be\label{C-1234-PT-1234-a}
\lim_{\alpha' \to 0} \la \C(1234), \PT(1234) \ra \bigg|_{z=0} = \frac{1}{2\pi i \alpha' s} \oint_{|z| = \varepsilon} \frac{dz}{(0-z)(z-1)} = \frac{1}{\alpha' s},
\ee
and from around $z=1$ we find the contribution:
\be\label{C-1234-PT-1234-b}
\lim_{\alpha' \to 0} \la \C(1234), \PT(1234) \ra \bigg|_{z=1} = -\frac{1}{2\pi i \alpha' t} \oint_{|z-1| = \varepsilon} \frac{dz}{(0-z)(z-1)} = \frac{1}{\alpha' t}.
\ee
Hence we find the answer which is a sum over two Feynman diagrams in the $s$ and $t$ channels. The two contributions worked out to give the same sign. In general, all the vertices contributing to \eqref{field-theory-limit} will give the same sign. For another choice of the twisted cocycle, $\PT(2134)$, we have:
\be
\lim_{\alpha' \to 0} \la \C(1234), \PT(2134) \ra = \frac{1}{2\pi i \alpha' s} \oint_{|z| = \varepsilon} \frac{dz}{(z-0)(0-1)} = -\frac{1}{\alpha' s}.
\ee
Notice that contribution from the vertex $z=1$ vanishes, since $\PT(2134)$ does not have a pole at $z=1$. Similarly, for $\PT(1324)$ we obtain:
\be
\lim_{\alpha' \to 0} \la \C(1234), \PT(1324) \ra = -\frac{1}{2\pi i \alpha' t} \oint_{|z-1| = \varepsilon} \frac{dz}{(0-1)(1-z)} = -\frac{1}{\alpha' t},
\ee
since there are no poles at $z=0$.

For higher-point cases one needs to consider a blowup of the integrals \eqref{field-theory-limit}. Let us illustrate the procedure with an $n=5$ example for $\la \C(12345), \PT(12345) \ra$. The corresponding associahedron $K_{4}(\I_5)$ has five vertices. The contribution from $(z_2, z_3) = (0,1)$ can be calculated straightforwardly:
\begin{align}
\lim_{\alpha' \to 0} \la \C(12345), \PT(12345) \ra \bigg|_{(12) \cap (34)} &= -\frac{1}{(2\pi i \alpha')^2}\frac{1}{s_{12}\, s_{34}} \oint_{\substack{|z_2| = \varepsilon\\ |z_3 - 1| = \varepsilon}} \frac{dz_2 \w dz_3}{(0-z_2)(z_2 - z_3)(z_3 - 1)}\tr
&= -\frac{1}{\alpha'^2\, s_{12}\, s_{34}}.
\end{align}
Here we have used the tubular contour given by $\{ |z_2| = \varepsilon \} \w \{ |z_3 - 1| = \varepsilon\}$. Next, near $(z_2, z_3) = (0,0)$ we perform a blowup using the change of variables from $\{z_2, z_3\}$ into $\{y_2, \tau\}$ given by $z_2 = \tau y_2$ and $z_3 = \tau$. Since $dz_2 \w dz_3 = \tau dy_2 \w d\tau$, we have the contribution:
\begin{align}
\lim_{\alpha' \to 0} \la \C(12345), \PT(12345) \ra \bigg|_{(123)} &= \frac{1}{(2\pi i \alpha')^2} \sum_{H \in \{(12), (23)\}} \frac{1}{s_{123} (\pm s_H)} \oint_{\substack{|H| = \varepsilon \\ |\tau| = \varepsilon} } \frac{\tau dy_2 \w d\tau}{(0-\tau y_2)(\tau y_2 - \tau)(\tau - 1)}\tr
&= -\frac{1}{2\pi i \alpha'^2} \sum_{H \in \{(12), (23)\}} \frac{1}{s_{123} (\pm s_H)} \oint_{|H| = \varepsilon} \frac{dy_2}{(0-y_2)(y_2 - 1)}\tr
&= - \frac{1}{\alpha'^2 s_{123}}\left( \frac{1}{s_{12}} + \frac{1}{s_{23}}\right).
\end{align}
In the first line, the powers of $\tau$ add up to create a simple pole $d\tau / \tau$, over which we have integrated. In the second line we have used the result of the four-point computations \eqref{C-1234-PT-1234-a} and \eqref{C-1234-PT-1234-b} with the appropriate signs for $s_H$, $H \in \{(12),(23)\}$. For the remaining two vertices near $(z_2,z_3) = (1,1)$ we use the variables ${\tau, y_3}$ defined through $z_2 = 1 - \tau$ and $z_3 = 1 - \tau y_3$, so that $dz_2 \w dz_3 = \tau d\tau \w dy_3$. A similar calculation reveals:
\begin{align}
\lim_{\alpha' \to 0} \la \C(12345), \PT(12345) \ra \bigg|_{(234)} &= \frac{1}{(2\pi i \alpha')^2} \sum_{H \in \{(23), (34)\}} \frac{1}{s_{234} (\pm s_H)} \oint_{\substack{|\tau| = \varepsilon \\ |H| = \varepsilon} } \frac{\tau d\tau \w dy_3}{(-1+\tau)(-\tau + \tau y_3)(-\tau y_3)}\tr
&= -\frac{1}{2\pi i \alpha'^2} \sum_{H \in \{(23), (34)\}} \frac{1}{s_{234} (\pm s_H)} \oint_{|H| = \varepsilon} \frac{dy_3}{(0-y_3)(y_3 - 1)}\tr
&= - \frac{1}{\alpha'^2 s_{234}}\left( \frac{1}{s_{23}} + \frac{1}{s_{34}}\right),
\end{align}
where once again we have used a residue theorem to integrate over the simple pole $d\tau / \tau$. Summing up all the contributions and using momentum conservation, we have
\be
\lim_{\alpha' \to 0} \la \C(12345), \PT(12345) \ra = -\frac{1}{\alpha'^2} \left( \frac{1}{s_{12}s_{34}} + \frac{1}{s_{23}s_{45}} + \frac{1}{s_{34}s_{51}} + \frac{1}{s_{45}s_{12}} + \frac{1}{s_{51}s_{23}} \right).
\ee
Using the same procedure with different cocycles, it is straightforward to verify other examples, for instance:
\be
\lim_{\alpha' \to 0} \la \C(12345), \PT(13245) \ra = \frac{1}{\alpha'^2 s_{23}} \left( \frac{1}{s_{45}} + \frac{1}{s_{51}}\right),
\ee
\be
\lim_{\alpha' \to 0} \la \C(12345), \PT(12453) \ra = \frac{1}{\alpha'^2 s_{12} s_{45}}.
\ee

In general, in the $\alpha' \to 0$ limit one finds that Z-integrals \eqref{Z-theory-amplitude} collapse to the bi-adjoint scalar partial-amplitudes \cite{Mafra:2016mcc}:
\be
\lim_{\alpha' \to 0} \la \C(\beta), \PT(\gamma) \ra = - (-\alpha')^{3-n}\, m(\beta | \gamma),
\ee
where we have included the normalization factor. The method of computing this result using the generalized Pochhammer contour presented above, despite having a simple geometrical interpretation in terms of the associahedron, is not particularly efficient. In this light, it would be interesting to study systematic ways of evaluating \eqref{field-theory-limit} and its higher-order terms, which could provide a new way of performing the $\alpha'$ expansion.

\end{appendices}

\pagebreak
\bibliographystyle{JHEP}
\bibliography{references}

\end{document}